\numberwithin{equation}{section}
\DeclareMathOperator{\E}{\mathbb{E}}
\DeclareMathOperator{\supp}{supp}
\DeclareMathOperator{\Lap}{Lap}
\DeclareMathOperator{\dens}{dens}
\DeclareMathOperator{\diam}{diam}
\DeclareMathOperator{\TSP}{TSP}
\DeclareMathOperator{\MST}{MST}
\DeclareMathOperator{\range}{range}
\renewcommand{\Pr}[2][]{\mathbb{P}_{#1} \left\{ #2 \rule{0mm}{3mm}\right\}}
\newcommand{\ip}[2]{\langle#1,#2\rangle}
\def \N {\mathbb{N}}
\def \R {\mathbb{R}}
\def \Z {\mathbb{Z}}
\def \AA {\mathcal{A}}
\def \FF {\mathcal{F}}
\def \MM {\mathcal{M}}
\def \NN {\mathcal{N}}
\def \BB {\mathcal{B}}
\def \XX {\mathcal{X}}
\def \a {\alpha}
\def \e {\varepsilon}
\def \d {\delta}
\def \l {\lambda}
\def \tran {\mathsf{T}}
\def \TV {\mathrm{TV}}
\def \Lip {\mathrm{Lip}}
\def \one {{\textbf 1}}
\def \Npack {{N_\textrm{pack}}}
\newtheorem{theorem}{Theorem}[section]
\newtheorem{proposition}[theorem]{Proposition}
\newtheorem{corollary}[theorem]{Corollary}
\newtheorem{lemma}[theorem]{Lemma}
\newtheorem{definition}[theorem]{Definition}
\theoremstyle{remark}
\newtheorem{remark}[theorem]{Remark}
\begin{document}

\title{Private Measures, Random Walks, and Synthetic Data}

\author{March Boedihardjo}
\address{Department of Mathematics, Michigan State University}
\email{boedihar@msu.edu}
\author{Thomas Strohmer}
\address{Department of Mathematics, University of California Davis}
\email{strohmer@math.ucdavis.edu}
\author{Roman Vershynin}
\address{Department of Mathematics, University of California Irvine}
\email{rvershyn@uci.edu}

\begin{abstract}
Differential privacy is a mathematical concept that 
provides an information-theoretic security guarantee.
While differential privacy has emerged as a de facto standard for guaranteeing privacy in data sharing, the known mechanisms to achieve it come with some serious limitations. 
Utility guarantees are usually provided only for a fixed, a priori specified set of queries.
Moreover, there are no utility guarantees for more complex---but very common---machine learning tasks such as clustering or classification. In this paper we overcome some of these limitations. Working with metric privacy, a powerful generalization of differential privacy, we develop a polynomial-time algorithm that creates a private measure from a data set. This private measure allows us to efficiently construct private synthetic data that are accurate for a wide range of  statistical analysis tools. Moreover, we prove an asymptotically sharp min-max result for private measures  and synthetic data in general compact metric spaces, for any fixed privacy budget  bounded away from zero.
A key ingredient in our construction is a new superregular random walk, whose joint distribution of steps is as regular as that of independent random variables, yet which deviates from the origin logarithmically slowly.
\end{abstract}



\maketitle

\section{Introduction} \label{s:intro}

\subsection{Motivation}

The right to privacy is enshrined in the Universal Declaration of Human Rights~\cite{assembly1948universal}. However, as artificial intelligence is  more and more permeating our daily lives, data sharing is increasingly locking horns with data privacy concerns.
Differential privacy (DP), a probabilistic mechanism that provides an information-theoretic privacy guarantee, has emerged as a de facto standard for implementing privacy in data sharing~\cite{dwork2014algorithmic}.  For instance, DP has been adopted by several tech companies~\cite{dwork2019differential} and will also be used in connection with the release of the Census 2020 data~\cite{abowd2018us,abowd2019census}. 

Yet, current embodiments of DP come with some serious limitations~\cite{domingo2021limits,hauer2021differential,wezerek}: 
\begin{enumerate}[(i)]
\item  Utility guarantees are usually provided only for a fixed set of queries. This means that either DP has to be used in an interactive scenario or the queries have to specified in advance. 
\item There are no utility guarantees for more complex---but very common---machine learning tasks such as clustering or classification. 
\item DP can suffer from a poor privacy-utility tradeoff, leading to either insufficient privacy protection or to data sets of rather low utility, thereby making DP of limited use in many applications~\cite{domingo2021limits}.
\end{enumerate}

Another approach to enable privacy in data sharing is based on the concept of synthetic data~\cite{bellovin2019privacy}. The goal of synthetic data is  to create a dataset that maintains the statistical properties of the original data while not exposing sensitive information. The combination of differential privacy with synthetic data has been suggested as a  best-of-both-world solutions~\cite{hardt2012simple,bellovin2019privacy,kearnsroth2020,liu2021leveraging,BSV2021a}.
While combining DP with synthetic data can indeed provide more flexibility and thereby partially address some of the issues in~(i), in and of itself it is not a panacea for the aforementioned problems. 

One possibility to construct  differentially private synthetic datasets that are not tailored to a priori specified queries is to simply add independent Laplacian noise to each data point. However, the amount noise that has to be added to achieve sufficient DP is too large with respect to maintaining satisfactory utility even for basic counting queries~\cite{xiao2010differential}, not to mention more sophisticated machine learning tasks.

This raises the fundamental question whether it is even possible to construct in a numerically efficient manner differentially private synthetic data that come with rigorous utility guarantees for a wide range of (possibly complex) queries, while achieving a favorable privacy-utility tradeoff? In this paper we will answer this question to the affirmative.

\subsection{A private measure}

A main objective of this paper is to construct a {\em private measure} on a given metric space $(T,\rho)$. Namely, we design an algorithm that transforms a probability measure $\mu$ on $T$ into another probability measure $\nu$ on $T$, and such that this transformation is both private and accurate.

For clarity, let us first consider the special case of empirical measures, where our goal can be understood as creating {\em differentially private synthetic data}. Specifically, we are looking for a computationally tractable algorithm that transforms true input data $X =(X_1,\ldots,X_n) \in T^n$ into synthetic output data $Y=(Y_1,\ldots,Y_m) \in T^m$ for some $m$, and 
which is $\e$-differentially private (see Definition~\ref{def: DP}) and such that
the empirical measures 
$$
\mu_X = \frac{1}{n} \sum_{i=1}^n \d_{X_i}
\quad \text{and} \quad 
\mu_Y = \frac{1}{m} \sum_{i=1}^m \d_{Y_i}
$$
are close to each other in the Wasserstein 1-metric (recalled in Section~\ref{s: Wasserstein}): 
\begin{equation}	\label{eq: accuracy goal}
\E W_1 \left( \mu_X,\mu_Y \right) \le \gamma,
\end{equation}
where $\gamma>0$ is as small as possible.
In other words, our goal is to create synthetic data $Y$ from the true data $X$ by adding noise of average magnitude $\gamma$, just not necessarily i.i.d. noise.

The main result of this paper is a computationally effective private algorithm 
whose accuracy $\gamma$ that is expressed in terms of the multiscale geometry of the metric space $(T,\rho)$. A consequence of this result, 
Theorem~\ref{thm: asymp synth data}, states that 
if the metric space has Minkowski dimension $d \ge 1$, then, 
ignoring the dependence on $\e$ and lower-order terms in the exponent, 
we have
\begin{equation}	\label{eq: target accuracy}
\E W_1 \left( \mu_X,\mu_Y \right) \sim n^{-1/d}
\end{equation}
The dependence on $n$ is optimal and quite intuitive. Indeed, if the
true data $X$ consists of $n$ i.i.d. random points chosen uniformly from the unit cube $T=[0,1]^d$, then the average spacing between these points is of the order $n^{-1/d}$.
So our result shows that privacy can be achieved
by a {\em microscopic perturbation}, one whose magnitude is 
roughly the same as the average spacing between the points.

Our more general result, Theorem~\ref{thm: private metric}, holds 
for arbitrary compact metric spaces $(T,\rho)$
and, more importantly, for general input measures (not just empirical ones).
To be able to work in such generality, 
we employ the notion of {\em metric privacy} which 
reduces to differential privacy when we specialize to empirical measures
(Section~\ref{s: metric privacy}).

\subsection{Uniform accuracy over Lipschitz statistics}

 The choice of the Wasserstein 1-metric to quantify accuracy ensures that all Lipschitz statistics are preserved uniformly. 
Indeed, by the Kantorovich-Rubinstein duality theorem, \eqref{eq: accuracy goal} yields
\begin{equation} \label{eq:lip1}
\E \sup_f \Big\vert \frac{1}{n} \sum_{i=1}^n f(X_i) - \frac{1}{m} \sum_{i=1}^m f(Y_i) \Big\vert
\le \gamma
\end{equation}
where the supremum is over all $1$-Lipschitz functions $f:T \to \R$.

Standard private synthetic data generation methods that come with rigorous accuracy guarantees  do so
with respect to a {\em predefined} set of linear queries, such as low-dimensional marginals, see e.g.~\cite{barak2007privacy,thaler2012faster,dwork2015efficient,BSV2021a}. While this may suffice in some cases, there is no assurance that the synthetic data behave in the same way as the original data under more complex, but frequently employed, machine learning techniques. For instance, if we want to apply a clustering method to the synthetic data, we cannot be sure that the results we get are close to those for the true data. This can drastically limit effective and reliable analysis of synthetic data.

In contrast, since the synthetic data constructed via our proposed method satisfy a {\em uniform} bound~\eqref{eq:lip1}, this  provides data analysts with a vastly increased toolbox of machine learning methods for which one can expect outcomes that are similar for the original data 
and the synthetic data.

As concrete examples let us look at two of the most common tasks in machine learning, namely clustering and classification. 
While not every clustering method will satisfy a Lipschitz property, there do exist Lipschitz clustering functions that achieve state-of-the-art results, see e.g.~\cite{kovalev2021lipschitz,yang2020clustering}. Similarly,  there is distinct interest in Lipschitz function based classifiers, since they are more robust and less susceptible to adversarial attacks. This includes conventional classification methods such as support vector machines~\cite{von2004distance} as well as classifiers based on Lipschitz neural networks~\cite{virmaux2018lipschitz,bethune2021many}.  These are just a few examples of complex machine learning tools that can be reliably applied to the synthetic data constructed via our private measure algorithm. Moreover, since our results hold for general compact metric spaces, this paves the way for creating private synthetic data for a wide range of data types. We will present a detailed algorithmic and numerical investigation of the proposed method in a forthcoming paper.

\subsection{A superregular random walk}

The most popular way to achieve privacy is by adding random noise, typically either by adding an appropriate amount of Laplacian noise or Gaussian noise (these methods are aptly referred to as {\em Laplacian mechanism} and {\em Gaussian mechanism}, respectively~\cite{dwork2014algorithmic}). We, too, can try to make a probability measure $\mu$ on $T$ private by discretizing $T$ (replacing it with a finite set of points) and then adding random noise to the weights of the points. Going this route, however, yields suboptimal results. For example, it is not difficult to check that if $T$ is the interval $[0,1]$, the accuracy of the Laplacian mechanism cannot be better than $n^{-1/2}$, which is suboptimal compared to optimal accuracy $n^{-1}$ in \eqref{eq: target accuracy}.  

This loss of accuracy is caused by the accumulation of additive noise. Indeed, adding $n$ independent random variables of unit variance produces noise of the order $n^{1/2}$. This prompts a basic probabilistic question: can we construct $n$ random variables that are ``close'' to being independent, but whose partial sums cancel more perfectly than those of independent random variables? We answer this question affirmatively in Theorem~\ref{thm: bounded random walk}, where we construct random variables $Z_1,\ldots,Z_n$ whose joint distribution is as regular as that of i.i.d.\ Laplacian random variables, yet whose partial sums grow {\em logarithmically} as opposed to $n^{1/2}$:
$$
\max_{1 \le k \le n} \E \big( Z_1+\cdots+Z_k \big)^2 = O(\log^3 n).
$$
One can think of this as a random walk that is locally similar to the one with i.i.d.\ steps, but is globally much more bounded. Our construction is a nontrivial modification of L\'evy's construction of Brownian motion. It may be interesting and useful beyond applications to privacy.

\subsection{Comparison to existing work}

The numerically efficient construction of accurate differentially private synthetic data is highly non-trivial. As case in point, Ullman and Vadhan~\cite{ullman2011pcps} showed (under standard cryptographic assumptions) that in general it is NP-hard to make private synthetic Boolean data which approximately preserve all two-dimensional marginals. 
There exists a substantial body of work for generating privacy-preserving synthetic data, cf.~e.g.~\cite{abowd2001disclosure,burridge2003information,abay2018privacy,dahmen2019synsys,mendelevitch2021fidelity}, but---unlike our work---without providing any rigorous privacy or accuracy guarantees. Those papers on synthetic data that do provide rigorous guarantees are limited to accuracy bounds for a finite set of a priori specified 
queries, see for example~\cite{barak2007privacy,blum2013learning,thaler2012faster,dwork2015efficient,BSV2021a,BSV2021c}, see also the tutorial~\cite{vadhan2017complexity}. As discussed before, this may suffice for specific purposes, but in general severely limits the impact and usefulness of synthetic data. In contrast, the present work provides accuracy guarantees for a wide range of machine learning techniques. Furthermore, our our results hold for general compact metric spaces, as we establish metric privacy instead of just differential privacy.

A special example of the topic investigated in this paper is the publication of differentially private histograms, which is a well studied problem in the privacy literature, see e.g.~\cite{hay2009boosting,qardaji2013understanding,meng2017different,xu2013differentially,xiao2010differential,nelson2019chasing,zhang2016privtree,abowd2019census} and Chapter 4 in~\cite{li2016differential}. 
In the specific context of histograms, the Haar function based approach  to construct a superregular random walk proposed in our paper is related to the wavelet-based method~\cite{xiao2010differential} and to other hierarchical histogram partitioning methods~\cite{hay2009boosting,qardaji2013understanding,zhang2016privtree}. Like our approach, \cite{hay2009boosting,xiao2010differential} obtain consistency of counting queries across the hierarchical levels, owing to the specific way that noise is added.
Also, the accuracy bounds obtained in~\cite{hay2009boosting,xiao2010differential} are similar to ours, as they are also polylogarithmic (although we are able to obtain a smaller exponent).  There are, however, several {\em key differences}. While our approach gives a convenient way to generate accurate and differentially private {\em synthetic data} $Y$ from true data $X$, the methods of the aforementioned papers are not suited to create synthetic data. Instead, these methods release answers to queries. Moreover, accuracy is proven for just {\em a single} given range query and not simultaneously for all queries like we do. This limitation makes it impossible to create accurate synthetic data with the algorithms in~\cite{hay2009boosting,xiao2010differential}. Moreover, unlike the aforementioned papers, our work allows the data to be quite general, since we prove metric privacy and not just differential privacy. Furthermore, our results apply to multi-dimensional data, and are not limited to the one-dimensional setting. 

Perhaps closest to our paper is~\cite{wang2016differentially}, where the authors consider differentially private synthetic data in $[0, 1]^d$ with guarantees for any smooth queries with bounded partial derivatives of order $K$. The case $K=1$  corresponds to 1-Lipschitz functions considered in our paper. In that case~\cite{wang2016differentially} obtains an accuracy 

There exist several papers on the private estimation of  density and other statistical quantities~\cite{kamath2019privately,duchi2018minimax}, and sampling from distributions in a private manner is the topic of~\cite{raskhodnikova2021differentially}. While definitely interesting, that line of work is not concerned with synthetic data, and thus there is little overlap with this work.

\subsection{The architecture of the paper}

The remainder of this paper is organized as follows. We introduce some background material and notation in Section~\ref{s:notation}, such as the concept of {\em metric privacy} which generalizes differential privacy.  In Section~\ref{s: random walk} we construct a {\em superregular random walk} (Theorem~\ref{thm: bounded random walk}). We analyze metric privacy in more detail in Section~\ref{s: MP}, where we also provide a link from the general private measure problem to private synthetic data (Lemma~\ref{lem: PM to SD}). In Section~\ref{s:pm_line}
we use the superregular random walk to construct a {\em private measure on the interval} $[0,1]$ (Theorem~\ref{thm: pm on interval}). In Section~\ref{s: width} we use a link between the Traveling Salesman Problem and minimum spanning trees to devise a {\em folding technique}, which we apply in Section~\ref{s:pm_mp} to ``fold''  the interval into a space-filling curve to construct a {\em private measure on a general metric space} (Theorem~\ref{thm: private metric}). Postprocessing the private measure with quantization and splitting, we then generate {\em private synthetic data} in a general metric space (Corollary~\ref{cor: private synthetic data}). In Section~\ref{s:lower} we turn to {\em lower bounds} for private measures (Theorem~\ref{thm: private measure lower}) and synthetic data (Theorem~\ref{thm: synthetic data lower}) on a general metric space.
We do this by employing a technique of Hardt and Talwar, which we present in a Proposition~\ref{prop: synthetic lower} that identifies general limitations for synthetic data. 
In Section~\ref{s:examples} we illustrate our general results on a specific example of a metric space: the Boolean cube $[0,1]^d$. We construct a private measure (Corollary~\ref{cor: PM cube}) and private synthetic data (Corollary~\ref{cor: private synthetic cube}) on the cube, and show near optimality of these results in Corollary~\ref{cor: PM cube lower} and Corollary~\ref{cor: synth data cube lower}, respectively. Results similar to the ones for the $d$-dimensional cube hold for arbitrary metric space of Minkowski dimension $d$. For any such space, we prove {\em asymptotically sharp min-max results} for private measures (Theorem~\ref{thm: asymp PM}) and synthetic data (Theorem~\ref{thm: asymp synth data}).

\section{Background and Notation}\label{s:notation}

The motivation behind the concept of differential privacy is the desire to  
protect an individual's data, while publishing aggregate information about the database~\cite{dwork2014algorithmic}. Adding or removing  the data of one individual should have a negligible effect on the query outcome, as formalized in the following definition.

\begin{definition}[Differential Privacy~\cite{dwork2014algorithmic}]  \label{def: DP}
  A randomized algorithm ${\mathcal M}$ gives $\e$-differential privacy
 if for any input databases $D$ and $D'$ differing on at most one element, 
 and any measurable subset $S \subseteq \range({\mathcal M})$, we have 
 $$
 \frac{\Pr{{\mathcal M}(D) \in S}}{\Pr{{\mathcal M}(D') \in S}} \le \exp(\e),
 $$
where the probability is with respect to the randomness of  ${\mathcal M}$.
\end{definition}

\subsection{Defining metric privacy} 		\label{s: metric privacy}

While differential privacy is a concept of the discrete world (where datasets can differ in a {\em single element}), it is often desirable to have more freedom in the choice of input data.
The following general notion (which seems to be known under slightly different, and somewhat less general, versions, see e.g.~\cite{ABCP} and the references therein) extends the classical concept of differential privacy.

\begin{definition}[Metric privacy]		\label{def: metric privacy}
  Let $(T,\rho)$ be a compact metric space and $E$ be a measurable space.
  A randomized algorithm $\AA: T \to E$  
  is called $\a$-metrically private 
  if, for any inputs $x,x' \in T$ and any measurable subset $S \subset E$, we have
  \begin{equation}	\label{eq: metric DP}
  \frac{\Pr{\AA(x) \in S}}{\Pr{\AA(x') \in S}} \le \exp \left( \a \, \rho(x,x') \right).
  \end{equation}
\end{definition}

To see how this metric privacy encompasses differential privacy,
consider a product space $T = \XX^n$ 
and equip it with the Hamming distance
\begin{equation}	\label{eq: Hamming}
\rho_H(x,x') = \abs{\{i \in [n]:\; x_i \ne x'_i\}}.
\end{equation}
The $\a$-differentially privacy of an algorithm $\AA : \XX^n \to E$ can be expressed as 
\begin{equation}	\label{eq: classical DP}
\frac{\Pr{\AA(x) \in S}}{\Pr{\AA(x') \in S}} \le \exp \left( \a \right)
\quad \text{whenever } \rho_H(x,x') \le 1.
\end{equation}
Note that \eqref{eq: classical DP} is equivalent to \eqref{eq: metric DP} for $\rho=\rho_H$. Obviously, \eqref{eq: metric DP} implies \eqref{eq: classical DP}. The converse implication can be proved by replacing one coordinate of $x$ by the corresponding coordinate of $x'$ and applying \eqref{eq: classical DP}
$\rho_H(x,x')$ times, then telescoping. Let us summarize: 

\begin{lemma}[MP vs.\ DP]			\label{lem: MP vs DP}
  Let $\XX$ be an arbitrary set. Then an algorithm $\AA: \XX^n \to E$
  is $\alpha$-differentially private if an only if $\AA$ is $\alpha$-metrically private 
  with respect to the Hamming distance \eqref{eq: Hamming} on $\XX^n$.
\end{lemma}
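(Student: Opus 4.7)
The plan is to verify the two implications separately, noting that the forward direction is immediate from the definitions while the reverse direction needs a short telescoping argument.

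For the implication (MP $\Rightarrow$ DP): if $\AA$ satisfies \eqref{eq: metric DP} with $\rho = \rho_H$, then whenever $x,x' \in \XX^n$ differ in at most one coordinate we have $\rho_H(x,x') \le 1$, so the right-hand side of \eqref{eq: metric DP} is at most $\exp(\alpha)$, which is precisely \eqref{eq: classical DP}. This direction requires no work beyond unpacking definitions.

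For the implication (DP $\Rightarrow$ MP), fix $x, x' \in \XX^n$ and let $k = \rho_H(x, x')$. I would construct an explicit interpolating sequence $x = y^{(0)}, y^{(1)}, \ldots, y^{(k)} = x'$ where each $y^{(j)}$ is obtained from $y^{(j-1)}$ by replacing one coordinate where $x$ and $x'$ still disagree with the corresponding coordinate of $x'$. By construction $\rho_H(y^{(j-1)}, y^{(j)}) = 1$ for every $j$, so the DP hypothesis \eqref{eq: classical DP} gives
\[
\Pr{\AA(y^{(j-1)}) \in S} \le \exp(\alpha) \, \Pr{\AA(y^{(j)}) \in S}
\]
for each $j = 1, \ldots, k$ and every measurable $S \subseteq E$. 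Iterating (telescoping) these inequalities along the chain yields
\[
\Pr{\AA(x) \in S} \le \exp(k \alpha) \, \Pr{\AA(x') \in S} = \exp\bigl(\alpha \, \rho_H(x, x')\bigr) \, \Pr{\AA(x') \in S},
\]
which is exactly \eqref{eq: metric DP} for $\rho = \rho_H$.

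There is no real obstacle here: the only small point to be careful about is ensuring that the interpolating path $y^{(0)}, \ldots, y^{(k)}$ lies inside $\XX^n$ (which is automatic since we only substitute coordinates of $x'$, themselves elements of $\XX$) and that the inequality can be applied to arbitrary measurable $S$ rather than just a fixed one (also automatic, since \eqref{eq: classical DP} is assumed to hold for all such $S$). A one-sentence remark that the ratio in \eqref{eq: metric DP} should be interpreted as $+\infty$ when the denominator vanishes, or equivalently that the inequality be stated as $\Pr{\AA(x) \in S} \le \exp(\alpha \rho(x,x')) \Pr{\AA(x') \in S}$, suffices to avoid any division-by-zero issue in the telescoping step.
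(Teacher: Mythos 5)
Your proof matches the paper's argument exactly: the forward implication is immediate from the definitions, and the reverse direction proceeds by replacing one coordinate at a time along an interpolating chain of length $\rho_H(x,x')$ and telescoping the resulting $\exp(\alpha)$ factors. The paper states this same idea in a single sentence preceding the lemma; your write-up just makes the chain and the division-by-zero convention explicit, which is fine.
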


Unlike differential privacy, metric privacy goes beyond product spaces, and thus allows the data to be quite general. In this paper, for example, the input data are probability measures. Moreover, metric privacy does away with the assumption that the data sets $D, D'$ be different in a {\em single element}. This assumption is sometimes too restrictive: general measures, for example, do not break down into natural single elements.

\subsection{Distances between measures}

This paper will use three classical notions of distance between measures.

\subsubsection{Total variation}		\label{s: TV}
The total variation (TV) norm~\cite[Section III.1]{dunford1958partI} of a signed measure $\mu$ on a measurable space $(\Omega,\FF)$ is defined as\footnote{The factor $2$ is chosen for convenience.}
\begin{equation}	\label{eq: TV}
\norm{\mu}_\TV = \frac{1}{2} \sup_{\Omega = \cup_i A_i} \sum_i \abs{\mu(A_i)} 
\end{equation}
where the supremum is over all partitions $\Omega$ into countably many parts $A_i \in \FF$.
If $\Omega$ is countable, we have
\begin{equation}	\label{eq: TV as L1}
\norm{\mu}_\TV = \frac{1}{2} \sum_{\omega \in \Omega} \abs{\mu(\{\omega\})}. 
\end{equation}
The TV distance between two probability measures $\mu$ and $\nu$ is defined 
as the TV norm of the signed measure $\mu-\nu$. Equivalently,
$$
\norm{\mu-\nu}_\TV 
= \sup_{A \in \FF} \abs{\mu(A)-\nu(A)}.
$$

\subsubsection{Wasserstein distance}			\label{s: Wasserstein}
Let $(\Omega,\rho)$ be a bounded metric space. We define the Wasserstein 1-distance (henceforth simply referred to as Wasserstein distance) between probability measures $\mu$ and $\nu$ on $\Omega$  as~\cite{villani2009optimal}
\begin{equation}\label{def:W1}
W_1(\mu,\nu) = \inf_\gamma \int_{\Omega \times \Omega} \rho(x,y) \, d\gamma(x,y)
\end{equation}
where the infimum is over all {\em couplings} $\gamma$ of $\mu$ and $\nu$, or 
probability measures on $\Omega \times \Omega$ whose marginals on the first and second coordinates are $\mu$ and $\nu$, respectively. In other words, $W_1(\nu,\mu)$ minimizes the {\em transportation cost} between the ``piles of earth'' $\mu$ and $\nu$. 

The Kantorovich-Rubinstein duality theorem~\cite{villani2009optimal} gives an equivalent representation:
$$
W_1(\mu,\nu) = \sup_{\norm{h}_\Lip \le 1} \left( \int h\, d\mu - \int h\, d\nu \right)
$$
where the supremum is over all continuous, $1$-Lipschitz functions $h: \Omega \to \R$.

For probability measures $\mu$ and $\nu$ on $\R$,
the Wasserstein distance has the following representation, according to Vallender \cite{Vallender}:
\begin{equation}	\label{eq: Vallender}
W_1(\mu,\nu) = \norm{F_\mu-F_\nu}_{L^1(\R)}.
\end{equation}
Here $F_\mu(x) = \mu \left( (-\infty,x] \right)$ is the cumulative distribution function of $\mu$, 
and similarly for $F_\nu(x)$.

Vallender's identity \eqref{eq: Vallender} can be used to define Wasserstein distance for 
{\em signed measures} on $\R$. Moreover, for signed measures on $[0,1]$, the Wasserstein distance defined this way is always finite, and it defines a pseudometric.

\section{A superregular random walk}		\label{s: random walk}

The classical random walk with independent steps of unit variance is not bounded: 
it deviates from the origin at the expected rate $\sim n^{1/2}$. 
Surprisingly, there exists a random walk whose joint distribution of steps 
is as regular as that of independent Laplacians, yet that deviates from the origin {\em logarithmically} slowly.

\begin{theorem}[A superregular random walk]	\label{thm: bounded random walk}
  For every $n \in \N$, there exists a probability density of the form $f(z) = \frac{1}{\beta} e^{-V(z)}$ on $\R^n$ that satisfies the following two properties.
  \begin{enumerate}[(i)]
    \item (Regularity): the potential $V$ is $1$-Lipschitz in the $\ell^1$ norm, i.e.
     \begin{equation}\label{eq:regularity}
        \abs{V(x)-V(y)} \le \norm{x-y}_1 \quad \text{for all } x,y \in \R^n.
     \end{equation}
    \item (Boundedness): a random vector $Z=(Z_1,\ldots,Z_n)$ distributed according 
      to the density $f$ satisfies
      \begin{equation}\label{eq:boundedness2}
      \E(Z_1+\cdots+Z_k)^2 \le C \log^3 n
      \quad \text{for all } 1 \le k \le n,
      \end{equation}
where $C>0$ is a universal constant.
  \end{enumerate}
\end{theorem}

\subsection{Heuristics}

We will define a superregular random walk by modifying the L\'evy's construction of the Brownian motion.  In this construction, the path of a Brownian motion on $[0,1]$ is defined as a random Gaussian series with respect to the Faber-Schauder basis of the space of continuous functions, see~\cite[Section IX.1]{bhattacharya2007basic}. We will replace Gaussian weights by Laplacian weights with smaller variances, and truncate the series.

To that end, recall the definition of the {\em Faber-Schauder system} of ``hat functions'' 
$\phi_1,\phi_2,\ldots$ on the interval $[0,1]$. First, we set
$$
\phi_1(t) = t.
$$
Next, for each level $\ell \in \N$ and each $k \in \{1,\ldots,2^{\ell-1}\}$, 
we define $\phi_{2^{\ell-1}+k}(t)$ as the function on $[0,1]$ that takes value $0$ outside the interval 
\begin{equation}	\label{eq: Schauder support}
(a_{\ell,k}, b_{\ell,k}) \coloneqq \left( \frac{k-1}{2^{\ell-1}}, \, \frac{k}{2^{\ell-1}} \right),
\end{equation}
takes value $1$ at the midpoint $c_k \coloneqq (2k-1)/2^\ell$ of the interval, and interpolates linearly in between. The Faber-Schauder system forms a Shcauder basis in the Banach space $C_0[0,1]$ of continuous functions that take zero value at the origin. 

The Faber-Schauder system is conveniently organized by levels $\ell=0,1,2,\ldots$ We place a single function $\phi_1(t)=t$ is on level $\ell=0$, and each subsequent level $\ell \ge 1$ contains $2^{\ell-1}$ functions $\phi_j$ supported on disjoint intervals $(a_{\ell,k}, b_{\ell,k})$ of length $1/2^{\ell-1}$.
Throughout this section, $\ell(j)$ will denote the level the function $\phi_j$ belongs to, 
e.g. $\ell(1)=0$, $\ell(2)=1$, $\ell(3)=\ell(4)=2$, $\ell(5)=\ell(6)=\ell(7)=\ell(8)=3$, etc.
See Figure~\ref{fig: schauder} for an illustration of these functions.

\begin{figure}[htp]	
     \begin{subfigure}{.22\textwidth}
        \centering
        \includegraphics[width=\linewidth]{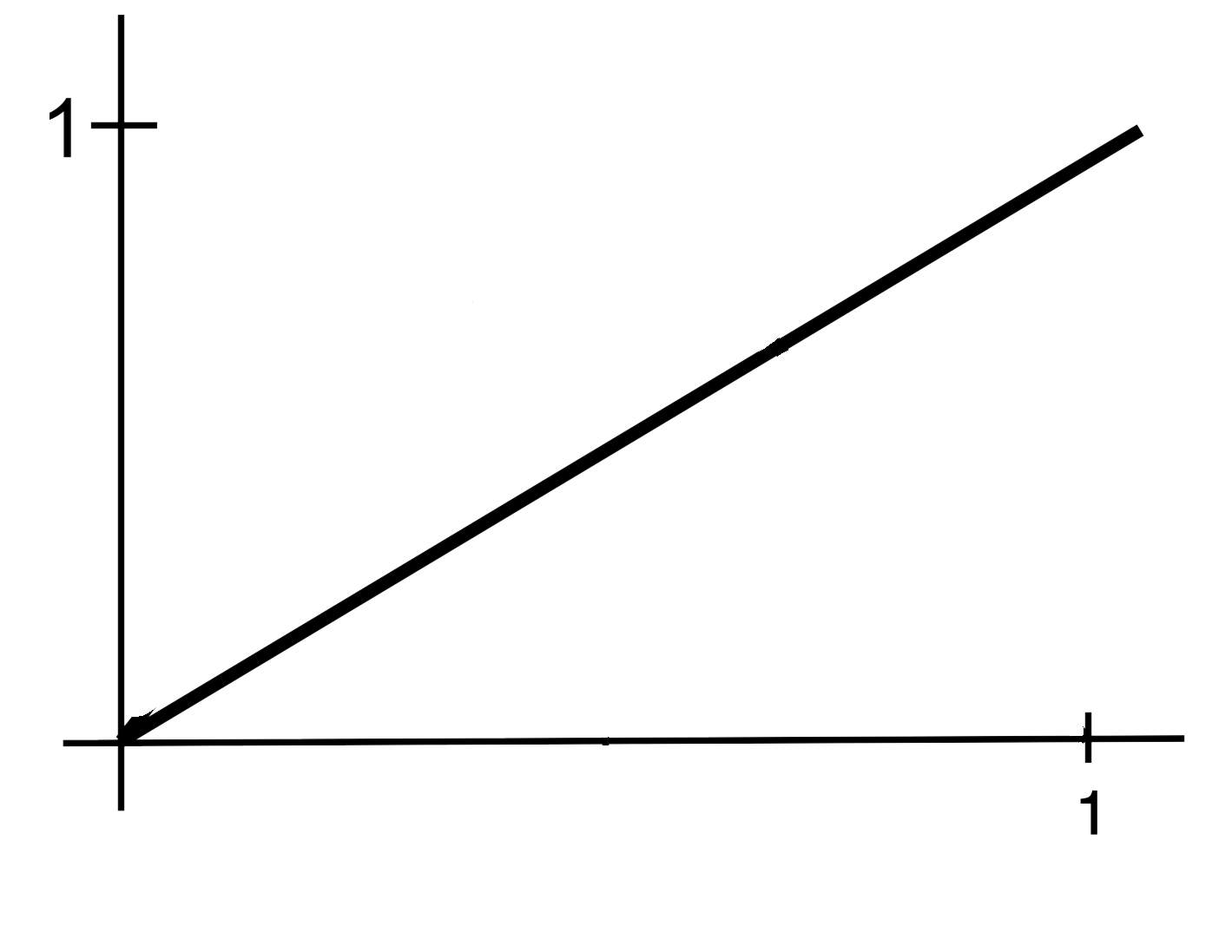}
        Level 0: $\phi_1$
    \end{subfigure}	
   \begin{subfigure}{.22\textwidth}
        \centering
        \includegraphics[width=\linewidth]{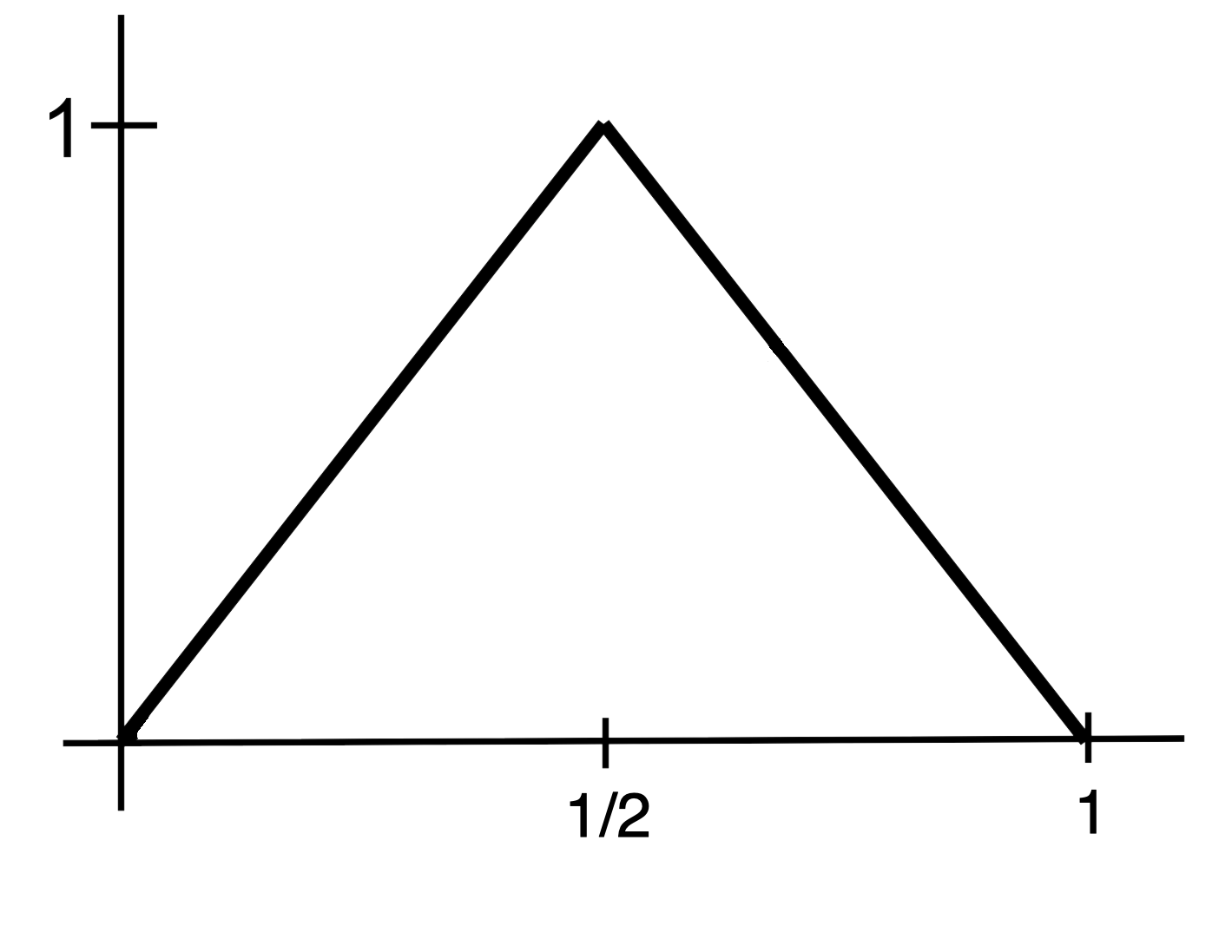}
        Level 1: $\phi_2$
    \end{subfigure}	
    \begin{subfigure}{.22\textwidth}
        \centering
        \includegraphics[width=\linewidth]{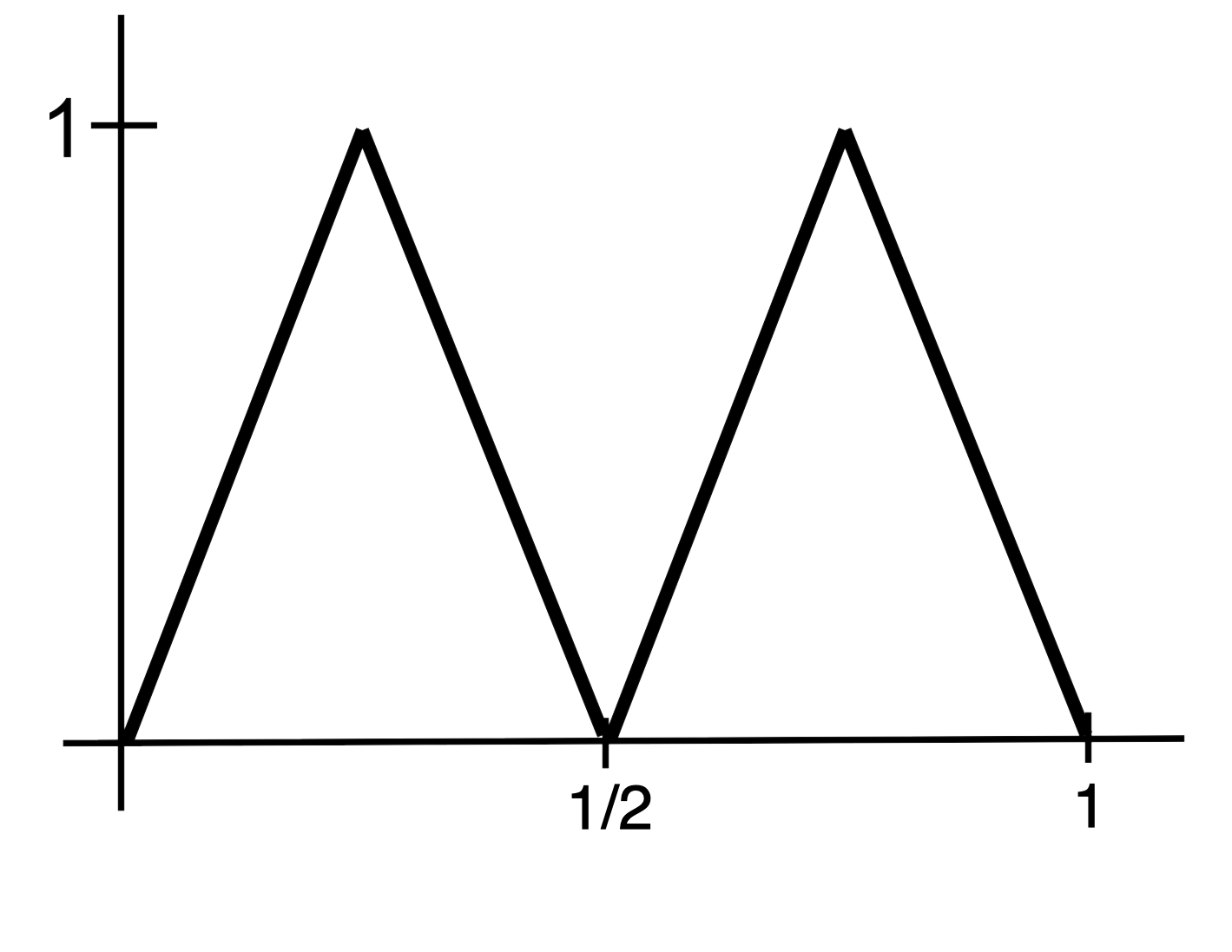}
                Level 2: $\phi_3, \phi_4$
    \end{subfigure}		
      \begin{subfigure}{.22\textwidth}
        \centering
        \includegraphics[width=\linewidth]{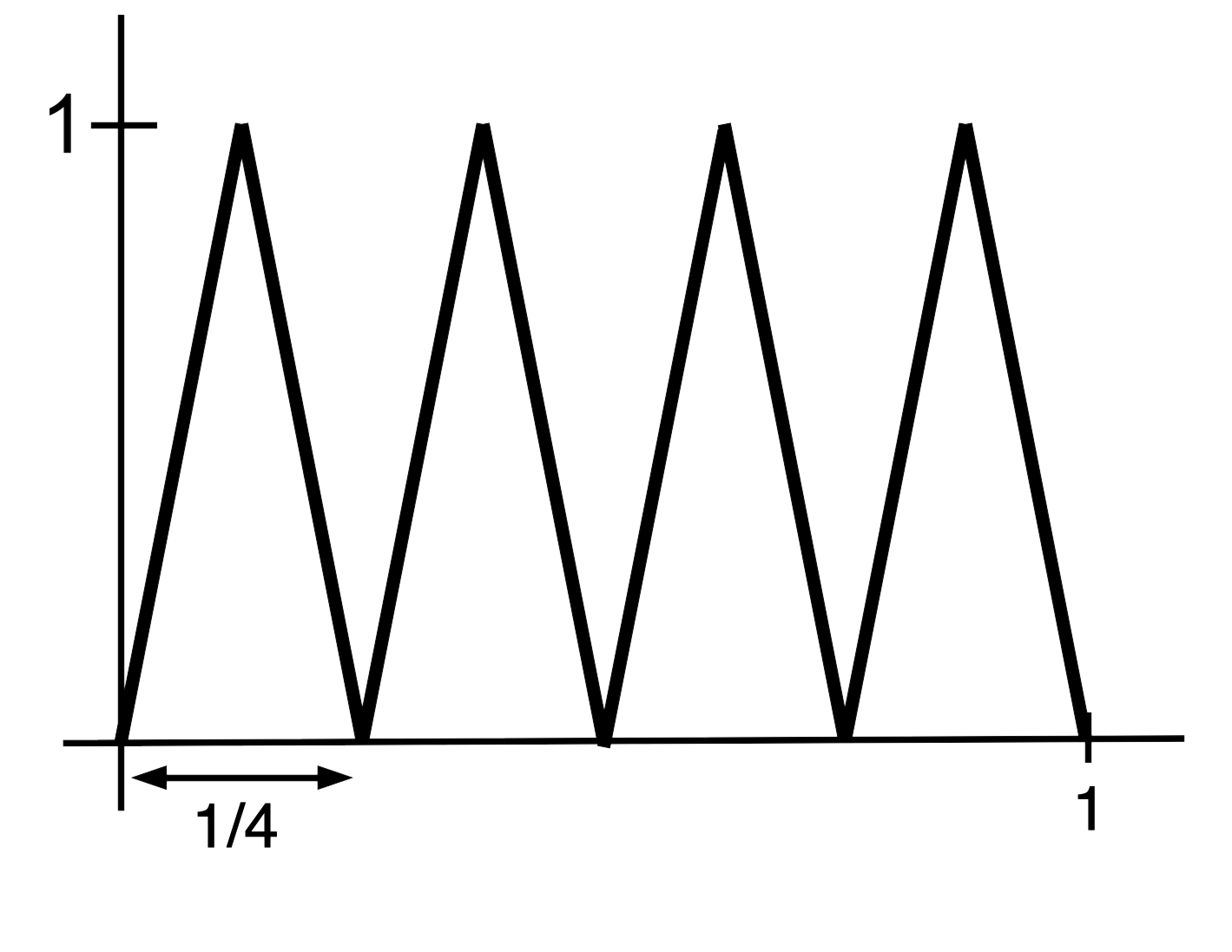}
                Level 3: $\phi_5, \ldots, \phi_8$
    \end{subfigure}		
        \caption{Faber-Schauder system}
  \label{fig: schauder}
\end{figure}

L\'evy's definition of the standard Brownian motion on the interval $[0,1]$ is
\begin{equation}	\label{eq: Levy}
B_n(t) = \sum_{j=1}^\infty G_j \phi_j(t),
\end{equation}
where $G_j$ are independent normal random variables, namely
$$
G_1 \sim N(0,1); \quad G_j \sim N \big( 0, 2^{-\ell(j)-1} \big), \, j=2,3,\ldots.
$$

To construct a superregular random walk, we 
replace the Gaussian weights $G_j$ by Laplacian\footnote{A Laplacian random variable $X \sim \Lap(\l)$ is defined by $\Pr{\abs{X}>t} = \exp(-t/\l)$, $t \ge 0$. 
The Laplacian distribution has density $(1/2\l) \exp(-\abs{x}/\l)$ on $\R$. 
The mean equals zero and the variance equals $2\l^2$.} weights $\Lambda_j \sim \Lap(\log n)$. 
and we truncate the series at $n$. Thus, we set 
\begin{equation}	\label{eq: Levy modified}
W_n(t) = \sum_{j=1}^n \Lambda_j \phi_j(t).
\end{equation}
The superregular random walk could then be defined as 
\begin{equation}	\label{eq: superregular walk definition}
Z_1+\cdots+Z_k = W_n(k/n), 
\quad k=1,\ldots,n.
\end{equation}

\subsection{Formal construction}\label{FCsubsection}

First observe that the regularity property \eqref{eq:regularity} of a probability distribution on $\mathbb{R}^{n}$ passes on to the marginal distributions. 
For example, regularity of a random vector $(X_1,X_2) \in \R^2$ means that 
$$
f_{(X_1,X_2)}(x_1,x_2) \le \exp(-\abs{x_1-y_1}-\abs{x_2-y_2}) \, f_{(X_1,X_2)}(y_1,y_2),
$$
for all $(x_1,y_1),(x_2,y_2)\in\mathbb{R}^{2}$. In particular,
$$
f_{(X_1,X_2)}(x_1,x_2) \le \exp(-\abs{x_2-y_2}) \, f_{(X_1,X_2)}(x_1,y_2).
$$ 
Taking integral with respect to $x_1$ on both sides yields 
$$
f_{X_2}(x_2) \le \exp(-\abs{x_2-y_2}) \, f_{X_2}(y_2),
$$ 
which is equivalent to the regularity of the random vector $X_2 \in \R^1$.
The same argument works in higher dimensions.

Thus, by dropping at most $n/2$ terms if necessary, we can assume without loss of generality 
that 
\begin{equation}	\label{eq: L}
n=2^L \quad \text{for some } L \in \N, 
\end{equation}
Thus, the Faber-Schauder functions $\phi_1,\ldots,\phi_n$ are partitioned in $L+1$ full levels
$0,1,\ldots,L$.
Consider i.i.d.\ random variables 
\begin{equation}	\label{eq: Lambdas}
\Lambda_1,\ldots,\Lambda_n \sim \Lap(2L+1),
\end{equation}
define the random process $W_n(t)$ by equation \eqref{eq: Levy modified}, 
and define the random variables $Z_1,\ldots,Z_k$ 
by equation \eqref{eq: superregular walk definition}.

The construction is complete. 
It remains to check boundedness and regularity.

\subsection{Boundedness}

Fix $k \in [n]$. We have
\begin{align} 
\E(Z_1+\cdots+Z_k)^2
  &= \E \left( \sum_{j=1}^n \Lambda_j \, \phi_j(k/n) \right)^2
  	\quad \text{(by definition)} \nonumber\\
  &= \sum_{j=1}^n \E[\Lambda_j^2] \, \phi_j(k/n)^2	\label{eq: by independence}
  	\quad \text{(by independence and mean zero)} \\
  &= 2(2L+1)^2 \sum_{j=1}^n \phi_j(k/n)^2
  	\quad \text{(by \eqref{eq: Lambdas}).} \nonumber
\end{align}
By construction, the Faber-Schauder functions $\phi_j$ on each given level have disjoint support. Thus, on each level there can be be at most one function that makes the value  
$\phi_j(k/n)^2$ nonzero. By construction, this value is bounded by $1$. 
Adding these values for the $L+1$ levels, we conclude that $\sum_{j=1}^n \phi_j(k/n)^2$ is bounded by $L+1$. Hence
$$
\E(Z_1+\cdots+Z_k)^2 
\le 2(2L+1)^2 (L+1) \lesssim \log^3 n
$$
where we used \eqref{eq: L} in the last step.

\subsection{Regularity}

By definition \eqref{eq: Levy modified} and \eqref{eq: superregular walk definition}, we have 
$$
Z_k = W_n \left( \frac{k}{n} \right) - W_n \left( \frac{k-1}{n} \right)
= \sum_{j=1}^n \Lambda_j \psi_j(k)
$$
where
$$
\psi_j(k) = \phi_j \left( \frac{k}{n} \right) - \phi_j \left( \frac{k-1}{n} \right),
\quad k=1,\ldots,n.
$$
The discrete functions $\psi_j$ can be thought as (discrete) derivatives of the Faber-Schauder functions $\phi_j$, and they are known as (discrete, rescaled) {\em Haar system}, cf.~\cite{semadeni2006schauder,bhattacharya2007basic}. The Haar basis is illustrated in Figure~\ref{fig: haar}.

\begin{figure}[htp]	
     \begin{subfigure}{.22\textwidth}
        \centering
        \includegraphics[width=\linewidth]{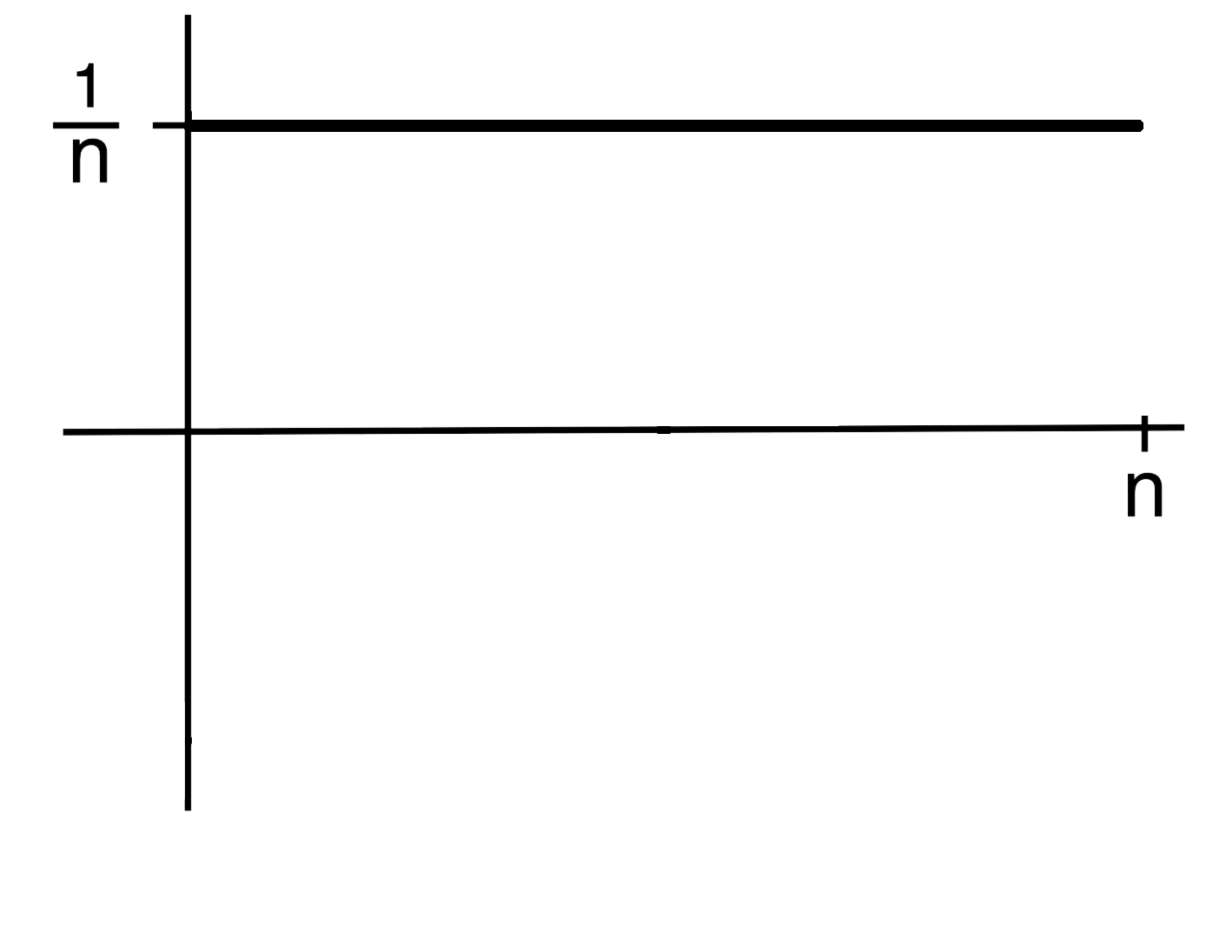}
        Level 0: $\phi_1$
    \end{subfigure}	
   \begin{subfigure}{.22\textwidth}
        \centering
        \includegraphics[width=\linewidth]{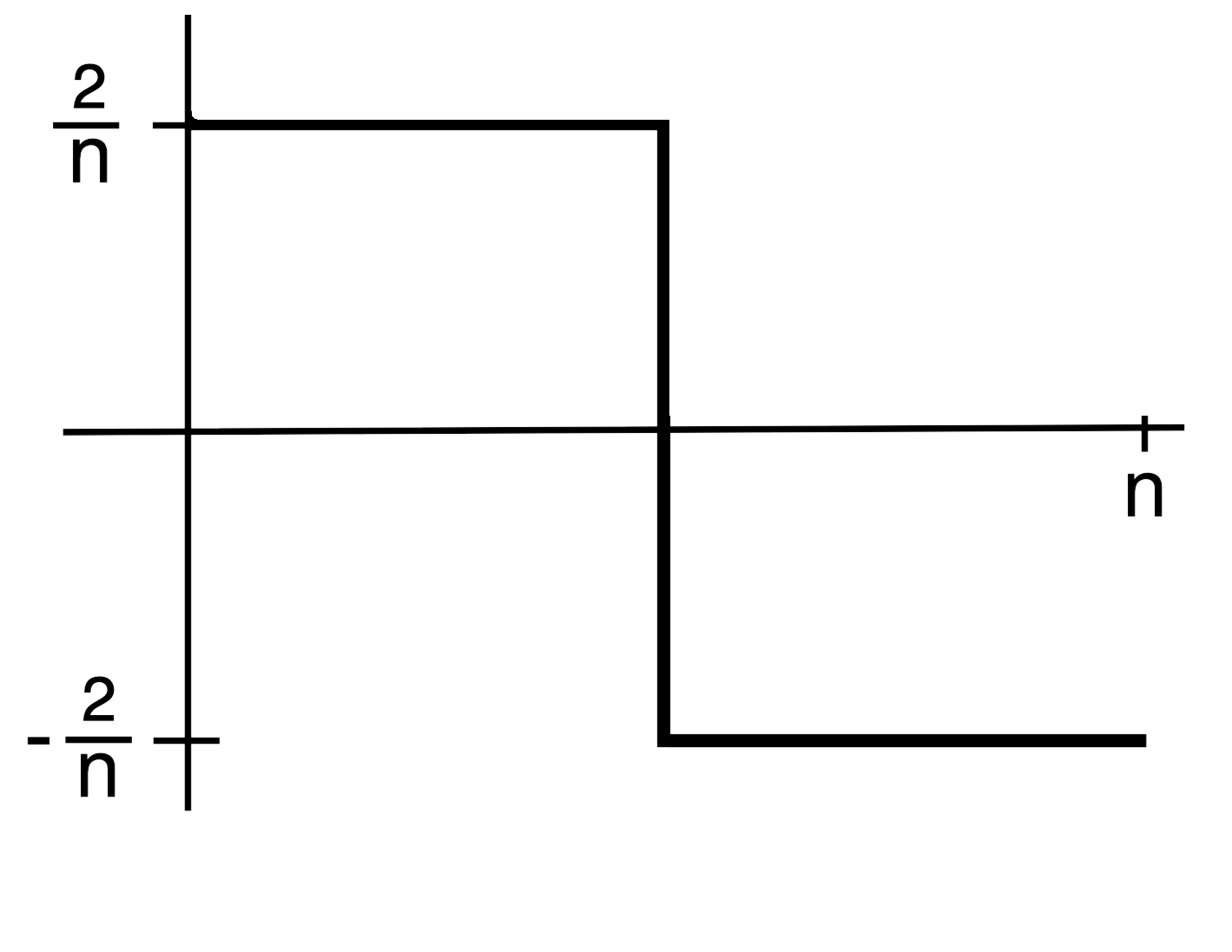}
        Level 1: $\phi_2$
    \end{subfigure}	
    \begin{subfigure}{.22\textwidth}
        \centering
        \includegraphics[width=\linewidth]{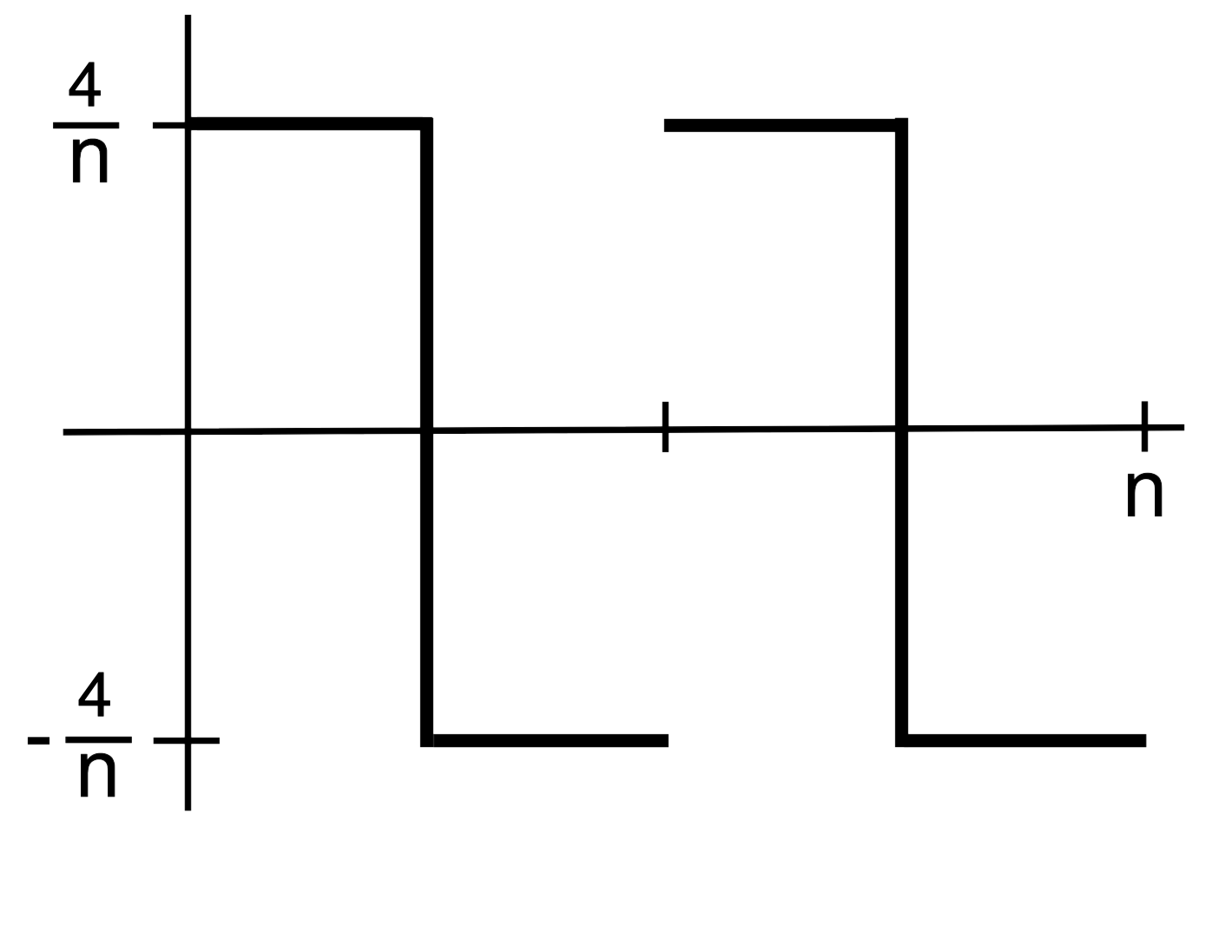}
                Level 2: $\phi_3, \phi_4$
    \end{subfigure}		
      \begin{subfigure}{.22\textwidth}
        \centering
        \includegraphics[width=\linewidth]{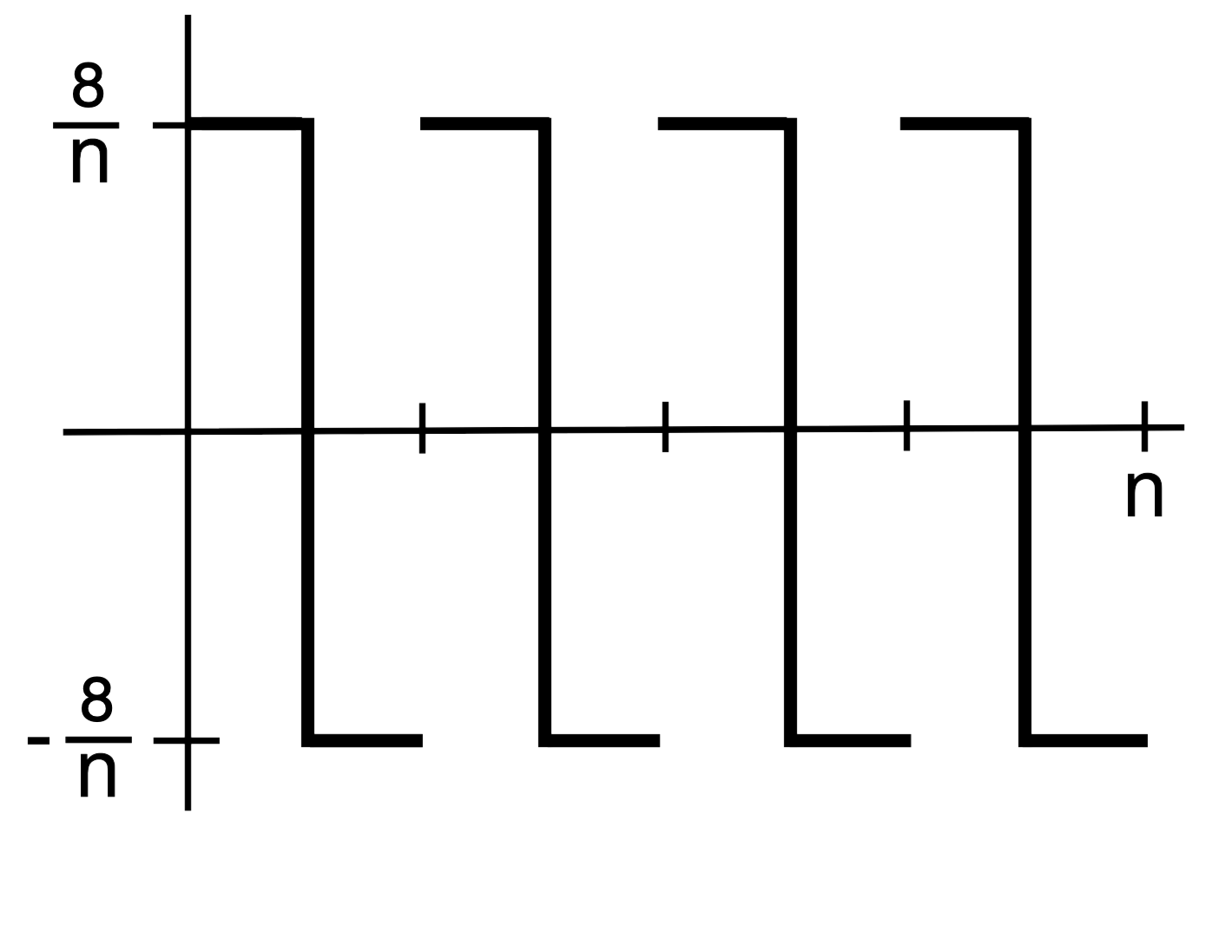}
                Level 3: $\phi_5, \ldots, \phi_8$
    \end{subfigure}		
        \caption{Haar system}
  \label{fig: haar}
\end{figure}

The Haar system $\psi_1,\ldots,\psi_n$ form an orthogonal basis of $\ell^2[n]$, see~\cite{bhattacharya2007basic}. Thus, every function $x \in \ell^2[n]$ admits the orthogonal decomposition
$$
x = \sum_{j=1}^n \l(x)_j \, \psi_j
\quad \text{where } \l(x)_j  = \frac{\ip{\psi_j}{x}}{\norm[0]{\psi_j}_2^2},
$$
where $\langle\,,\rangle$ and $\|\,\|_2$ are the standard inner product and norm on $\ell^{2}[n]$.

The key property of the coefficient vector $\lambda(x)$ is its
approximate sparsity, which we can express via the $\ell^1$ norm. 

\begin{lemma}[Sparsity]				\label{lem: sparsity}
  For any function $x \in \ell^2[n]$, the coefficient vector $\lambda(x)$ satisfies 
  $$
  \norm{\lambda(x)}_1 \le (2L+1) \norm{x}_1.
  $$
\end{lemma}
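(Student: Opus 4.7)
\smallskip

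\noindent\textbf{Proof plan for the Sparsity Lemma.}
The strategy is to bound $\norm{\lambda(x)}_1$ level by level, exploiting the fact that on each level the Haar functions have disjoint supports which partition $[n]$. The key observation is a local bound on a single coefficient: for $j$ on level $\ell = \ell(j) \ge 1$, the Haar function $\psi_j$ has support of size $n/2^{\ell}$ and takes values of magnitude $2^{\ell}/n$ (this is consistent with $\norm{\psi_j}_2^2 = 2^\ell/n$ as quoted before the lemma). Hence
$$
\abs{\lambda(x)_j} = \frac{n}{2^{\ell}} \abs{\ip{\psi_j}{x}} \le \frac{n}{2^{\ell}} \cdot \norm{\psi_j}_\infty \cdot \norm[0]{x \cdot \ind_{\supp \psi_j}}_1 = \norm[0]{x \cdot \ind_{\supp \psi_j}}_1,
$$
the crucial point being that the normalizing factor $n/2^{\ell}$ exactly cancels $\norm{\psi_j}_\infty$.

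Next I would sum this estimate over the Haar functions on each fixed level. For any level $\ell \ge 1$, the supports of the $2^{\ell}$ functions $\psi_j$ with $\ell(j) = \ell$ are disjoint and partition $[n]$, so
$$
\sum_{j : \ell(j) = \ell} \abs{\lambda(x)_j} \le \sum_{j : \ell(j) = \ell} \norm[0]{x \cdot \ind_{\supp \psi_j}}_1 = \norm{x}_1.
$$
The two level-$0$ functions $\psi_1$ and $\psi_2$ are treated separately: $\psi_1 \equiv 1$ gives $\abs{\lambda(x)_1} \le \norm{x}_1$ by a direct calculation from $\lambda(x)_1 = \frac{1}{n}\ip{\psi_1}{x}$, and an analogous direct bound yields $\abs{\lambda(x)_2} \le \norm{x}_1$.

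Finally, summing over the $L$ distinct levels (recall $n = 2^L$, so the levels are $\ell = 0, 1, \dots, L-1$) and combining with the two contributions from level $0$ yields
$$
\norm{\lambda(x)}_1 \;\le\; \underbrace{2\,\norm{x}_1}_{\text{level }0} \;+\; \underbrace{(L-1)\,\norm{x}_1}_{\text{levels }1,\dots,L-1} \;\le\; (L+2)\,\norm{x}_1,
$$
which is the desired inequality. No step presents a serious obstacle; the entire argument is essentially a single cancellation followed by a disjoint-support summation. The only care required is the bookkeeping at level $0$, where the functions do not fit the scaling pattern of the higher levels and must be handled by hand to get the constant $L+2$ (rather than, say, $L+1$).
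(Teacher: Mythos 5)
Your proof is correct and rests on exactly the same two facts as the paper's: the normalizing factor $n/2^{\ell}$ cancels $\norm{\psi_j}_\infty = 2^{\ell}/n$, and the Haar functions on each fixed level have disjoint supports partitioning $[n]$. The packaging differs slightly: the paper first establishes $\norm[0]{\lambda(\one_{\{k\}})}_1 \le L+2$ for a point mass (where at each level at most one coefficient is nonzero and equals $\pm 1$) and then extends to general $x$ by linearity and the triangle inequality, whereas you bound $\abs{\lambda(x)_j}$ directly by the local $\ell^1$ mass $\norm[0]{x\cdot\ind_{\supp\psi_j}}_1$ and sum level by level. Your route is a bit more direct, avoiding the intermediate reduction; both yield the slightly better constant $L+1$, which is of course $\le L+2$ as stated. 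One small notational slip: since $\lambda(x)_j = \frac{n}{2^{\ell(j)}}\ip{\psi_j}{x}$ with $\ell(1)=0$, the prefactor for $j=1$ is $n$, not $1/n$; this does not affect the bound $\abs{\lambda(x)_1}\le\norm{x}_1$ once you use $\norm{\psi_1}_\infty = 1/n$ consistent with $\norm{\psi_1}_2^2 = 1/n$.
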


\begin{proof}
First, let us prove the lemma for the indicator of any single point $k \in [n]$, i.e. for 
$x=\one_{\{k\}}$. Here we have
$$
\l(x)_j = \frac{\psi_j(k)}{\norm[0]{\psi_j}_2^2}.
$$

First, consider $j=1$, the only index on level $\ell=0$. The function $\psi_1(t)=1/n$ 
trivially satisfies $\psi_1(k)=1/n$ and $\norm[0]{\psi_j}_2^2 = 1/n$, 
so we have $\l(x)_1=1$.

Next, consider an index $j$ on some level $\ell \ge 1$. 
By construction, any function $\psi_j$ on that level can takes on three values: 
$0$ and $\pm 2^\ell/n$. Moreover, $\psi_j$ is supported on an interval of length $n/2^{\ell-1}$, see \eqref{eq: Schauder support}. Hence 
$\norm[0]{\psi_j}_2^2 = 2^{\ell+1}/n$, so $\abs{\l(x)_j} \le 2$.

Moreover, the functions $\psi_j$ on any given level have disjoint support. 
So among all such functions on each level, at most one can make the value 
$\psi_j(k)$ and thus $\l(x)_j$ nonzero. As we just showed, for level $0$ this value is $1$, 
and for each subsequent level $\ell \in \{1,\ldots,L\}$, this value is bounded by $2$. 
Summing over all levels, we conclude that 
$\norm{\lambda(x)}_1 \le 2L+1$.

To extend this bound to a general function $x \in \ell^2[n]$, decompose it as 
$x = \sum_{k=1}^n x(k) \one_{\{k\}}$. Then, by linearity, 
$\l(x) = \sum_{k=1}^n x(k) \l(\one_{\{k\}})$, so 
$$
\norm{\l(x)}_1 \le \sum_{k=1}^n \abs{x(k)} \, \norm[1]{\l(\one_{\{k\}})}_1.
$$
The bound $\norm[1]{\l(\one_{\{k\}})}_1 \le 2L+1$ from the first part of 
the argument completes the proof of the lemma. 
\end{proof}

We are ready to prove regularity. Consider the random function  
$Z = \sum_{j=1}^n \Lambda_j \psi_j$ constructed in Subsection \ref{FCsubsection}. 
In our new notation, the coefficient vector of $Z$ is $\l(Z)=(\Lambda_1,\ldots,\Lambda_n)=:\Lambda$.
We have for any $x,y \in \ell^2[n]$:
\begin{equation}	\label{eq: density ratio}
r(x,y) \coloneqq \frac{\dens_X(x)}{\dens_X(y)}
= \frac{\dens_\Lambda(\l(x))}{\dens_\Lambda(\l(y))}.
\end{equation}
To see this, recall that the map $x \mapsto \l(x)$ is a linear bijection on $\ell^2[n]$. 
Hence for any $\e > 0$ and for the unit ball $B$ of $\ell^2[n]$, we have
$$
\frac{\Pr{\textcolor{blue}{Z} \in x+\e B}}{\Pr{\textcolor{blue}{Z} \in y+\e B}}
= \frac{\Pr{\Lambda \in \l(x)+\e \l(B)}}{\Pr{\Lambda \in \l(y)+\e \l(B)}}. 
$$
Taking the limit on both sides as $\e \to 0_+$ and applying the Lebesgue differentiation theorem yield \eqref{eq: density ratio}.

By construction, the coefficients $\Lambda_i$ of the random vector $\Lambda \in \R^n$ are $\Lap(2L+1)$ i.i.d.\ random variables. Hence 
$$
\dens_\Lambda(z) = \frac{1}{\big( 2(2L+1) \big)^n} \exp \Big( -\frac{\norm{z}_1}{2L+1} \Big), 
\quad z \in \R^n.
$$
Thus, 
$$
r(x,y) = \exp \Bigg( \frac{\norm{\l(y)}_1 - \norm{\l(x)}_1}{2L+1} \Bigg).
$$
By the triangle inequality and Lemma~\ref{lem: sparsity}, we have 
$$
\norm{\l(y)}_1 - \norm{\l(x)}_1 
\le \norm{\l(x)-\l(y)}_1
\le (2L+1) \norm{x-y}_1.
$$
Thus
$$
r(x,y) \le \exp(\norm{x-y}_1).
$$

If we express the density in the form $\dens_X(x) = \frac{1}{\beta} e^{-V}$, the bound we proved can be written as 
$$
\exp \left( V(y)-V(x) \right) \le \exp(\norm{x-y}_1), 
$$
or
$V(y)-V(x) \le \norm{x-y}_1$.
Swapping $x$ with $y$ yields $\abs{V(x)-V(y)} \le \norm{x-y}_1$. 
The proof of Theorem~\ref{thm: bounded random walk} is complete.
\qed

\begin{remark}[Boundedness of paths]
  One can easily upgrade the bound \eqref{eq:boundedness2}, which holds in expectation,
  into a concentration bound that holds with high probability. To do so, instead of 
  applying the additivity of variance in \eqref{eq: by independence}, one can use 
  a concentration inequality for sums of independent random variables, 
  e.g. Bernstein's. Moreover, combining the resulting high-probability bound 
  with a union bound, one can obtain boundedness of the entire paths of the 
  random walk, showing that 
  $$
  \E \max_{1 \le k \le n} \big( Z_1+\cdots+Z_k \big)^2 \le C \log^4 n.
  $$
  Since we do not need this result for our application, we leave it to the interested reader.
\end{remark}

\subsection{Beyond the $\ell^1$ norm?}

One may wonder why specifically the $\ell^1$ norm appears in the regularity property of Theorem~\ref{thm: bounded random walk}. As we will see shortly, the regularity with respect to the $\ell^1$ norm is exactly what is needed in our applications to privacy. However, it might be interesting to see if there are natural extensions of Theorem~\ref{thm: bounded random walk} for general $\ell^p$ norms.   The lemma below rules out one such avenue, showing that if a potential $V$ is Lipschitz with respect to the $\ell^p$ norm for some $p>1$, the corresponding random walk deviates at least {\em polynomially} fast (as opposed to logarithmically fast).

\begin{proposition}[No boundedness for $\ell^p$-regular potentials]
\label{prop:no}
  Let $n \in \N$ and consider a probability density of the form 
  $f(z)=\frac{1}{\beta}e^{-V(z)}$ on $\mathbb{R}^{n}$. 
  Assume that the potential $V$ is $1$-Lipschitz in the $\ell^p$-norm. 
  Then a random vector $Z = (Z_1,\ldots,Z_n)$ distributed according to the density $f$
  satisfies
  $$
  \E \abs{Z_1+\cdots+Z_n} 
  \ge \frac{1}{4} n^{1-\frac{1}{p}}.
  $$
\end{proposition}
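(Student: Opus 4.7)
The plan is to push the $\ell^{p}$-regularity hypothesis on the joint density $f$ down to the one-dimensional density $g$ of the statistic $S \coloneqq Z_{1}+\cdots+Z_{n}$, and then to extract a lower bound on $\E\abs{S}$ from log-Lipschitz control of $g$ alone. The reduction proceeds as follows: parametrize $z \in \R^{n}$ as $z = (s/n)\mathbf{1} + w$ with $w$ in the hyperplane $\{w \in \R^{n} : \sum_{i} w_{i} = 0\}$, so that $\sum_{i}z_{i} = s$. For any $t \in \R$, the translation $z \mapsto z + (t/n)\mathbf{1}$ shifts the value of $S$ by exactly $t$, while the displacement $(t/n)\mathbf{1}$ has $\ell^{p}$-norm $\abs{t}/n^{1-1/p}$. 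The $\ell^{p}$-Lipschitz hypothesis on $V$ then yields $f(z+(t/n)\mathbf{1}) \le \exp(\abs{t}/n^{1-1/p})\,f(z)$ pointwise, and integrating over each fiber gives
$$
g(s+t) \le \exp\left(\abs{t}/n^{1-1/p}\right) g(s) \qquad \text{for all } s,t \in \R.
$$
Writing $L \coloneqq n^{-(1-1/p)}$, the density $g$ is therefore log-Lipschitz on $\R$ with constant $L$.

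From here the argument is standard. Set $M \coloneqq \sup_{s} g(s)$ and pick $s^{*}$ with $g(s^{*}) \ge M-\eta$ for arbitrary $\eta > 0$. The log-Lipschitz bound gives $g(s) \ge (M-\eta)\exp(-L\abs{s-s^{*}})$, so $1 = \int g \ge 2(M-\eta)/L$; letting $\eta \to 0$ yields $M \le L/2$. Consequently, for every $r>0$, $\Pr{\abs{S}\le r} \le 2rM \le rL$; taking $r = 1/(2L) = n^{1-1/p}/2$ forces $\Pr{\abs{S}>r} \ge 1/2$ and therefore $\E\abs{S} \ge r\,\Pr{\abs{S}>r} \ge n^{1-1/p}/4$, which is the claimed bound.

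The main conceptual step is the first: converting the multivariate $\ell^{p}$-regularity hypothesis into a one-dimensional log-Lipschitz statement by testing along the direction $\mathbf{1}$. This direction is dual-optimal, since the exponent $n^{1-1/p}$ that appears is exactly $\norm{\mathbf{1}}_{p^{*}}$ (with $1/p + 1/p^{*} = 1$), matching the Lipschitz constant of the linear functional $\ip{\cdot}{\mathbf{1}}$ in $\ell^{p}$. This also explains why the disjoint-translates argument used for $p=1$ in the preceding proposition cannot deliver the sharp bound here: translating along the coordinate vectors $e_{i}$ (which have small $\ell^{p^{*}}$-norm when $p>1$) moves $S$ by only $O(1)$ per unit $\ell^{p}$-displacement, which is far too little to recover the polynomial lower bound $n^{1-1/p}$.
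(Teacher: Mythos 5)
Your argument is correct, and it takes a genuinely different route from the paper. The paper's proof is a short self-referential bootstrap: since $\norm{n^{-1/p}u}_p = 1$ for $u=(1,\ldots,1)^\tran$, the density of $Z+n^{-1/p}u$ differs from that of $Z$ by a factor at most $e$ pointwise, so
$\E\abs{\ip{Z}{u}} \ge e^{-1}\E\abs{\ip{Z+n^{-1/p}u}{u}} \ge e^{-1}\bigl(n^{1-1/p} - \E\abs{\ip{Z}{u}}\bigr)$,
and rearranging yields $\E\abs{\ip{Z}{u}} \ge \frac{1}{e+1}n^{1-1/p} \ge \frac{1}{4}n^{1-1/p}$. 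You instead disintegrate $f$ along the fibers of $z\mapsto\sum_i z_i$ to get the one-dimensional density $g$ of $S$, show that $g$ inherits the log-Lipschitz property with the correct constant $L = n^{-(1-1/p)}$ (because the shift $(t/n)\mathbf{1}$ moves $S$ by $t$ while costing only $\abs{t}L$ in $\ell^p$), deduce $\sup g \le L/2$ by integration, and finish with a Markov-type anticoncentration bound. Both are sound; yours is somewhat longer (it requires the change-of-variables/disintegration step, which you sketch rather than spell out, though it is routine) but also more informative: it yields the uniform anticoncentration estimate $\Pr{\abs{S}\le r}\le rL$ for all $r$, of which the expectation bound is a corollary, and in fact integrating the tail more carefully gives the slightly better constant $\E\abs{S}\ge \frac{1}{2}n^{1-1/p}$. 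Your closing remark correctly identifies why the direction $\mathbf{1}$ is the right one and why the coordinate-translate argument from the $p=1$ proposition would not give the sharp exponent here; this is the same underlying duality observation that motivates the paper's choice of shift.
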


\begin{proof}
We can write $Z_1+\cdots+Z_n = \ip{Z}{u}$ where $u = (1,\ldots,1)^\tran$. 
Since $\norm[0]{n^{-\frac{1}{p}}u}_p=1$ and $V$ is 1-Lipschitz in the $\ell^{p}$ norm, the densities of the random vectors $Z+n^{-\frac{1}{p}}u$ and $Z$ differ by a multiplicative factor of at most $e$ pointwise. Therefore, 
\begin{align*} 
\E \abs{\ip{Z}{u}}
  &\ge e^{-1} \E \abs[1]{\ip{Z+n^{-\frac{1}{p}}u}{u}} \\
  &\ge e^{-1} \Big( \abs[1]{\ip{n^{-\frac{1}{p}}u}{u}} - \E \abs{\ip{Z}{u}} \Big)
  	\quad \text{(by triangle inequality)} \\
  &= e^{-1} \Big( n^{1-\frac{1}{p}} - \E \abs{\ip{Z}{u}} \Big).
\end{align*}
Rearranging the terms, we deduce that 
$$
\E \abs{\ip{Z}{u}} 
\ge \frac{e^{-1}}{1+e^{-1}} n^{1-\frac{1}{p}}
\ge \frac{1}{4} n^{1-\frac{1}{p}},
$$
which completes the proof.
\end{proof}

In light of Theorem~\ref{thm: bounded random walk} and Proposition~\ref{prop:no}
it might be interesting to see if an obstacle remains for the density $f(z)=\frac{1}{\beta}e^{-V(z)^p}$ for $p>1$.

\section{Metric privacy}			\label{s: MP}

\subsection{Private measures}		\label{s: private measures}

The superregular random walk we just constructed will become the main tool 
in solving the following {\em private measure problem}.
We are looking for a private and accurate algorithm $\AA$ that transforms a probability measure $\mu$ on a metric space $(T,\rho)$ into another finitely-supported probability measure $\AA(\mu)$ on $(T,\rho)$. 

We need to specify what we mean by privacy and accuracy here. 
{\em Metric privacy} offers a natural framework for our problem. 
Namely, we consider Definition~\ref{def: metric privacy} for the space $(\MM(T), \TV)$ of all probability measures on $T$ equipped with the TV metric (recalled in Section~\ref{s: TV}).  Thus, for any pair of input measures $\mu$ and $\mu'$ on $T$ that are close in the TV metric, we would like the distributions of the (random) output measures $\AA(\mu)$ and $\AA(\mu')$ to be close:
\begin{equation}	\label{eq: metric privacy measures}
\frac{\Pr{\AA(\mu) \in S}}{\Pr{\AA(\mu') \in S}} \le \exp \left( \a \norm[0]{\mu-\mu'}_\TV \right).
\end{equation}  
The accuracy will be measured via the {\em Wasserstein distance} (recalled in Section~\ref{s: Wasserstein}). We hope to make $W_1(\AA(\mu),\mu)$ as small as possible. 
The reason for choosing $W_1$ as distance is that it allows us to derive accuracy guarantees for general Lipschitz statistics, as outlined below.

\subsection{Synthetic data}
The private measure problem has an immediate application for {\em differentially private synthetic data}.
Let $(T,\rho)$ be a compact metric space. 
We hope to find an algorithm $\BB$ that transforms the true data 
$X =(X_1,\ldots,X_n) \in T^n$ into synthetic data $Y=(Y_1,\ldots,Y_m) \in T^m$ for some $m$
such that the empirical measures 
$$
\mu_X = \frac{1}{n} \sum_{i=1}^n \d_{X_i}
\quad \text{and} \quad 
\mu_Y = \frac{1}{m} \sum_{i=1}^m \d_{Y_i}
$$
are close in the Wasserstein distance, i.e. we hope to make 
$W_1(\mu_X,\mu_Y)$ small.
This would imply that synthetic data accurately preserves all Lipschitz statistics, i.e. 
$$
\frac{1}{n} \sum_{i=1}^n f(X_i) \approx \frac{1}{m} \sum_{i=1}^m f(Y_i)
$$
for any Lipschitz function $f:T \to \R$.

This goal can be immediately achieved if we solve a version of the private measure
problem, described in Section~\ref{s: private measures}, with the additional requirement that $\AA(\mu)$ be an empirical measure. Indeed, define the algorithm $\BB$ by feeding the empirical measure $\mu_X$ into $\AA$, i.e. set $\BB(X) = \AA(\mu_X)$. The accuracy follows, and the differential privacy of $\BB$ can be seen as follows.

For any pair $X,X'$ of input data that differ in a single element, the corresponding empirical measures differ by at most $1/n$ with respect to the TV distance, i.e. 
$$
\norm{\mu_X-\mu_{X'}}_\TV \le \frac{1}{n}.
$$
Then, for any subset $S$ in the output space, we can use \eqref{eq: metric privacy measures} to get 
$$
\frac{\Pr{\BB(X) \in S}}{\Pr{\BB(X') \in S}} 
= \frac{\Pr{\AA(\mu_X) \in S}}{\Pr{\AA(\mu_{X'}) \in S}} 
\le \exp \left( \a \, \norm[0]{\mu-\mu'}_\TV \right) 
\le \exp(\a/n).
$$
Thus, if $\alpha = \e n$, the algorithm $\BB$ is $\e$-differentially private.
Let us record this observation formally.

\begin{lemma}[Private measure yields private synthetic data] \label{lem: PM to SD}
  Let $(T,\rho)$ be a compact metric space. 
  Let $\AA$ be an algorithm that inputs a probability measure on $T$, and outputs something.
  Define the algorithm $\BB$ that takes data $X =(X_1,\ldots,X_n) \in T^n$ as an input, 
  creates the empirical measure $\mu_X$ and feeds it into the algorithm $\AA$, i.e. set 
  $\BB(X) = \AA(\mu_X)$. 
  If $\AA$ is $\a$-metrically private in the TV metric and $\a = \e n$, then $\BB$ is $\e$-differentially private.
\end{lemma}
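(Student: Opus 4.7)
The plan is to reduce $\varepsilon$-differential privacy of $\BB$ to the $\alpha$-metric privacy of $\AA$ via a simple Lipschitz observation: the empirical-measure map $X \mapsto \mu_X$ sends the Hamming distance on $T^n$ into the TV metric on $\MM(T)$ with contraction factor $1/n$. Once this contraction is established, the metric privacy hypothesis on $\AA$ converts directly into the differential privacy conclusion for $\BB$.

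Concretely, I would first fix inputs $X, X' \in T^n$ differing in at most one coordinate, say at index $i$. Then the signed measure
\[
\mu_X - \mu_{X'} = \frac{1}{n}\bigl(\delta_{X_i} - \delta_{X'_i}\bigr)
\]
is supported on at most the two points $X_i, X'_i$, taking values $\pm 1/n$ there (and vanishing identically if $X_i = X'_i$). Applying the discrete representation \eqref{eq: TV as L1} yields
\[
\|\mu_X - \mu_{X'}\|_{\TV} \;\le\; \frac{1}{2}\left(\frac{1}{n} + \frac{1}{n}\right) \;=\; \frac{1}{n}.
\]

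With this bound in hand, I would invoke Definition~\ref{def: metric privacy} for $\AA$ on any measurable $S$ in its output space, and use $\BB(X) = \AA(\mu_X)$ to obtain
\[
\frac{\Pr{\BB(X) \in S}}{\Pr{\BB(X') \in S}} \;=\; \frac{\Pr{\AA(\mu_X) \in S}}{\Pr{\AA(\mu_{X'}) \in S}} \;\le\; \exp\!\bigl(\alpha \,\|\mu_X - \mu_{X'}\|_{\TV}\bigr) \;\le\; \exp(\alpha/n).
\]
Substituting $\alpha = \varepsilon n$ turns the right-hand side into $\exp(\varepsilon)$, which is exactly Definition~\ref{def: DP} of $\varepsilon$-differential privacy for $\BB$. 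There is no real obstacle here: the entire argument is already outlined in the paragraph preceding the lemma, and the only computation required is the elementary TV bound above.
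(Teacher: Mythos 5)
Your proof is correct and follows essentially the same route as the paper: bound $\|\mu_X - \mu_{X'}\|_{\TV}$ by $1/n$ for neighboring datasets, then invoke the $\alpha$-metric privacy of $\AA$ with $\alpha = \varepsilon n$. The only difference is that you spell out the elementary TV computation that the paper states without derivation.
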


Thus, our main focus from now on will be on solving the private measure problem; private synthetic data will follows as a consequence.

\section{A private measure on the line}\label{s:pm_line}

In this section, we construct a private measure on the interval $[0,1]$. 
Later we will extend this construction to general metric spaces. 

\subsection{Discrete input space}

Let us start with a somewhat restricted goal, and then work toward wider generality. 
In this subsection, we will (a) assume that the input measure $\mu$ is always 
supported on some fixed finite subset 
$$
\Omega = \{\omega_1,\ldots,\omega_n\} 
\quad \text{where } 0 \le \omega_1 \le \cdots \le \omega_n \le 1
$$  
and 
(b) allow the output $\AA(\mu)$ to be a signed measure.
We will measure accuracy with the Wasserstein distance.

\subsubsection{Perturbing a measure by a superregular random walk}
Apply the Superregular Random Walk Theorem~\ref{thm: bounded random walk} and rescale the random variables $Z_i$ by setting $U_i = (2/\a) Z_i$. The regularity property of the random vector 
$U = (U_1,\ldots,U_n)$ takes the form 
\begin{equation}	\label{eq: U regularity}
\frac{\dens_U(x)}{\dens_U(y)} \le \exp \left( \frac{\a}{2} \norm{x-y}_1 \right)
\quad \text{for all } x,y \in \R^n,
\end{equation}
and the boundedness property implies that
\begin{equation}        \label{eq: Y boundedness}
\max_{1 \le k \le n}\E  \abs{U_1+\cdots+U_k} \le \frac{C \log^{\frac{3}{2}} n}{\a}.
\end{equation}


Let us make the algorithm $\AA$ perturb the measure $\mu$ on $\Omega$ by the weights $U_i$, 
i.e. we set 
\begin{equation}	\label{eq: perturbation}
\AA(\mu)(\omega_i) = \mu(\{\omega_i\})+U_i, 
\quad i=1,\ldots,n.
\end{equation}

\subsubsection{Privacy}
Any measure $\nu$ on $\Omega$ can be identified with the vector $\bar{\nu} \in \R^n$
by setting $\bar{\nu}_i = \nu(\{\omega_i\})$. 
Then, for any measure $\eta$ on $\Omega$, we have
\begin{equation}	\label{eq: measure to vector}
\dens_{\AA(\mu)}(\eta) 
= \dens_{\bar{\mu}+U}(\bar{\eta})
= \dens_{U}(\bar{\eta}-\bar{\mu}).
\end{equation}
Fix two measures $\mu$ and $\mu'$ on $\Omega$. By above, we have 
\begin{align*} 
\frac{\dens_{\AA(\mu)}(\eta)}{\dens_{\AA(\mu')}(\eta)}
  &= \frac{\dens_{U}(\bar{\eta}-\bar{\mu})}{\dens_{U}(\bar{\eta}-\bar{\mu'})}
	\quad \text{(by \eqref{eq: measure to vector})}\\
  &\le \exp \left(\frac{\a}{2} \norm{\bar{\mu}-\bar{\mu'}}_1 \right)
	\quad \text{(by \eqref{eq: U regularity})}\\
  &= \exp \left( \a \norm{\mu-\mu'}_\TV \right)
	\quad \text{(by \eqref{eq: TV as L1})}.
\end{align*}
This  shows that the algorithm $\AA$ is $\a$-metrically private 
in the TV metric. 

\subsubsection{Accuracy}

By definition definition \eqref{eq: Vallender} of Wasserstein distance for signed measures, we have 
\begin{align*} 
\E W_{1}(\AA(\mu),\mu) 
  &= \int_{0}^{1} \E \abs{\big( \AA(\mu)-\mu \big) \big( [0,x] \big)} \,dx 
  	\quad \text{(using linearity of expectation)} \\
  &= \int_{0}^{1} \E \abs[3]{\sum_{j: \, \omega_j \le x} \big( \AA(\mu)-\mu \big) (\omega_j)} \,dx 
  	 \quad \text{(measures are supported on points $\omega_j$)} \\
  &= \int_{0}^{1} \E \abs[3]{\sum_{j=1}^{k(x)} U_j} \,dx
    	 \quad \text{(by \eqref{eq: perturbation}, where we set $k(x) = \abs{\{j:\, \omega_j \le x\}}$)}\\
  &\le \max_{1 \le k \le n} \E \abs[3]{\sum_{j=1}^k U_j} 
  \le \frac{C \log^{\frac{3}{2}} n}{\a}
  	\quad \text{(by \eqref{eq: Y boundedness})}.
\end{align*}

The following result summarizes what we have proved.

\begin{proposition}[Input in discrete space, output signed measure]     \label{prop: signed}
  Let $\Omega$ be finite subset of $[0,1]$ and let $n = \abs{\Omega}$.
  Let $\a>0$.
  There exists a randomized algorithm $\AA$ that takes a
  probability measure $\mu$ on $\Omega$ as an input
  and returns a signed measure $\nu$ on $\Omega$ as an output,
  and with the following two properties.
  \begin{enumerate}[(i)]
    \item (Privacy): the algorithm $\AA$ is $\a$-metrically private
      in the TV metric.
    \item (Accuracy): for any input measure $\mu$,
    the expected accuracy of the output signed measure $\nu$
    in the Wasserstein distance is
        $$
        \E W_{1}(\nu,\mu) \le \frac{C \log^{\frac{3}{2}} n}{\a}.
        $$
  \end{enumerate}
\end{proposition}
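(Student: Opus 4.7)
The plan is to construct the algorithm $\AA$ as an additive perturbation on the weights of $\Omega$ using the superregular random walk from Theorem~\ref{thm: bounded random walk}, then verify privacy and accuracy separately.

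\textbf{Setup.} First I would invoke Theorem~\ref{thm: bounded random walk} to obtain a random vector $Z=(Z_1,\ldots,Z_n)$ with density $f(z)=\frac{1}{\beta}e^{-V(z)}$ where $V$ is $1$-Lipschitz in $\ell^1$, and rescale it by setting $U_i = (2/\a) Z_i$. The rescaling converts the $1$-Lipschitz regularity of the potential into an $\a/2$-Lipschitz bound in the ratio sense, i.e.\ $\dens_U(x)/\dens_U(y) \le \exp((\a/2)\|x-y\|_1)$, while the boundedness bound \eqref{eq:boundedness2} picks up an extra factor of $1/\a$. Then I would define $\AA(\mu)$ by perturbing the point masses: $\AA(\mu)(\{\omega_i\}) = \mu(\{\omega_i\}) + U_i$ for $i=1,\ldots,n$.

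\textbf{Privacy.} I identify a signed measure $\nu$ on $\Omega$ with its weight vector $\bar\nu\in\R^n$. Under this identification, $\AA(\mu)$ corresponds to $\bar\mu + U$, so $\dens_{\AA(\mu)}(\eta) = \dens_U(\bar\eta - \bar\mu)$. For two input probability measures $\mu,\mu'$ on $\Omega$, the ratio of output densities at any $\eta$ equals $\dens_U(\bar\eta-\bar\mu)/\dens_U(\bar\eta-\bar\mu')$, which by the rescaled regularity is at most $\exp((\a/2)\|\bar\mu-\bar\mu'\|_1)$. Using formula \eqref{eq: TV as L1} for TV on a countable space, $(1/2)\|\bar\mu-\bar\mu'\|_1 = \|\mu-\mu'\|_\TV$, giving the desired $\a$-metric privacy bound \eqref{eq: metric privacy measures} after integrating over measurable sets $S$.

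\textbf{Accuracy.} Using Vallender's identity \eqref{eq: Vallender} (extended to signed measures on $[0,1]$), I would express the Wasserstein distance between $\AA(\mu)$ and $\mu$ in terms of CDF differences. Since both measures are supported on $\Omega$, the CDFs are piecewise constant on intervals $[\omega_k,\omega_{k+1})$ (with $\omega_{n+1}\coloneqq 1$), and on each such interval the CDF difference equals the partial sum $\sum_{i=1}^k U_i$. Thus
\[
W_1(\AA(\mu),\mu) = \sum_{k=1}^n (\omega_{k+1}-\omega_k)\,\abs{U_1+\cdots+U_k}.
\]
Taking expectations and using \eqref{eq:boundedness2} (rescaled), each term's expectation is bounded by $C\a^{-1}\log^{3/2} n$, and since $\sum_k (\omega_{k+1}-\omega_k) \le 1$, the claim follows.

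\textbf{Main obstacle.} The conceptual crux is choosing the right perturbation mechanism: the Laplace mechanism applied independently would give accuracy $\sim n^{-1/2}$, but the superregular random walk yields polylogarithmic accuracy because the partial sums of its increments are bounded logarithmically. The only delicate point in the write-up is verifying that the density ratio identity indeed transfers from the vector $U$ to the measure-valued output $\AA(\mu)$, which is essentially automatic once measures on a finite set are identified with vectors, and that the factor of $1/2$ from the definition of TV in \eqref{eq: TV as L1} exactly matches the factor $2/\alpha$ in the rescaling.
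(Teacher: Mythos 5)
Your proposal is correct and follows essentially the same route as the paper's own proof: the same rescaling $U_i=(2/\alpha)Z_i$ of the superregular walk, the same identification of measures with weight vectors for the density-ratio privacy argument, and the same Vallender-identity computation reducing $W_1$ to a gap-weighted sum of partial sums bounded via \eqref{eq:boundedness2}. Nothing is missing; your closing remark about the factor $1/2$ in \eqref{eq: TV as L1} matching the $2/\alpha$ rescaling is exactly the bookkeeping the paper performs.
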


Let $\nu$ be the signed measure obtained in Proposition \ref{prop: signed}. Let $\widehat{\nu}$ be a probability measure on $\Omega$ that minimizes $W_{1}(\widehat{\nu},\nu)$. In view of \eqref{eq: Vallender}, finding $\widehat{\nu}$ can be cast as convex problem, although the minimizer may not be unique. 
By minimality, $W_{1}(\widehat{\nu},\nu)\leq W_{1}(\mu,\nu)$. So $W_{1}(\widehat{\nu},\mu)\leq W_{1}(\widehat{\nu},\nu)+W_{1}(\nu,\mu)\leq 2W_{1}(\mu,\nu)$.
Thus, we can upgrade the previous result, making the output a measure (as opposed to signed measure):

\begin{proposition}[Private measure on a finite subset of the interval]                 \label{prop: private discrete interval}
  Let $\Omega$ be finite subset of $[0,1]$ and let $n = \abs{\Omega}$.
  Let $\a>0$.
  There exists a randomized algorithm $\BB$ that takes a
  probability measure $\mu$ on $\Omega$ as an input
  and returns a probability measure $\nu$ on $\Omega$ as an output,
  and with the following two properties.
  \begin{enumerate}[(i)]
    \item (Privacy): the algorithm $\BB$ is $\a$-metrically private
      in the TV metric.
    \item (Accuracy): for any input measure $\mu$,
    the expected accuracy of the output measure $\nu$
    in the Wasserstein distance is
        $$
        \E W_{1}(\nu,\mu) \le \frac{C \log^{\frac{3}{2}} n}{\a}.
        $$
  \end{enumerate}
\end{proposition}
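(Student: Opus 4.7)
The plan is to obtain $\BB$ by post-processing the signed-measure algorithm $\AA$ from Proposition~\ref{prop: signed}. Given an input probability measure $\mu$ on $\Omega$, I would first run $\AA(\mu)$ to obtain a (random) signed measure $\nu$ on $\Omega$ satisfying $\alpha$-metric privacy in TV and the accuracy bound $\E W_1(\nu,\mu) \le C \log^{3/2}(n)/\alpha$. I would then define $\BB(\mu)$ to be a probability measure $\widehat{\nu}$ supported on $\Omega$ that minimizes $W_1(\widehat{\nu},\nu)$ over all probability measures on $\Omega$, with ties broken by some fixed measurable rule.

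Computational feasibility of the post-processing step rests on two observations. First, using Vallender's identity \eqref{eq: Vallender} together with the cumulative-sum formula derived immediately above the statement, the objective $W_1(\widehat{\nu},\nu)$ is a sum of absolute values of linear functionals of the $n$ atomic weights of $\widehat{\nu}$, and the constraints (non-negativity and total mass one) are linear, so the problem is a standard linear program after introducing slack variables. Second, the minimizer can be selected measurably as a function of $\nu$, so $\BB$ is a bona fide randomized algorithm.

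For privacy, I would invoke the general principle that metric privacy is preserved under post-processing: if $\AA$ satisfies the density ratio bound \eqref{eq: metric privacy measures} with parameter $\alpha$, then for any (measurable) map $\pi$ taking signed measures to random probability measures, the composition $\pi \circ \AA$ satisfies the same inequality, since for any output event $S$ one has
$$
\Pr{(\pi \circ \AA)(\mu) \in S} = \int \Pr{\pi(\nu) \in S}\, d\mathcal{L}(\AA(\mu))(\nu),
$$
and the pointwise density ratio bound transfers to the integrated probabilities. Thus $\BB$ inherits $\alpha$-metric privacy in TV.

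For accuracy, I would use minimality and the triangle inequality: since $\mu$ itself is a feasible probability measure on $\Omega$, minimality gives $W_1(\widehat{\nu},\nu) \le W_1(\mu,\nu)$, so
$$
W_1(\widehat{\nu},\mu) \le W_1(\widehat{\nu},\nu) + W_1(\nu,\mu) \le 2\, W_1(\nu,\mu).
$$
Taking expectations and invoking the accuracy bound of Proposition~\ref{prop: signed} (absorbing the factor $2$ into $C$) yields the claim. The only subtle point — hardly a deep obstacle, but worth flagging — is the measurable selection of the minimizer and a careful statement of the post-processing principle for metric privacy when the input space of $\AA$ is the (non-discrete) space of probability measures on $\Omega$ equipped with TV; both are routine but deserve explicit verification.
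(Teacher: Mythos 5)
Your proposal matches the paper's proof essentially verbatim: project $\AA(\mu)$ onto the nearest probability measure on $\Omega$ in $W_1$ (a linear program by the cumulative-sum formula), invoke post-processing invariance of metric privacy, and use minimality plus the triangle inequality to get $W_1(\widehat{\nu},\mu) \le 2 W_1(\nu,\mu)$. The paper leaves the measurable-selection and post-processing details implicit, so your flagging of them is sound but not a deviation.
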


\subsection{Extending the input space to the interval} \label{s: discrete to general}

Next, we would like to extend our framework to a continuous setting, 
and allow measures to be supported by the entire interval $[0,1]$. 
We can do this by quantization. 

\subsubsection{Quantization}
Fix $n \in \NN$ and let $\NN = \{\omega_1,\ldots,\omega_n\}$ be a $(1/n)$-net of $[0,1]$.
Consider the proximity partition 
$$
[0,1] = I_1 \cup \cdots \cup I_n
$$
where we put a point $x \in [0,1]$ into $I_i$ if $x$ is closer to $\omega_i$ that to any other points in $\NN$. (We break any ties arbitrarily.) 

We can quantize any signed measure $\nu$ on $[0,1]$ by 
defining
\begin{equation}	\label{eq: subintervals}
\nu_\NN \left( \{\omega_i\} \right) = \nu(I_i), \quad i=1,\ldots,n.
\end{equation}
Obviously, $\nu_\NN$ is a signed measure on $\NN$. Moreover, if $\nu$ is a measure, then so is $\nu_\NN$. And if $\nu$ is a probability measure, then so is $\nu_\NN$. In the latter case, it follows from the construction that 
\begin{equation}	\label{eq: W quantization}
W_1(\nu,\nu_\NN) \le 1/n. 
\end{equation}
(By definition of the net, transporting any point $x$ to the closest point $\omega_i$ covers distance at most $1/n$.)

\begin{lemma}[Quantization is a contraction in TV metric] 	\label{lem: quantization contraction}
  Any signed measure $\nu$ on $[0,1]$ satisfies 
  $$
  \norm{\nu_\NN}_\TV \le \norm{\nu}_\TV.
  $$
\end{lemma}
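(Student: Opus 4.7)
The plan is to unpack both sides using the paper's definitions of the TV norm and of $\nu_\NN$. Since $\nu_\NN$ is supported on the finite set $\NN = \{\omega_1, \ldots, \omega_n\}$, formula \eqref{eq: TV as L1} gives
$$
\norm{\nu_\NN}_\TV = \frac{1}{2} \sum_{i=1}^n \abs{\nu_\NN(\{\omega_i\})} = \frac{1}{2} \sum_{i=1}^n \abs{\nu(I_i)},
$$
where in the last step I substitute the definition \eqref{eq: subintervals}.

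Next I would invoke the general formula \eqref{eq: TV} for the TV norm of $\nu$ on $[0,1]$. The key point is that $\{I_1, \ldots, I_n\}$ is a \emph{particular} finite Borel partition of $[0,1]$ (each $I_i$ is a union of at most two half-open intervals, hence measurable). Therefore $\frac{1}{2}\sum_i \abs{\nu(I_i)}$ is dominated by the supremum over all countable measurable partitions of $[0,1]$, which is exactly $\norm{\nu}_\TV$. Combining the two displays gives the claim.

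An equivalent way to present the same argument, which might read more cleanly, is via the Jordan decomposition $\nu = \nu^+ - \nu^-$. Then for each $i$,
$$
\abs{\nu_\NN(\{\omega_i\})} = \abs{\nu^+(I_i) - \nu^-(I_i)} \le \nu^+(I_i) + \nu^-(I_i) = \abs{\nu}(I_i),
$$
and summing over $i$ while using that the $I_i$ partition $[0,1]$ yields $\sum_i \abs{\nu_\NN(\{\omega_i\})} \le \abs{\nu}([0,1])$. Dividing by $2$ gives the bound.

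There is no real obstacle here; the only thing to be careful about is the measurability of the $I_i$ and the fact that the partition inequality in \eqref{eq: TV} applies to any countable Borel partition, which is exactly our situation. The lemma is essentially a direct consequence of the fact that coarsening a partition cannot increase the corresponding $\sum_i \abs{\nu(A_i)}$.
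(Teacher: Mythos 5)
Your proof is correct and follows the same route as the paper: apply \eqref{eq: TV as L1} to $\nu_\NN$, substitute \eqref{eq: subintervals}, and observe that $\{I_i\}$ is one admissible partition in the supremum defining $\norm{\nu}_\TV$ in \eqref{eq: TV}. The Jordan-decomposition variant you offer is an equivalent phrasing, not a genuinely different argument.
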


\begin{proof}
Using \eqref{eq: TV as L1}, \eqref{eq: subintervals}, and \eqref{eq: TV}, we obtain
$$
\norm{\nu_\NN}_\TV
=\frac{1}{2} \sum_{i=1}^n \abs{ \nu_\NN(\{\omega_i\})}
= \frac{1}{2} \sum_{i=1}^n \abs{\nu(I_i)} 
\le \norm{\nu}_\TV.
$$
The lemma is proved.
\end{proof}

\subsubsection{A private measure on the interval}

\begin{theorem}[Private measure on the interval]		\label{thm: pm on interval}
  Let $\a \ge 2$.
  There exists a randomized algorithm $\AA$ that takes a
  probability measure $\mu$ on $[0,1]$ as an input 
  and returns a finitely-supported probability measure $\nu$ on $[0,1]$
  as an output, and with the following two properties. 
  \begin{enumerate}[(i)]
    \item (Privacy): the algorithm $\AA$ is $\a$-metrically private in the TV metric. 
    \item (Accuracy): for any input measure $\mu$, 
      the expected accuracy of the output measure $\nu$ 
      in the Wasserstein distance is
      $$
      \E W_1 \left( \nu,\mu \right) \le \frac{C \log^{\frac{3}{2}} \a}{\a}.
      $$
  \end{enumerate}
\end{theorem}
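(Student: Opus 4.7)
The plan is to reduce to the discrete setting already handled by Proposition~\ref{prop: private discrete interval}, via the quantization procedure developed in Section~\ref{s: discrete to general}. Fix a parameter $n \in \N$ (to be chosen later), let $\NN = \{\omega_1,\ldots,\omega_n\} \subset [0,1]$ be a $(1/n)$-net, and define the algorithm $\AA$ as the composition: on input $\mu$, first deterministically quantize to form $\mu_\NN$ as in \eqref{eq: subintervals}, then apply the algorithm $\BB$ of Proposition~\ref{prop: private discrete interval} to output $\nu = \BB(\mu_\NN)$. Since $\BB$ returns a probability measure supported on $\NN \subset [0,1]$, the output $\nu$ is finitely supported on $[0,1]$ as required.

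For privacy, observe that the quantization map is linear and deterministic, so $\mu_\NN - \mu'_\NN = (\mu-\mu')_\NN$, and Lemma~\ref{lem: quantization contraction} applied to the signed measure $\mu-\mu'$ gives $\norm{\mu_\NN - \mu'_\NN}_\TV \le \norm{\mu - \mu'}_\TV$. Combining this with the $\a$-metric privacy of $\BB$, for any measurable $S$ one obtains
$$
\frac{\Pr{\AA(\mu) \in S}}{\Pr{\AA(\mu') \in S}}
= \frac{\Pr{\BB(\mu_\NN) \in S}}{\Pr{\BB(\mu'_\NN) \in S}}
\le \exp\bigl(\a \norm{\mu_\NN - \mu'_\NN}_\TV\bigr)
\le \exp\bigl(\a \norm{\mu - \mu'}_\TV\bigr),
$$
so $\AA$ is $\a$-metrically private in the TV metric. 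For accuracy, the triangle inequality for $W_1$, the accuracy bound from Proposition~\ref{prop: private discrete interval} applied to the probability measure $\mu_\NN$, and the quantization estimate \eqref{eq: W quantization} give
$$
\E W_1(\nu,\mu) \;\le\; \E W_1(\nu,\mu_\NN) + W_1(\mu_\NN,\mu) \;\le\; \frac{C \log^{3/2} n}{\a} + \frac{1}{n}.
$$
Choosing $n = \lceil \a \rceil$ (which is at least $2$ by the hypothesis $\a \ge 2$) makes both terms of order $\log^{3/2}(\a)/\a$, yielding the desired bound after adjusting the constant $C$.

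There is no serious obstacle: the theorem is essentially a post-processing assembly of the discrete private-measure construction with the net quantization. The one point that requires a bit of care is that the privacy guarantee must survive the deterministic quantization preprocessing — this is exactly why Lemma~\ref{lem: quantization contraction} is stated for \emph{signed} measures rather than just probability measures, since one needs to apply it to the difference $\mu-\mu'$ in order to chain the privacy inequality.
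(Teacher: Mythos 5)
Your proof is correct and follows essentially the same route as the paper: quantize to a $(1/n)$-net, invoke Proposition~\ref{prop: private discrete interval}, use Lemma~\ref{lem: quantization contraction} for the privacy chain and the triangle inequality with \eqref{eq: W quantization} for accuracy, then set $n \asymp \a$. The choice $n = \lceil \a \rceil$ versus the paper's $\lfloor \a \rfloor$ is immaterial.
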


\begin{proof}
Take a measure $\mu$ on $[0,1]$, preprocess it by quantizing as in the previous subsection, 
and feed the quantized measure $\mu_\NN$ into the algorithm $\BB$ of Proposition~\ref{prop: private discrete interval} for $\Omega=\NN$.

The contraction property (Lemma~\ref{lem: quantization contraction}) ensures that
$$
\norm{\mu_\NN - \mu'_\NN}_\TV \le \norm{\mu - \mu'}_\TV.
$$
This and the privacy property of Proposition~\ref{prop: private discrete interval} 
for measures on $\NN$ guarantee
that quantization does not destroy privacy, i.e. the algorithm $\mu \mapsto \BB(\mu_\NN)$ is
still $\a$-metrically private as claimed.  

As for the accuracy,  Proposition~\ref{prop: private discrete interval} for the measure $\mu_\NN$ gives 
$$
\E W_1 \left( \BB(\mu_\NN), \mu_\NN \right)
\le \frac{C \log^{\frac{3}{2}} n}{\a}.
$$
Moreover, the accuracy of quantization \eqref{eq: W quantization} states that 
$W_1(\mu,\mu_\NN) \le 1/n$. By triangle inequality, we conclude that 
$$
\E W_1 \left( \BB(\mu_\NN), \mu \right)
\le \frac{1}{n} + \frac{C \log^{\frac{3}{2}} n}{\a}.
$$
Taking $n$ to be the largest integer less than or equal to $\alpha$ yields the conclusion of the theorem.
\end{proof}

\section{The Traveling Salesman Problem}		\label{s: width}

In order to extend the construction of the private measure on the interval $[0,1]$ to a general metric space $(T,\rho)$, a natural approach would be to map the interval $[0,1]$ onto some {\em space-filling curve} of $T$. Since a space filling curves usually are infinitely long, we should do this on the discrete level, for some $\d$-net of $T$ rather than $T$ itself. In this section, we will bound length of such discrete space-filling curve in terms of the metric geometry of $T$. In the next section, we will see how this bound determines the accuracy of a private measure in $T$. 

A natural framework for this step is related to Traveling Salesman Problem (TSP), which is a central problem in optimization and computer science, and whose history goes back to at least 1832 \cite{ABCC}. 

Let $G=(V,E)$ be an undirected weighted connected graph. We occasionally refer 
to the weights of the edges as lengths. 
A {\em tour} of $G$ is a connected walk on the edges that visits every vertex at least once, 
and returns to the starting vertex. The TSP is the problem of finding a tour of $G$ with the shortest length. Let us denote this length by $\TSP(G)$.

Although it is NP-hard to compute TSP(G), or even to approximate it
within a factor of $123/122$ \cite{KLS}, an algorithm of Christofides and Serdyukov \cite{Christofides, Serdyukov} from 1976 gives a $3/2$-approximation for TSP, 
and it was shown recently that the factor $3/2$ can be further improved \cite{KKG}.

\subsection{TSP in terms of the minimum spanning tree}
Within a factor of $2$, the traveling salesman problem is equivalent to another 
key problem, namely the problem of finding the {\em minimum spanning tree} (MST) of $G$. 
A spanning tree of $G$ is a subgraph that is a tree and which includes all vertices of $G$. It always exists and can be found in polynomial time~\cite{kruskal1956shortest,prim1957shortest}.  A spanning tree of $G$ with the smallest length is called the minimum spanning tree of $G$; we denote its length by $\MST(G)$. The following equivalence is a folklore.

\begin{lemma}	\label{lem: TSP MST}
  Any undirected weighted connected graph $G$ satisfies 
  $$
  \MST(G) \le \TSP(G) \le 2 \MST(G).
  $$
\end{lemma}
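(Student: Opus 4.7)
The plan is to prove the two inequalities separately by direct constructions that convert one object (a tour or a spanning tree) into the other while controlling the total weight.

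For the lower bound $\MST(G) \le \TSP(G)$, I would take an optimal tour $W$ of $G$, viewed as a closed walk on the edges of $G$, and delete one edge from it. The resulting open walk still visits every vertex, and the subgraph $H \subseteq G$ consisting of the edges used by this walk (as a set, forgetting multiplicity) is connected and spans $V$. Hence $H$ contains a spanning tree $T_0$, and (assuming nonnegative edge weights, which is implicit in the setup since TSP and MST are defined as minima) the total weight of $T_0$ is at most the total weight of $H$, which is at most the weight of the open walk, which is at most $\TSP(G)$. Therefore $\MST(G) \le w(T_0) \le \TSP(G)$.

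For the upper bound $\TSP(G) \le 2\,\MST(G)$, I would take a minimum spanning tree $T$ of $G$ and produce an explicit tour of weight $2\,\MST(G)$ by the standard ``doubled tree'' construction: root $T$ at an arbitrary vertex and perform a depth-first traversal, which yields a closed walk that traverses every edge of $T$ exactly twice (once on the descent, once on the return). This closed walk visits every vertex of $G$ and its edges are edges of $G$ (they are edges of $T \subseteq G$), so it is a valid tour. Its total weight is exactly $2 \sum_{e \in T} w(e) = 2\,\MST(G)$, and since $\TSP(G)$ is the minimum weight over all tours, we obtain $\TSP(G) \le 2\,\MST(G)$.

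There is no serious obstacle; the only care needed is in matching the informal notions (``tour'', ``walk'', ``visits every vertex'') to the formal definition given just above the lemma statement, and in confirming that edge weights should be treated as nonnegative (otherwise the statement can fail). Both halves are short enough to write in a single paragraph of the actual proof.
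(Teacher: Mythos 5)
Your proof is correct and follows essentially the same route as the paper: extract a spanning tree from the optimal tour for the lower bound, and use the doubled depth-first traversal of a minimum spanning tree for the upper bound. The one small difference — deleting an edge from the tour before taking a spanning subtree — is an unnecessary but harmless extra step, and your explicit remark that nonnegativity of weights is needed is a good observation that the paper leaves implicit.
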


\begin{proof}
For the lower bound, it is enough to find a spanning tree of $G$ of length bounded by $\TSP(G)$. Consider the minimal tour of $G$ of length $\TSP(G)$ as a subgraph of $G$. Let $T$ be a spanning tree of the tour. Since the tour contains all vertices of $G$, so does $T$, and thus $T$ is a spanning tree of $G$. Since $T$ is obtained by removing some edges of the tour, the length of $T$ is bounded by of the tour, which is $\TSP(G)$. The lower bound is proved.

For the upper bound, note that dropping any edges of $G$ can only increase the value of TSP.
Thus TSP of $G$ is bounded by the TSP of its spanning tree $T$.
Moreover, TSP of any tree $T$ equals twice the sum of lengths of the edges of $T$.
This can be seen by considering the {\em depth-first search} tour of $T$,
which starts at the root and explores as deep as possible along each branch before backtracking, see Figure~\ref{fig: tree-width}. \end{proof}

\begin{figure}[htp]			
  \centering 
  \includegraphics[width=0.5\textwidth]{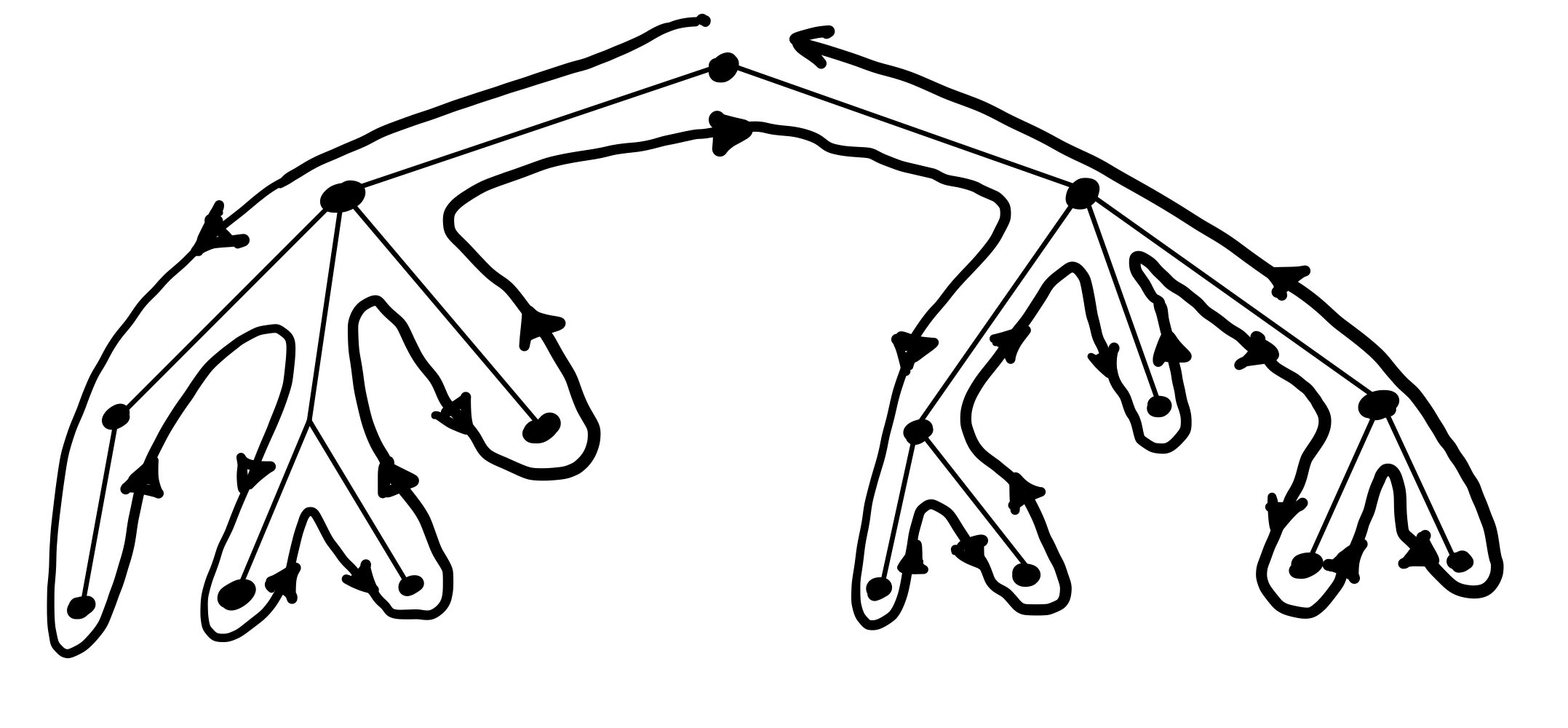} 
  \caption{The depth-first search tour demonstrates that the 
  TSP of a tree equals twice the sum of lengths of its edges.}
  \label{fig: tree-width}
\end{figure}

\subsection{Metric TSP}

Let $(T,\rho)$ be a finite metric space. 
We can consider $T$ as a complete weighted graph, whose weights of edges 
are defined as the distances between the points. The TSP for $(T,\rho)$ is known as {\em metric TSP}. 

Although a tour can visit the same vertex of $T$ multiple times, 
this can be prevented by skipping the vertices previously visited. 
The triangle inequality shows that skipping can only decrease the length of the tour.
Therefore, the shortest tour in a complete graph is always a {\em Hamiltonian cycle}, 
a walk that visits all vertices of $T$ exactly once before returning to the starting vertex. 
Let us record this observation:

\begin{lemma}		\label{lem: Hamiltonian cycles}
  The TSP of a finite metric space $(T,\rho)$ equals the smallest length of a 
  Hamiltonian cycle of $T$.
\end{lemma}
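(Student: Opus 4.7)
The plan is to prove the two inequalities separately. One direction is immediate: every Hamiltonian cycle is itself a tour (it visits every vertex, in fact exactly once, and closes up), so the minimum length of a Hamiltonian cycle is an upper bound on $\TSP(T,\rho)$. The content of the lemma is therefore the reverse inequality: any tour of $T$ can be replaced by a Hamiltonian cycle of no greater length.

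For this reverse direction, I would formalize the ``shortcutting'' argument alluded to in the paragraph preceding the lemma. Fix an optimal tour, represented as a cyclic sequence of vertices $v_{i_1} \to v_{i_2} \to \cdots \to v_{i_M} \to v_{i_1}$, where vertices may repeat but every element of $T$ appears at least once. Walk through this sequence and retain only the first appearance of each vertex, in the order encountered, obtaining a list $u_1, u_2, \ldots, u_n$ in which every vertex of $T$ appears exactly once. Consider the Hamiltonian cycle $H$ that traverses $u_1 \to u_2 \to \cdots \to u_n \to u_1$.

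The key step is to compare the lengths. Between two consecutive ``new'' vertices $u_j$ and $u_{j+1}$ in the original tour, there is a (possibly empty) block of intermediate vertices $w_1, \ldots, w_m$ that have all appeared earlier in the tour. By the triangle inequality applied repeatedly,
\[
\rho(u_j, u_{j+1}) \;\le\; \rho(u_j, w_1) + \rho(w_1, w_2) + \cdots + \rho(w_m, u_{j+1}),
\]
so the edge $u_j u_{j+1}$ of $H$ is bounded by the total length of the corresponding sub-walk of the original tour. Summing these inequalities cyclically over $j = 1, \ldots, n$ (with indices mod $n$) gives that the length of $H$ is at most the length of the original tour, namely $\TSP(T,\rho)$. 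Hence the minimum Hamiltonian cycle length is at most $\TSP(T,\rho)$, and combined with the easy direction this yields equality.

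There is no real obstacle here; the main care needed is purely bookkeeping — making sure the cyclic structure is handled correctly (in particular the last segment returning to $u_1$), and confirming that the triangle inequality is being applied in the right direction. Because $T$ is finite, existence of the first-occurrence subsequence is automatic, and the derivation of each pointwise bound is just iterated triangle inequality, so no additional machinery is required.
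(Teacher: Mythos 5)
Your proof is correct and takes essentially the same approach as the paper: retain first occurrences in the tour and use the triangle inequality to show the resulting Hamiltonian cycle is no longer (the ``shortcutting'' argument). The paper states this more briefly but the underlying idea is identical.
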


\subsection{A geometric bound on TSP}

We would like to compute $\TSP(T)$ in terms of the geometry of the metric space $(T,\rho)$.
Here we will prove an upper bound on $\TSP(T)$ in terms of the covering numbers. 
Recall that the {\em covering number} $N(T,\rho,\e)$ is defined as the smallest cardinality of an $\e$-net of $T$, or equivalently the smallest number of closed balls with centers in $T$ and radii $\e$ whose union covers $T$, see \cite[Section~4.2]{vershyninbook}.

\begin{theorem}[TSP via covering numbers]		\label{thm: width integral}
  For any finite metric space $(T,\rho)$, we have
  $$
  \TSP(T) \le 16 \int_0^\infty \left( N(T,\rho,x)-1 \right) \, dx.
  $$
\end{theorem}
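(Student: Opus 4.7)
The plan is to use Lemma~\ref{lem: TSP MST} to reduce to a minimum spanning tree bound, then construct a connected spanning subgraph of $T$ from a dyadic hierarchy of nets. Since $\TSP(T) \le 2\MST(T)$, it suffices to prove
$$\MST(T) \le 8 \int_0^\infty \left( N(T,\rho,x) - 1 \right) dx,$$
which yields the theorem with the stated constant $16$.

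I would first fix a dyadic hierarchy. Let $k_{\min}$ be the largest integer with $N(T,\rho,2^{-k_{\min}}) = 1$ and $k_{\max}$ the smallest integer with $N(T,\rho,2^{-k_{\max}}) = |T|$; both are finite since $T$ is finite. For each $k \in \{k_{\min},\ldots,k_{\max}\}$, choose a minimum $2^{-k}$-net $\NN_k \subset T$, so $N_k := |\NN_k| = N(T,\rho,2^{-k})$; in particular $\NN_{k_{\min}}$ is a singleton, $\NN_{k_{\max}} = T$, and $N_j \ge 2$ for every $j > k_{\min}$. For each $k \in \{k_{\min},\ldots,k_{\max}-1\}$ and each $x \in \NN_{k+1}$, pick a closest point $\pi_k(x) \in \NN_k$, which satisfies $\rho(x,\pi_k(x)) \le 2^{-k}$ by the net property, and include the edge $\{x,\pi_k(x)\}$ in a graph $G$ on vertex set $T$.

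The graph $G$ is connected: starting from any $y \in T = \NN_{k_{\max}}$, successive applications of the parent maps $\pi_{k_{\max}-1}, \pi_{k_{\max}-2}, \ldots, \pi_{k_{\min}}$ trace a path in $G$ ending at the unique vertex of $\NN_{k_{\min}}$. Hence $G$ contains a spanning tree of $T$, so
$$\MST(T) \le \sum_{k=k_{\min}}^{k_{\max}-1} N_{k+1}\cdot 2^{-k} \;=\; 2 \sum_{j=k_{\min}+1}^{k_{\max}} N_j \cdot 2^{-j}.$$
Since $N_j \ge 2$ on that range we have $N_j \le 2(N_j-1)$, and since $N(T,\rho,\cdot)$ is nonincreasing, comparing the geometric sum to the integral on dyadic intervals $[2^{-j-1},2^{-j}]$ (on which $N(T,\rho,x) \ge N_j$) gives $\sum_j (N_j-1)\,2^{-j} \le 2\int_0^\infty (N(T,\rho,x)-1)\,dx$. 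Chaining these inequalities yields the desired $\MST(T) \le 8\int_0^\infty (N(T,\rho,x)-1)\,dx$.

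The main point requiring attention is the connectedness of $G$: the nets $\NN_k$ are not assumed nested, but because every vertex of the finer net $\NN_{k+1}$ is linked backwards via $\pi_k$ to some vertex of the coarser net $\NN_k$, iterating these parent maps brings every point of $T$ up to the single-element top net. Everything else is straightforward Riemann summation with careful tracking of the factors $2$ accumulated at each conversion.
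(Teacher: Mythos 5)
Your proof is correct and follows essentially the same route as the paper's: reduce to $\MST$ via the doubling lemma, build a tree from a dyadic hierarchy of $2^{-k}$-nets with each finer-net point attached to a closest coarser-net point, then bound the total edge length by the covering-number integral using $N_j \le 2(N_j-1)$ and a dyadic Riemann-sum comparison. The one place you diverge is in handling nets that are not disjoint: the paper ``splits'' repeated points into replicas to obtain a genuine tree in a pseudometric space, whereas you simply build a (possibly non-tree) connected spanning subgraph $G$ on the original vertex set $T$ and observe that iterating the parent maps connects every point to the root; your version is a bit cleaner and avoids the pseudometric detour, but the substance and the constants are the same.
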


\begin{proof}
{\em Step 1: constructing a spanning tree.}
Let us construct a small spanning tree $T_0$ of $T$ and use Lemma~\ref{lem: TSP MST}. 
Let $\e_j = 2^{-j}$, $j \in \Z$, and let $\NN_j$ be $\e_j$-nets of $T$ with cardinalities
$\abs{\NN_j} = N(T,\rho,\e_j)$.
Since $T$ is finite, we must have $\abs{\NN_j} = 1$ for all sufficiently small $j$. Let $j_0$ be the largest integer for which $\abs{\NN_{j_0}} = 1$.

At the root of $T_0$, let us put a single point that forms the net $\NN_{j_0}$.
At the next level, put all the points of the net $\NN_{j_0+1}$, and connect them to the root by edges. 
The weights of these edges, which are defined as the distances of the points to the root, 
are all bounded by $\e_{j_0}$. 
At the next level, put all points of the net $\NN_{j_0+2}$, and connect each such point to the closest point in the previous level $\NN_{j_0+1}$. (Break any ties arbitrarily.) Since the latter set is a $\e_{j_0+1}$-net, the weights of all these edges are bounded by $\e_{j_0+1}$. Repeat these steps until the levels do not grow anymore, i.e. until the level contains all the points in $T$; 
see Figure~\ref{fig: chaining} for illustration.

\begin{figure}[htp]			
  \centering 
  \includegraphics[width=0.4\textwidth]{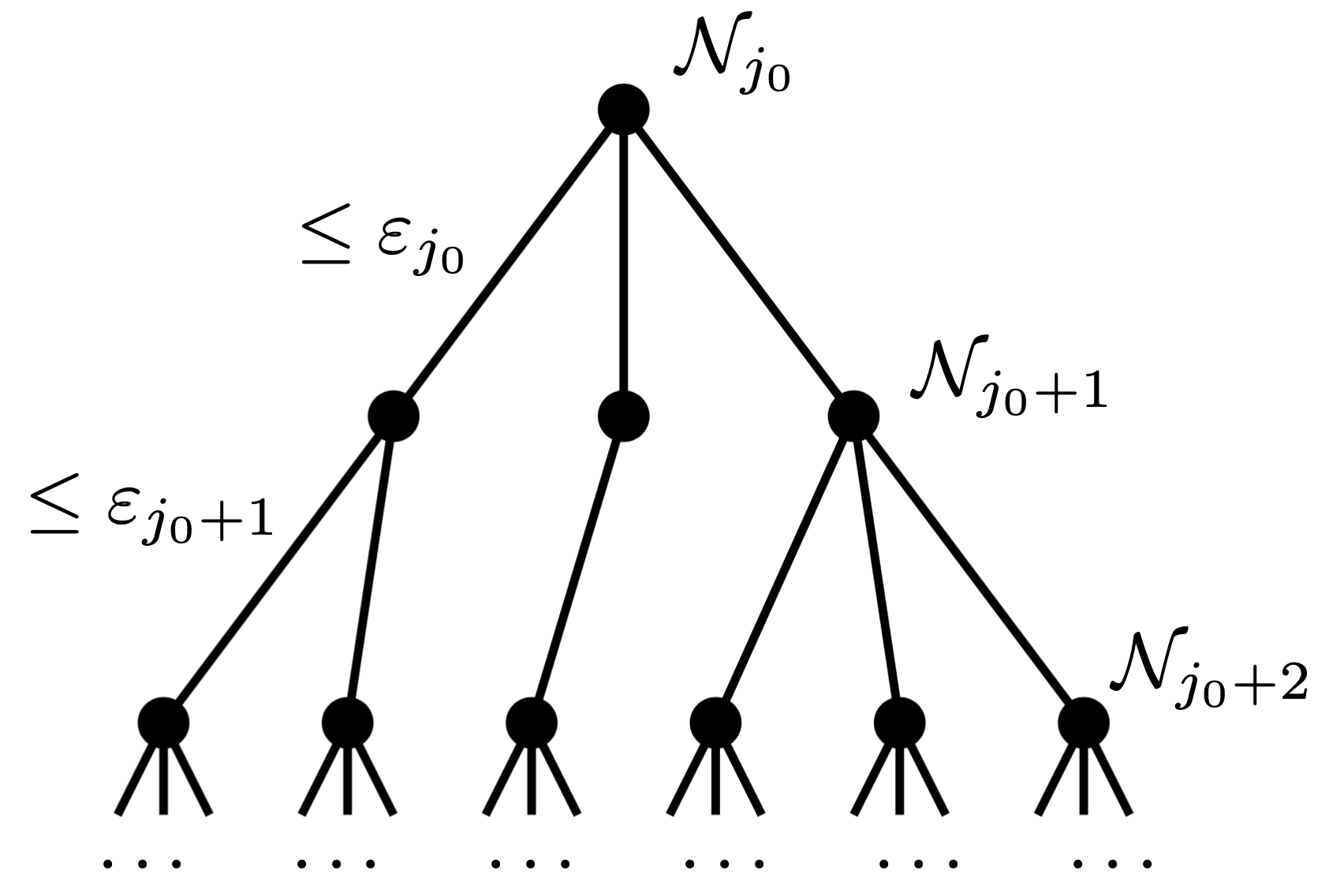} 
  \caption{Chaining: construction of a spanning tree of a metric space.}
  \label{fig: chaining}
\end{figure}

If all the nets $\NN_j$ that make up the levels of the tree $T_0$ are disjoint, then $T_0$ is a spanning tree of $T$. Assume that this is the case for time being.

\smallskip

{\em Step 2: bounding the length of the tree.} 
For each of the levels $j = j_0+1, j_0+2,\ldots$, the tree $T_0$ has $\abs{\NN_j}$ edges connecting the points of level $j$ to the level $j-1$, and each such edge has length (weight) bounded by $\e_{j-1}$. 
So $\MST(T)$ is bounded by the sum of the lengths of the edges of $T_0$, i.e. 
$$
\MST(T) \le \sum_{j=j_0+1}^\infty \e_{j-1} \abs{\NN_j}.
$$

{\em Step 3: bounding the sum by the integral.} 
Our choice $\e_j = 2^{-j}$ yields $\e_{j-1} = 4(\e_j-\e_{j+1})$.
Moreover, our choice of $j_0$ yields $\abs{\NN_j} \ge 2$ for all $j \ge j_0+1$, 
which implies $\abs{\NN_j} \le 2 \left( \abs{\NN_j}-1 \right)$ for such $j$. Therefore
\begin{align}\label{MSTbound}
\MST(T) 
  &\le 8 \sum_{j=j_0+1}^\infty \left( \e_j-\e_{j+1} \right) \left( \abs{\NN_j}-1 \right) \\
  &= 8 \sum_{j=j_0+1}^\infty \int_{\e_{j+1}}^{\e_j} \left( N(T,\rho,\e_j)-1 \right) dx 
  	\quad \text{(since $\abs{\NN_j} = N(T,\rho,\e_j)$)} \nonumber\\
  &\le 8 \int_0^\infty \left( N(T,\rho,x)-1 \right) \, dx.\nonumber
\end{align}
An application of Lemma~\ref{lem: TSP MST} completes the proof.

\smallskip

{\em Step 4: splitting.} 
The argument above assumes that all levels $\NN_j$ of the tree $T_0$ are disjoint.
This assumption can be enforced by splitting the points of $T$. 
If, for example, a point $\omega \in \NN_j$ is also used in $\NN_k$ for some $k < j$, 
add to $T$ another a replica of $\omega$ -- a point $\omega'$ that has zero distance to $\omega$ and the same distances to all other points as $\omega$. Use $\omega$ in $\NN_j$ and $\omega'$ in $\NN_k$.
Preprocessing the metric space $(T,\rho)$ by such splitting yields 
a pseudometric space $(T',\rho)$ in which all levels $\NN_j$ are disjoint,
and whose TSP is the same. 
\end{proof}

\begin{remark}[Integrating up to the diameter]
  Note that $N(T,\rho,x) = 1$ for any $x>\diam(T)$, since any single point makes an $x$-net of $T$ for such $x$. Therefore, the integrand in Theorem~\ref{thm: width integral} vanishes for such $x$, and we have
  \begin{equation}	\label{eq: up to diam}
  \TSP(T) \le 16 \int_0^{\diam(T)} N(T,\rho,x) \, dx.
  \end{equation}
\end{remark}

\subsection{Folding}		\label{s: folding}

It is a simple observation that an interval of length $\TSP(T)$ can be embedded, or ``folded'', into $T$:

\begin{proposition}[Folding]			\label{prop: folding}
  For any finite metric space $(T,\rho)$
  there exists a finite subset $\Omega$ of the interval $[0,\TSP(T)]$
  and a $1$-Lipschitz bijection $F: \Omega \to T$. 
\end{proposition}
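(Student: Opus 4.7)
The plan is to use the Hamiltonian cycle characterization of metric TSP (Lemma~\ref{lem: Hamiltonian cycles}) and ``unfold'' it onto the interval by placing consecutive vertices at distances equal to their cycle-edge lengths.

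First, I would invoke Lemma~\ref{lem: Hamiltonian cycles} to produce a Hamiltonian cycle $t_1 \to t_2 \to \cdots \to t_n \to t_1$ on $T = \{t_1,\ldots,t_n\}$ whose total length equals $\TSP(T)$. Define the partial sums
\[
x_1 = 0, \qquad x_{i+1} = x_i + \rho(t_i, t_{i+1}) \quad (1 \le i \le n-1),
\]
and set $\Omega = \{x_1,\ldots,x_n\}$. Since the last leg $\rho(t_n,t_1) \ge 0$ of the cycle was not added, we have $x_n \le \TSP(T)$, so $\Omega \subset [0,\TSP(T)]$. Define $F:\Omega \to T$ by $F(x_i) = t_i$. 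If any two consecutive points in the cycle coincide (which cannot happen unless $T$ had a repeated point, but to be safe one can assume the points $t_i$ are distinct since $T$ is a genuine metric space with distinct elements), the $x_i$ are strictly increasing, so $F$ is automatically a bijection.

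The final step is to verify the 1-Lipschitz property. For any indices $i < j$, the triangle inequality along the chain $t_i, t_{i+1}, \ldots, t_j$ gives
\[
\rho\bigl(F(x_i), F(x_j)\bigr) = \rho(t_i, t_j) \le \sum_{k=i}^{j-1} \rho(t_k, t_{k+1}) = x_j - x_i = |x_j - x_i|,
\]
which is exactly the 1-Lipschitz condition.

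There is essentially no serious obstacle; the only subtlety is ensuring that $\Omega$ has exactly $n$ distinct elements so that $F$ is genuinely a bijection. This is automatic because each $x_{i+1} - x_i = \rho(t_i,t_{i+1}) > 0$ for distinct points $t_i \ne t_{i+1}$ in a metric space. Thus the construction works as stated and the proposition follows directly.
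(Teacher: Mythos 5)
Your proof is correct and follows essentially the same route as the paper: invoke Lemma~\ref{lem: Hamiltonian cycles} to obtain a shortest Hamiltonian cycle, lay out the vertices on the line using cumulative cycle-edge lengths, and verify the $1$-Lipschitz property by triangle inequality along the chain. The small remark you add about the $x_i$ being strictly increasing (hence distinct, so $F$ is genuinely a bijection) is a point the paper leaves implicit, and it is a worthwhile clarification.
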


Heuristically, the map $F$ ``folds'' the interval $[0,\TSP(T)]$ into
the shortest Hamiltonian path of the metric space $T$, 
see Figure~\ref{fig: folding}. 
We can think of this as a {\em space-filling curve} of $T$.

\begin{figure}[htp]			
  \centering 
  \includegraphics[width=\textwidth]{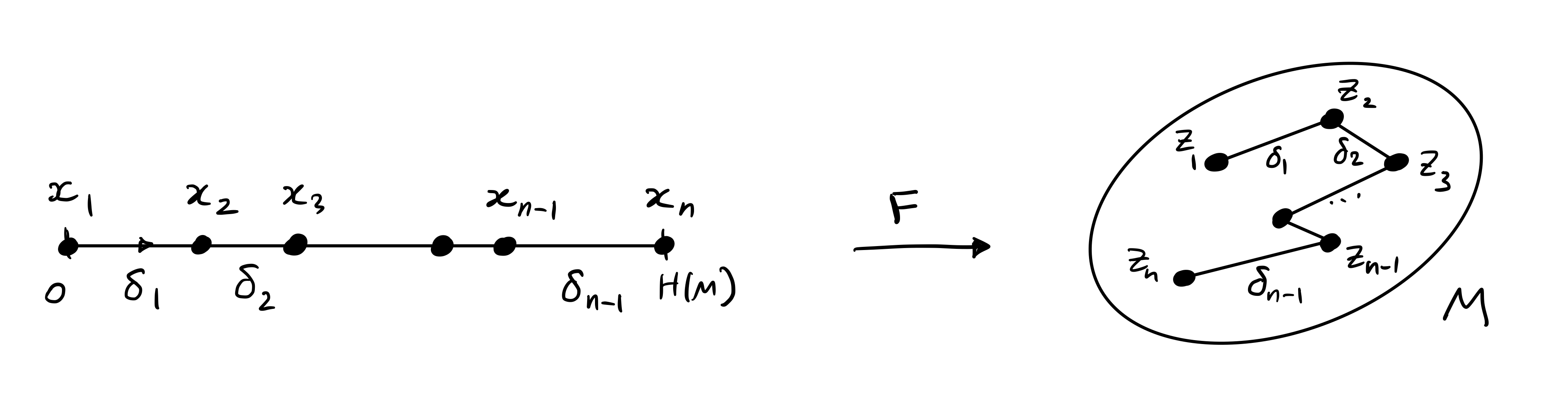} 
  \caption{The map $F$ folds an interval $[0,\TSP(M)]$ into a Hamiltonian path (a ``space-filling curve'') of the metric space $T$.}
  \label{fig: folding}
\end{figure}

\begin{proof}
Let us exploit the heuristic idea of folding.
Fix a Hamiltonian cycle in $T$ of length $\TSP(T)$, whose existence is given by Lemma~\ref{lem: Hamiltonian cycles}. Formally, this means that we can label the elements of the space as $T = \{z_1,\ldots,z_n\}$ in such a way that the lengths 
$$
\d_i = \rho \left( z_{i+1}, z_i \right), \quad i=1,\ldots,n-1,
$$
satisfy 
$\sum_{i=1}^{n-1} \d_i \le \TSP(T)$.
Define $\Omega = \{x_1,\ldots,x_n\}$ by 
$$
x_1=0; \quad x_k = \sum_{i=1}^{k-1} \d_i, \; k=2,\ldots,n.
$$
Then all $x_k \le \TSP(T)$, so $\Omega \subset [0,\TSP(T)]$ as claimed. 

Note that for every $k=1,\ldots,n-1$ we have
$$
\rho \left( z_{k+1}, z_k \right) = \d_k = x_{k+1}-x_k.
$$
Then, for any integers $1 \le k \le k+j \le n$, triangle inequality and telescoping give
\begin{align*} 
\rho \left( z_{k+j}, z_k \right)
  &\le \rho \left( z_{k+j}, z_{k+j-1} \right) + \rho \left( z_{k+j-1}, z_{k+j-2} \right) + \cdots + \rho \left( z_{k+1}, z_k \right) \\
  &= \left( x_{k+j} - x_{k+j-1} \right) + \left( x_{k+j-1} - x_{k+j-2} \right) + \cdots + \left( x_{k+1}- x_k \right)\\
  &=x_{k+j} - x_k.
\end{align*}
This shows that the folding map $F: x_k \mapsto z_k$ is a bijection that satisfies 
$$
\rho \left( F(x), F(y) \right) \le \abs{x-y}
\quad \text{for all } x,y \in \Omega.
$$
In other words, $F$ is $1$-Lipschitz. The proof is complete.
\end{proof}

\section{A private measure on a metric space}\label{s:pm_mp}

We are ready to construct a private measure on an arbitrary compact metric space $(T,\rho)$. We do this as follows: (a) discretize $T$ replacing it with a finite $\d$-net; (b) fold an interval of length $\TSP(T)$ onto $T$ using Proposition~\ref{prop: folding}; and (c) using this folding, pushforward onto $T$ the private measure on the interval constructed in Section~\ref{s:pm_line}. The accuracy of the resulting private measure on $T$ is determined by the length of the interval $\TSP(T)$, which in turn can be expressed using the covering numbers of $T$ (Theorem~\ref{thm: width integral}).

\subsection{Finite metric spaces}

Let us start by extending Proposition~\ref{prop: private discrete interval} from a finite subset on $[0,1]$ to a finite subset of $(T,\rho)$. 

\begin{proposition}[Private measure on a finite metric space]			\label{prop: private finite metric}
  Let $(T,\rho)$ be a finite metric space and let $n = \abs T$. 
  Let $\a>0$.
  There exists a randomized algorithm $\BB$ that takes a probability measure $\mu$ on $T$ as an input and returns a probability measure $\nu$ on $T$ as an output,
  and with the following two properties. 
  \begin{enumerate}[(i)]
    \item (Privacy): the algorithm $\BB$ is $\a$-metrically private 
      in the TV metric.
    \item (Accuracy): for any input measure $\mu$, 
    the expected accuracy of the output measure $\nu$ 
    in the Wasserstein distance is
	$$
	\E W_1(\nu, \mu) \le \frac{C \log^{\frac{3}{2}} n}{\a} \, \TSP(T).
	$$
  \end{enumerate}
\end{proposition}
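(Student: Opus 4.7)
The plan is to reduce this problem to the one-dimensional case (Proposition~\ref{prop: private discrete interval}) via the folding technique from Proposition~\ref{prop: folding}. Set $L = \TSP(T)$. First, apply Proposition~\ref{prop: folding} to obtain a finite set $\Omega \subset [0,L]$ with $|\Omega| = |T| = n$ and a $1$-Lipschitz bijection $F: \Omega \to T$. Rescale to the unit interval by defining $\widetilde{\Omega} = L^{-1} \Omega \subset [0,1]$ and $G: \widetilde{\Omega} \to T$ by $G(y) = F(Ly)$, which is an $L$-Lipschitz bijection.

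The algorithm $\BB$ on $T$ is then defined as the composition: given an input probability measure $\mu$ on $T$, first pull back to $\widetilde{\Omega}$ via the bijection $G$, set $\mu' = (G^{-1})_* \mu$, then apply the algorithm from Proposition~\ref{prop: private discrete interval} for $\Omega = \widetilde{\Omega}$ (with the same privacy parameter $\alpha$) to get a random probability measure $\nu'$ on $\widetilde{\Omega}$, and finally push forward by $G$ to obtain $\nu = G_* \nu'$ on $T$. Since $G$ is a bijection, both the pullback and pushforward are well-defined bijections on probability measures.

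For privacy: the pullback operation is an isometry for the TV metric, i.e., $\norm{(G^{-1})_*\mu_1 - (G^{-1})_*\mu_2}_\TV = \norm{\mu_1-\mu_2}_\TV$, because TV only sees the point masses (equivalently, the summation in \eqref{eq: TV as L1} is invariant under relabeling the atoms). The pushforward by $G$ is a deterministic measurable map on the output space, so for any measurable event $S'$ we have $\Pr{G_*\BB_0((G^{-1})_*\mu) \in S'} = \Pr{\BB_0((G^{-1})_*\mu) \in (G_*)^{-1}(S')}$, where $\BB_0$ denotes the interval algorithm. Combining these with the $\alpha$-metric privacy guarantee of Proposition~\ref{prop: private discrete interval} yields $\alpha$-metric privacy of $\BB$ in the TV metric.

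For accuracy: since $G$ is $L$-Lipschitz, pushforward by $G$ contracts Wasserstein distance in the sense $W_1(G_*\sigma_1, G_*\sigma_2) \le L\, W_1(\sigma_1,\sigma_2)$ — this follows from Kantorovich duality, since composing a $1$-Lipschitz test function $h$ on $T$ with $G$ yields an $L$-Lipschitz function on $\widetilde{\Omega}$. Because $\mu = G_*\mu'$, we get
\[
\E W_1(\nu,\mu) = \E W_1(G_*\nu', G_*\mu') \le L \cdot \E W_1(\nu',\mu') \le L \cdot \frac{C \log^{3/2} n}{\alpha} = \frac{C \log^{3/2} n}{\alpha}\, \TSP(T),
\]
using Proposition~\ref{prop: private discrete interval} for $\widetilde{\Omega}$, which has cardinality $n$. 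The only delicate point is keeping track of scaling: the folding map has the natural length $\TSP(T)$ baked in, and this is precisely what converts the one-dimensional accuracy bound into the stated metric-space bound. The hardest conceptual step is Proposition~\ref{prop: folding} itself, but that is already established, so this proof is mostly bookkeeping.
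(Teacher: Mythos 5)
Your proof is correct and follows essentially the same route as the paper: fold $T$ onto a finite subset of $[0,\TSP(T)]$ via Proposition~\ref{prop: folding}, rescale, apply Proposition~\ref{prop: private discrete interval}, and transfer back. The only cosmetic difference is that you justify the accuracy transfer via the Kantorovich--Rubinstein dual characterization (an $L$-Lipschitz pushforward contracts $W_1$ by a factor $L$), whereas the paper phrases it as a monotonicity-of-$W_1$-under-metric-comparison argument; these are the same observation, and your explicit verification that the TV metric is invariant under relabeling by the bijection $F$ (hence privacy is preserved by the pullback/pushforward) fills in a step the paper leaves implicit.
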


\begin{proof}
Applying Folding Proposition~\ref{prop: folding}, we obtain an $n$-element subset 
$\Omega \subset [0,\TSP(T)]$ and a $1$-Lipschitz bijection $F: \Omega \to T$. 
Applying Proposition~\ref{prop: private discrete interval} and rescaling by the factor $\TSP(T)$,
we obtain an $\alpha$-metrically private algorithm $\BB$ 
that transforms a probability measure $\mu$ on $\Omega$ 
into a probability measure $\nu$ on $\Omega$, and whose accuracy is
\begin{equation}	\label{eq: W1 prelim}
\E W_1(\nu, \mu)
\le \frac{C \log^{\frac{3}{2}} n}{\a} \, \TSP(T). 
\end{equation}

Define a new metric $\bar{\rho}$ on $\Omega$ by $\bar{\rho}(x,y) = \rho \left( F(x),F(y) \right)$.
Since $F$ is $1$-Lipschitz, we have $\bar{\rho}(x,y) \le \abs{x-y}$. 
Note that the Wasserstein distance can only become smaller if the underlying metric 
is replaced by a smaller metric. 
Therefore, the bound \eqref{eq: W1 prelim}, which holds with respect to the usual metric $\abs{x-y}$ on $\Omega$, automatically holds with respect to the smaller metric $\bar{\rho}(x,y)$.

It remains to note that $(\Omega, \bar{\rho})$ is isometric to $(T,\rho)$. So the accuracy result \eqref{eq: W1 prelim}, which as we saw holds in $(\Omega, \bar{\rho})$, 
automatically transfers to $(T,\rho)$ (by considering the pushforward measure).
\end{proof}

\subsection{General metric spaces}

Quantization allows us to pass from discrete metric spaces to general spaces. 
A similar technique was used in Section~\ref{s: discrete to general} for the interval $[0,1]$. 
We will repeat it here for a general metric space. 

\subsubsection{Quantization}
Fix $\d > 0$ and let $\NN = \{\omega_1,\ldots,\omega_n\}$ be a $\d$-net of $T$
such that $n = \abs{\NN} = N(T,\rho,\d)$.
Consider the proximity partition 
$$
T = I_1 \cup \cdots \cup I_n
$$
where we put a point $x \in T$ into $I_i$ if $x$ is closer to $\omega_i$ that to any other points in $\NN$. (We break any ties arbitrarily.) 

We can quantize any signed measure $\nu$ on $T$ by 
defining
$$
\nu_\NN \left( \{\omega_i\} \right) = \nu(I_i), i=1,\ldots,n.
$$
Obviously, $\nu_\NN$ is a signed measure on $\NN$. Moreover, if $\nu$ is a measure, then so is $\nu_\NN$. And if $\nu$ is a probability measure, then so is $\nu_\NN$. In the latter case, it follows from the construction that 
\begin{equation}	\label{eq: W quantization general}
W_1(\nu,\nu_\NN) \le \d. 
\end{equation}
(By definition of the net, transporting any point $x$ to the closest point $\omega_i$ covers distance at most $\d$.) Furthermore, Lemma~\ref{lem: quantization contraction} easily generalizes and yields
\begin{equation}	\label{eq: TV contraction general}
\norm{\nu_\NN}_\TV \le \norm{\nu}_\TV.
\end{equation}

Finally, let us bound the TSP of the net $\NN$ using Theorem~\ref{thm: width integral}. We trivially have $N(\NN,\rho,x) \le \abs{\NN} = N(T,\rho,\d)$ for any $x>0$. Moreover, since $\NN \subset T$, we also have $N(\NN,\rho,x) \le N(T,\rho,x/2)$, see \cite[Exercise~4.2.10]{vershyninbook}. 
Using the former bound for $x<2\d$ and the latter bound for $x \ge 2\d$ and applying \eqref{eq: up to diam}, we obtain 
\begin{align} 
\TSP(\NN) 
  &\lesssim \int_0^{\diam(\NN)} N(\NN,\rho,x) \, dx \nonumber\\
  &\le 2\d N(T,\rho,\d) + \int_{2\d}^{\diam(T)} N(T,\rho,x/2) \, dx \nonumber\\
  &= 2 \left( \d N(T,\rho,\d) + \int_\d^{\diam(T)/2} N(T,\rho,x) \, dx \right) \nonumber\\
  &\leq 2 \left( 2\int_{\delta/2}^{\delta}N(T,\rho,x)\,dx + \int_\d^{\diam(T)/2} N(T,\rho,x) \, dx \right) \nonumber\\
  &\leq4\int_{\delta/2}^{\diam(T)/2}N(T,\rho,x)\,dx. \label{eq: net width}
\end{align}

\subsubsection{A private measure on a general metric space}

\begin{theorem}[Private measure on a metric space]			\label{thm: private metric}
  Let $(T,\rho)$ be a compact metric space. Let $\a, \delta >0$.
  There exists a randomized algorithm $\AA$ that takes a
  probability measure $\mu$ on $T$ as an input 
  and returns a finitely-supported probability measure $\nu$ on $T$ as an output, 
  and with the following two properties. 
  \begin{enumerate}[(i)]
    \item (Privacy): the algorithm $\AA$ is $\a$-metrically private 
      in the TV metric.
    \item (Accuracy): for any input measure $\mu$, 
    the expected accuracy of the output measure $\nu$ in the 
    Wasserstein distance is
	$$
	\E W_1(\nu,\mu) 
	\le 2\d + \frac{C}{\a} \log^{\frac{3}{2}} \left( N(T,\rho,\d) \right)\int_{\delta}^{\diam(T)}N(T,\rho,x)\,dx.
	$$
  \end{enumerate}
\end{theorem}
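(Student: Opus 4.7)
The plan is to reduce the compact-space theorem to the already-established finite-space version (Proposition~\ref{prop: private finite metric}) by quantizing onto a $\delta$-net, mirroring what Section~\ref{s: discrete to general} does for the interval $[0,1]$. Fix a $\delta$-net $\NN\subset T$ of minimal cardinality $n=N(T,\rho,\delta)$, let $\mu\mapsto\mu_\NN$ denote the proximity-partition quantization onto $\NN$, and let $\BB$ be the $\alpha$-metrically private algorithm from Proposition~\ref{prop: private finite metric} applied to the finite metric space $(\NN,\rho)$. Define the desired algorithm by $\AA(\mu)=\BB(\mu_\NN)$; its output is supported on $\NN\subset T$, hence finitely supported in $T$ as required.

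For privacy, I would combine the TV-contraction bound \eqref{eq: TV contraction general} applied to the signed measure $\mu-\mu'$, which yields $\norm{\mu_\NN-\mu'_\NN}_\TV \le \norm{\mu-\mu'}_\TV$, with the $\alpha$-metric privacy of $\BB$. For any measurable $S$ in the output space,
\[
\frac{\Pr{\AA(\mu)\in S}}{\Pr{\AA(\mu')\in S}}
= \frac{\Pr{\BB(\mu_\NN)\in S}}{\Pr{\BB(\mu'_\NN)\in S}}
\le \exp\bigl(\alpha\norm{\mu_\NN-\mu'_\NN}_\TV\bigr)
\le \exp\bigl(\alpha\norm{\mu-\mu'}_\TV\bigr),
\]
which is exactly the $\alpha$-metric privacy of $\AA$ in the TV metric.

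For accuracy, the Wasserstein triangle inequality splits the total error into a quantization part and a private-measure part:
\[
\E W_1(\AA(\mu),\mu) \le W_1(\mu,\mu_\NN) + \E W_1(\BB(\mu_\NN),\mu_\NN).
\]
The first summand is at most $\delta$ by \eqref{eq: W quantization general}. The second is bounded by Proposition~\ref{prop: private finite metric} applied to $(\NN,\rho)$ with input $\mu_\NN$, giving $(C\log^{3/2}n/\alpha)\cdot\TSP(\NN)$; plugging in the covering-number bound \eqref{eq: net width} then controls $\TSP(\NN)$ by a constant multiple of $\int_{\delta/2}^{\diam(T)/2}N(T,\rho,x)\,dx$. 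The main technical nuisance, and not a conceptual obstacle, is reconciling this integration window with the $[\delta,\diam(T)]$ range appearing in the theorem: after splitting $\int_{\delta/2}^{\diam(T)/2}=\int_{\delta/2}^{\delta}+\int_\delta^{\diam(T)/2}$, the second piece trivially embeds in $\int_\delta^{\diam(T)}N(T,\rho,x)\,dx$, while the first piece $\int_{\delta/2}^\delta N(T,\rho,x)\,dx$ contributes a $\delta$-scale boundary term that one absorbs into the leading $2\delta$ prefactor (using that $N(T,\rho,\cdot)$ is decreasing and adjusting the universal constant $C$). Assembling these estimates yields the bound claimed in the theorem.
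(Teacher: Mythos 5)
Your overall scheme --- quantize onto a net, run the finite-space Proposition~\ref{prop: private finite metric}, and split the error by the triangle inequality --- is exactly the paper's, and the privacy argument via the TV-contraction \eqref{eq: TV contraction general} is correct. The gap is in the last accuracy step. After invoking \eqref{eq: net width} you have
\[
\E W_1(\AA(\mu),\mu)\;\le\; \d + \frac{4C}{\a}\,\log^{3/2}\!\big(N(T,\rho,\d)\big)\int_{\d/2}^{\diam(T)/2} N(T,\rho,x)\,dx,
\]
and you propose to handle the extra piece $\int_{\d/2}^{\d}N(T,\rho,x)\,dx$ by ``absorbing it into the leading $2\d$ prefactor.'' That absorption cannot be done with a universal constant: the piece carries the full coefficient $\frac{4C}{\a}\log^{3/2}(N(T,\rho,\d))$, so the quantity you would need to dominate by $O(\d)$ is of order $\frac{\d}{\a}\,\log^{3/2}(N(T,\rho,\d))\,N(T,\rho,\d/2)$, which blows up as $\a\to 0$ or as the covering number at scale $\d/2$ grows, independently of $\d$. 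It also cannot be absorbed into the remaining integral term, because $\int_{\d/2}^{\d}N(T,\rho,x)\,dx$ can exceed $\int_{\d}^{\diam(T)}N(T,\rho,x)\,dx$ by an arbitrarily large factor --- for instance when $T$ consists of a tight cluster of many points at mutual distance just above $\d/2$ together with one far-away point, so that $N(T,\rho,\cdot)$ drops sharply across the scale $\d$.

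The paper avoids this with a rescaling rather than a splitting: having established the displayed bound for \emph{every} scale, one simply runs the whole construction on a $2\d$-net instead of a $\d$-net. Then the quantization error becomes $2\d$, the net has $N(T,\rho,2\d)\le N(T,\rho,\d)$ points (so $\log^{3/2}$ only improves), and \eqref{eq: net width} gives $\TSP(\NN)\le 4\int_\d^{\diam(T)/2}N(T,\rho,x)\,dx \le 4\int_\d^{\diam(T)}N(T,\rho,x)\,dx$, exactly the integral in the theorem. Replace your absorption step with this one-line change of net scale and the rest of your argument goes through unchanged.
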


\begin{proof}
Preprocess the input measure $\mu$ by quantizing as in the previous subsection, 
and feed the quantized measure $\mu_\NN$ into the algorithm $\BB$ of Proposition~\ref{prop: private finite metric} for the metric space $(\NN,\rho)$. 

The contraction property \eqref{eq: TV contraction general} ensures that
$$
\norm{\mu_\NN - \mu'_\NN}_\TV \le \norm{\mu - \mu'}_\TV
$$
for any two input measures $\mu$ and $\mu'$.
This and the privacy property in Proposition~\ref{prop: private finite metric} for measures on $\NN$ guarantee
that quantization does not destroy privacy, i.e. the algorithm 
$\AA: \mu \mapsto \BB(\mu_\NN)$ is
still $\a$-metrically private as claimed.  

Next, the accuracy property in Proposition~\ref{prop: private finite metric} for the measure $\mu_\NN$ on $\NN$ gives
$$
\E W_1 \left( \BB(\mu_\NN), \mu_\NN \right)
\le \frac{C}{\a} \log^{\frac{3}{2}} (N(T,\rho,\d)) \, \TSP(\NN).
$$
Moreover, the accuracy of quantization \eqref{eq: W quantization general} states that 
$W_1(\mu,\mu_\NN) \le \d$. By triangle inequality, we conclude that 
$$
\E W_1 \left( \BB(\mu_\NN), \mu \right)
\le \d + \frac{C}{\a} \log^{\frac{3}{2}} (N(T,\rho,\d)) \, \TSP(\NN).
$$
Thus, by \eqref{eq: net width},
$$
\E W_1 \left( \BB(\mu_\NN), \mu \right)
\le \d + \frac{C}{\a} \log^{\frac{3}{2}} (N(T,\rho,\d)) \, \int_{\delta/2}^{\diam(T)/2}N(T,\rho,x)\,dx.
$$
Since $N(T,\rho,2\d)\leq N(T,\rho,\d)$, replacing $\d$ by $2\d$ completes the proof of the theorem.
\end{proof}

\subsection{Private synthetic data}

The output of the algorithm $\AA$ in Theorem~\ref{thm: private metric} is a finitely-supported probability measure $\nu$ on $T$. Quantization allows to transform $\nu$ into an {\em empirical measure} 
\begin{equation}	\label{eq: empirical measure}
\mu_Y = \frac{1}{m} \sum_{i=1}^m \d_{Y_i} 
\end{equation}
where $Y_1,\ldots,Y_m$ is some finite sequence of elements of $T$, in which repetitions are allowed. In other words, we can make the output of our algorithm a {\em synthetic data} $Y=(Y_1,\ldots,Y_m)$. Let us record this observation.

\begin{corollary}[Outputting an empirical measure]			\label{cor: empirical measure}
  Let $(T,\rho)$ be a compact metric space. Let $\a,\d>0$.
  There exists a randomized algorithm $\AA$ that takes a
  probability measure $\mu$ on $T$ as an input 
  and returns $Y = (Y_1,\ldots,Y_m) \in T^m$ for some $m$ as an output, 
  and with the following two properties. 
  \begin{enumerate}[(i)]
    \item (Privacy): the algorithm $\AA$ is $\a$-metrically private 
      in the TV metric.
    \item (Accuracy): for any input measure $\mu$, 
    the expected accuracy of the empirical measure $\mu_Y$ in the Wasserstein distance is
	$$
	\E W_1 \left( \mu_Y, \mu \right) 
	\le 3\d + \frac{C}{\a} \log^{\frac{3}{2}} \left( N(T,\rho,\d) \right) \int_{\delta}^{\diam(T)}N(T,\rho,x)\,dx.
	$$
  \end{enumerate}
\end{corollary}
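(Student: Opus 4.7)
The plan is to obtain $Y$ as a deterministic post-processing of the finitely-supported private measure produced by Theorem~\ref{thm: private metric}, choosing the sample size $m$ large enough that the additional quantization cost is absorbed into one extra $\delta$ term.

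Concretely, I would first invoke Theorem~\ref{thm: private metric} with the given parameters $\alpha,\delta$ to obtain a probability measure $\nu = \sum_{i=1}^{N} p_i\,\delta_{t_i}$ on $T$, where $N \leq N(T,\rho,\delta)$ since the construction pushes all mass onto a fixed $\delta$-net $\NN$. Then, for an integer $m$ to be chosen, I would deterministically round $(mp_1,\ldots,mp_N)$ to nonnegative integers $(k_1,\ldots,k_N)$ with $\sum_{i=1}^N k_i = m$ and $|p_i - k_i/m| \leq 1/m$ for every $i$ (for example via the largest-remainder rule), and produce $Y = (Y_1,\ldots,Y_m) \in T^m$ by listing each $t_i$ exactly $k_i$ times; the corresponding empirical measure is then $\mu_Y = \sum_i (k_i/m)\,\delta_{t_i}$.

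Because the rounding step is a deterministic function of $\nu$, the composite map $\mu \mapsto Y$ inherits $\alpha$-metric privacy in the TV metric from Theorem~\ref{thm: private metric} by the standard post-processing property: writing the quantization as a map $Q$, for any measurable $S \subset T^m$ we have $\Pr{Y \in S} = \Pr{\nu \in Q^{-1}(S)}$, so \eqref{eq: metric privacy measures} for $\nu$ transfers directly to $Y$. For accuracy, $\nu$ and $\mu_Y$ are supported on the same $N$ points, so
\[
\|\nu - \mu_Y\|_{\TV} \;=\; \frac{1}{2}\sum_{i=1}^{N}\left|p_i - \frac{k_i}{m}\right| \;\leq\; \frac{N}{2m},
\]
and since any coupling of $\nu$ and $\mu_Y$ moves mass a distance at most $\diam(T)$ per unit, we get the elementary bound $W_1(\nu,\mu_Y) \leq \diam(T)\,\|\nu-\mu_Y\|_{\TV} \leq N\diam(T)/(2m)$. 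Choosing $m = \lceil N\diam(T)/(2\delta)\rceil$ forces $W_1(\nu,\mu_Y) \leq \delta$, and the triangle inequality $\E W_1(\mu_Y,\mu) \leq \E W_1(\mu_Y,\nu) + \E W_1(\nu,\mu)$ combined with the accuracy estimate of Theorem~\ref{thm: private metric} yields the claimed inequality with leading coefficient $3\delta = \delta + 2\delta$.

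There is no real obstacle here beyond bookkeeping: the two ingredients, namely Theorem~\ref{thm: private metric} for the private measure and the $W_1 \leq \diam(T)\,\|\cdot\|_{\TV}$ slack for the rounding, combine additively via the triangle inequality. The only point that needs a moment's thought is that the post-processing preserves the exact $\alpha$-metric privacy bound (not just a weaker version), but this is immediate since $Q$ is deterministic.
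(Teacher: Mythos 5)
Your proposal is correct and follows essentially the same route as the paper: invoke Theorem~\ref{thm: private metric}, quantize the weights of the resulting finitely-supported measure to multiples of $1/m$ so that the result is an empirical measure on $m$ points, absorb the quantization cost into an extra $\delta$ by choosing $m$ large, and inherit privacy by deterministic post-processing. The only cosmetic differences are that the paper rounds by taking floors and dumping the leftover mass onto a single atom (whereas you use largest-remainder rounding), and the paper bounds the transport cost directly as (mass moved)$\times \diam(T)$ rather than via the inequality $W_1 \le \diam(T)\,\|\cdot\|_\TV$; both give the same $O(N\diam(T)/m)$ bound and the same choice of $m$.
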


\begin{proof}
Since the output probability measure $\nu$ in Theorem~\ref{thm: private metric} is finitely supported, it has the form
$$
\nu = \sum_{i=1}^r w_i \, \d_{Y_i}
$$
for some natural number $r$, positive weights $w_i$ and elements $R_i \in T$.

Let us quantize the weights $w_i$ by the uniform quantizer with step $1/m$ where $m$ is a large integer. Namely, set 
$$
q(w_i) \coloneqq \frac{\lfloor m \, w_i \rfloor}{m}.
$$
Obviously, 
the total quantization error satisfies 
\begin{equation}	\label{eq: total error}
\kappa \coloneqq \sum_{i=1}^r \left( w_i-q(w_i) \right) \in [0,r/m].
\end{equation}
To make the quantized weights a probability measure, let us add the total quantization error to any given weight, say the first. Thus, define
$$
w'_1 \coloneqq q(w_1) + \kappa
\quad \text{and} \quad 
w'_i \coloneqq q(w_i), \; i=2,\ldots,r
$$
and set
$$
\nu' \coloneqq \sum_{i=1}^r w'_i \, \d_{Y_i}.
$$

Note the three key properties of $\nu'$. 
First, since the weights $w'_i$ sum to one, $\nu'$ is a probability measure. Second, since $\nu'$ is obtained from $\nu$ by transporting a total mass of $\kappa$ across the metric space $T$, we have
$$
W_1(\nu,\nu') \le \kappa \cdot \diam(T) \le \frac{r}{m} \cdot \diam(T) \le \d
$$
where the second inequality follows from \eqref{eq: total error} and the last one by choosing $m$ large enough. Third, all quantized weights $q(w_i)$ belong to $\frac{1}{m} \Z$ by definition. Thus, $\kappa  = 1 - \sum_{i=1}^r q(w_i)$ is also in $\frac{1}{m} \Z$. Therefore, all weights $w'_i$ are in $\frac{1}{m} \Z$, too. Hence, $w'_i = m_i/m$ for some nonnegative integers $m_i$. In other words, 
$$
\nu' = \frac{1}{m} \sum_{i=1}^r m_i \, \d_{Y_i}.
$$
Since $\nu'$ is a probability measure, we must have $\sum_{i=1}^r m_i = m$. Redefine the sequence $Y_1,\ldots,Y_m$ by repeating each element $Y_i$ of the sequence $Y_1,\ldots,Y_r$ exactly $m_i$ times. Thus $\nu' = \frac{1}{m} \sum_{i=1}^m \d_{Y_i}$, as required.
\end{proof}

Corollary~\ref{cor: empirical measure} allows us to transform any true data $X=(X_1,\ldots,X_n)$ 
into a private synthetic data $Y=(Y_1,\ldots,Y_m)$.
To do this, feed the algorithm $\AA$ with the empirical measure on the true data
$\mu_X = \frac{1}{n} \sum_{i=1}^n \d_{X_i}$. 
Recall from Lemma~\ref{lem: PM to SD} that if the algorithm $\AA$ is $\a$-metrically private for $\a = \e n$, then the algorithm $X \mapsto Y = \AA(\mu_X)$ yields $\e$-differential private synthetic data. Let us record this observation:

\begin{corollary}[Differentially private synthetic data]			\label{cor: private synthetic data}
  Let $(T,\rho)$ be a compact metric space. Let $\e,\d>0$.
  There exists a randomized algorithm $\AA$
  that takes true data $X=(X_1,\ldots,X_n) \in T^n$ as an input 
  and returns synthetic data $Y = (Y_1,\ldots,Y_m) \in T^m$ for some $m$ as an output, 
  and with the following two properties. 
  \begin{enumerate}[(i)]
    \item (Privacy): the algorithm $\AA$ is $\e$-differentially private.
    \item (Accuracy): for any true data $X$, 
    the expected accuracy of the synthetic data $Y$ is
	$$
	\E W_1 \left( \mu_Y, \mu_X \right) 
	\le 3\d + \frac{C}{\e n} \log^{\frac{3}{2}} \left( N(T,\rho,\d) \right)\int_{\delta}^{\diam(T)}N(T,\rho,x)\,dx,
	$$
	where $\mu_X$ and $\mu_Y$ denote the corresponding empirical measures.  \end{enumerate}
\end{corollary}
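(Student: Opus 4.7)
The plan is to deduce this corollary by composing two results already established in the excerpt: Corollary~\ref{cor: empirical measure} (which produces private synthetic-data-style output from an arbitrary input probability measure on $T$, with metric privacy in the TV metric) and Lemma~\ref{lem: PM to SD} (which converts TV-metric privacy on measures into classical differential privacy on tuples via the empirical measure map). No new construction is needed.

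First, I would set $\alpha \coloneqq \e n$ and apply Corollary~\ref{cor: empirical measure} with this $\alpha$ and with the given $\delta > 0$ to obtain a randomized algorithm $\mathcal{A}_0$ that inputs a probability measure on $T$ and outputs a tuple $Y \in T^m$ (for some $m$), such that $\mathcal{A}_0$ is $\alpha$-metrically private in the TV metric and such that for every input probability measure $\nu$ on $T$,
\[
\E W_1\!\left(\mu_Y,\nu\right)
\le 3\delta + \frac{C}{\alpha}\log^{3/2}\!\left(N(T,\rho,\delta)\right)\int_{\delta}^{\diam(T)} N(T,\rho,x)\,dx.
\]
Then I would define the algorithm of the corollary by $\mathcal{A}(X) \coloneqq \mathcal{A}_0(\mu_X)$, where $\mu_X = \frac{1}{n}\sum_{i=1}^n \delta_{X_i}$ is the empirical measure of the input.

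For privacy, I would invoke Lemma~\ref{lem: PM to SD}: since $\mathcal{A}_0$ is $\alpha$-metrically private with $\alpha = \e n$, the composed algorithm $X \mapsto \mathcal{A}_0(\mu_X)$ is $\e$-differentially private. (The lemma's proof proceeds by noting that swapping one coordinate of $X$ moves $\mu_X$ by at most $1/n$ in TV, so the privacy ratio is at most $\exp(\alpha/n) = \exp(\e)$.) For accuracy, specializing the input to the empirical measure $\nu = \mu_X$ and substituting $\alpha = \e n$ into the bound above gives
\[
\E W_1(\mu_Y,\mu_X)
\le 3\delta + \frac{C}{\e n}\log^{3/2}\!\left(N(T,\rho,\delta)\right)\int_{\delta}^{\diam(T)} N(T,\rho,x)\,dx,
\]
which is exactly the claimed bound.

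There is no substantive obstacle here: the corollary is a clean composition of the two preceding building blocks. The only thing worth being careful about is that the accuracy bound in Corollary~\ref{cor: empirical measure} must be stated for \emph{arbitrary} input probability measures $\nu$ on $T$, not just for measures arising as empirical measures of some fixed data set, so that we may apply it to $\mu_X$; this is the form in which that corollary was proved.
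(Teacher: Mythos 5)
Your proposal is correct and matches the paper's own argument exactly: the paper likewise obtains the corollary by applying Corollary~\ref{cor: empirical measure} with $\alpha = \e n$, feeding in the empirical measure $\mu_X$, and invoking Lemma~\ref{lem: PM to SD} for the privacy claim, with the accuracy bound following by direct substitution. There is nothing to add.
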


An interested reader may now skip to Section~\ref{s: cube} where we illustrate Corollary~\ref{cor: private synthetic data} for a specific example of the metric space, namely the $d$-dimensional cube $T = [0,1]^d$. 

\bigskip

\begin{remark}[A polynomial time algorithm]  \label{re:oracle}
The algorithm in Corollary \ref{cor: empirical measure} works in {\em polynomial time with respect to the size $n$ of the input data, size $m$ of the output data, and the covering number $N(T,\rho,\d)$.} To see this, first, observe that the superregular random walk defined in (\ref{eq: superregular walk definition}) can clearly be implemented in polynomial time. Using this superregular random walk as a noise, we constructed a private signed measure in Proposition \ref{prop: signed}. Thus, the private signed measure in Proposition \ref{prop: signed} can be implemented in polynomial time. We then project this signed measure to a probability measure to obtain Proposition \ref{prop: private discrete interval}. Since this projection is done via a convex optimization, it can be implemented in polynomial time. Note that Proposition \ref{prop: private discrete interval} is in the context where the universe is a finite subset of an interval. In order to extend this to the context of general metric spaces in Theorem \ref{thm: private metric}, we first discretize $T$ by a $\d$-net of $T$ and then using a traveling salesman path, we reduce the $\d$-net of $T$ into a finite subset of an interval. The bound in Theorem \ref{thm: width integral} and its proof give a polynomial time algorithm to implement the traveling salesman path. In the case $T=[0,1]^{d}$, the traveling salesman can be done via a space-filling curve. Therefore, the algorithm in Theorem \ref{thm: private metric} can be implemented in polynomial time in $N(T,\rho,\d)$. Finally the process of turning a probability measure into a synthetic dataset in the proof of Corollary~\ref{cor: empirical measure} (quantization and replication) can clearly be implemented in polynomial time.

In the particular case where $T$ is the cube $[0,1]^d$, which we will specialize to in Corollary~\ref{cor: private synthetic cube}, our algorithm runs in time {\em polynomial in $m$ and $n$}. This is because with the optimal choice of $\d$ we make there, the covering number $N(T,\rho,\d)$ is polynomial in $n$.

Although the algorithm can be implemented in polynomial time, it is rather convoluted. We will present a streamlined and practical algorithmic implementation of a synthetic data generation method in a forthcoming paper.

\end{remark}

\section{A lower bound}\label{s:lower}

This section is devoted to impossibility results, which yield {\em lower} bounds on the accuracy of any private measure on a general metric space $(T,\rho)$. While there may be a gap between our upper and lower bounds for general metric spaces, we will see in Section~\ref{s:examples} that this gap vanishes asymptotically for spaces of Minkowski dimension $d$. 

The proof of the lower bound uses the geometric method pioneered by Hardt and Talwar \cite{hardttalwar}. A lower bound is more convenient to express in terms of packing rather than covering numbers. Recall that the {\em packing number} $\Npack(T,\rho,\e)$ of a compact metric space $(T,\rho)$ is defined 
as the largest cardinality of an $\e$-separated subset of $T$. The covering and packing numbers are equivalent up to a factor of $2$:
\begin{equation}	\label{eq: covering vs packing}
\Npack(T,\rho,2\e) \le N(T,\rho,\e) \le \Npack(T,\rho,\e),
\end{equation}
see \cite[Lemma~4.2.8]{vershyninbook}.
Thus, in all results of this section, packing numbers can be replaced by covering numbers 
at the cost of changing absolute constants.

\subsection{A master lower bound}

We first prove a general result that establishes limitations of metric privacy. 
To understand this statement better, it may be helpful to assume that 
$\MM_0=\MM_1$ and $\rho_0=\rho_1$ in the first reading.

\begin{proposition}[A master lower bound]		\label{prop: synthetic lower}
  Let $\MM_0 \subset \MM_1$ be two subsets, and  
  let $\rho_i$ be a metric on $\MM_i$, $i=0,1$.
  Assume that for some $t, \a > 0$ we have
  $$
  \diam(\MM_0, \rho_0) \le 1
  \quad \text{and} \quad
  \Npack(\MM_0, \rho_1,t) > 2e^\alpha.
  $$
  Then, for any randomized algorithm $\AA: \MM_0 \to \MM_1$ 
  that is $\alpha$-metrically private with respect to the metric $\rho_0$, 
  there exists $x \in \MM_0$ such that 
  $$
  \E \rho_1 \left( \AA(x), x \right) > t/4.
  $$
\end{proposition}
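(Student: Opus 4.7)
The plan is to use a Hardt--Talwar style packing argument: extract a large $t$-separated family inside $\MM_0$, observe that an accurate algorithm must place its output near each of these centers with non-negligible probability, and then combine metric privacy with the disjointness of the resulting balls to derive a contradiction.

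I would argue by contradiction, assuming that $\E \rho_1(\AA(x),x) \le t/4$ for every $x \in \MM_0$. Fix a maximal $t$-separated subset $\{x_1,\ldots,x_N\} \subset \MM_0$ with respect to $\rho_1$, where $N = \Npack(\MM_0,\rho_1,t) > 2 e^\alpha$. Let
$$
B_i = \{ y \in \MM_1 : \rho_1(y,x_i) < t/2 \}, \qquad i=1,\ldots,N.
$$
By the triangle inequality, the hypothesis that $\rho_1(x_i,x_j) \ge t$ for $i \neq j$ forces these open balls to be pairwise disjoint. Markov's inequality, applied to the nonnegative random variable $\rho_1(\AA(x_i),x_i)$ with expectation at most $t/4$, yields $\Pr{\AA(x_i) \in B_i} \ge 1/2$ for each $i$.

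Now I would invoke metric privacy to transfer this lower bound from $x_i$ to any other center $x_j$. Since $\diam(\MM_0,\rho_0) \le 1$, we have $\rho_0(x_i,x_j) \le 1$, and the defining inequality \eqref{eq: metric DP} gives
$$
\Pr{\AA(x_j) \in B_i} \ge e^{-\alpha \rho_0(x_i,x_j)} \Pr{\AA(x_i) \in B_i} \ge \frac{e^{-\alpha}}{2}.
$$
Fixing any $j$ and summing over $i$, disjointness of the $B_i$ implies
$$
1 \ge \sum_{i=1}^N \Pr{\AA(x_j) \in B_i} \ge \frac{N e^{-\alpha}}{2} > 1,
$$
where the last inequality uses the hypothesis $N > 2 e^\alpha$. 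This contradiction completes the argument.

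There is no real obstacle here; the only point requiring care is the matching of constants: one needs the Markov threshold $t/2$ to equal half the separation $t$, so that the balls $B_i$ remain disjoint while still capturing at least half the mass of $\AA(x_i)$, which in turn forces the assumed accuracy bound to be $t/4$. The strict inequality $N > 2 e^\alpha$ in the hypothesis is exactly what is needed to produce a strict contradiction at the final step.
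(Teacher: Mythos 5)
Your proof is correct and follows essentially the same Hardt--Talwar packing argument as the paper: argue by contradiction, use Markov's inequality to show each $\AA(x_i)$ lands in $B_i$ with probability at least $1/2$, transfer this lower bound to $\AA(x_j)$ via metric privacy and $\diam(\MM_0,\rho_0)\le 1$, and sum over the disjoint balls $B_i$ to exceed $1$. The only cosmetic difference is that the paper allows the reference point $y$ to be an arbitrary element of $\MM_0$ rather than one of the packing centers, which changes nothing.
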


\begin{proof}
For contradiction, assume that 
\begin{equation}	\label{eq: contradiction}
\E \rho_1 \left( \AA(x), x \right) \le t/4,
\end{equation}
for all $x\in\MM_0$. Let $\NN$ be a $t$-separated subset of the metric space $(\MM_0, \rho_1)$ with cardinality 
\begin{equation}	\label{eq: net large}
\abs{\NN} > 2e^\alpha.
\end{equation} 
The separation condition implies that the balls $B(y,\rho_1, t/2)$ centered at the points $y \in \NN$ and with radii $t/2$ are all disjoint. 

Fix any reference point $y \in \MM_0$. The disjointness of the balls yields
\begin{equation}	\label{eq: sum less than 1}
\sum_{x \in \NN} \Pr{\AA(y) \in B(x,\rho_1,t/2)} \le 1.
\end{equation}
On the other hand, by the definition of $\alpha$-metric privacy,
for each $x \in \NN$ we have:
$$
\Pr{\AA(y) \in B(x,\rho_1,t/2)}
\ge \exp \left[ -\a \rho_0(x,y) \right]
	\cdot \Pr{\AA(x) \in B(x,\rho_1,t/2)}.
$$
The diameter assumption yields $\rho_0(x,y) \le 1$. 
Furthermore, using the assumption \eqref{eq: contradiction} and Markov's inequality, we obtain
$$
\Pr{\AA(x) \in B(x,\rho_1,t/2)}
= \Pr{ \rho_1 \left( \AA(x), x \right) \le t/2}
\ge \frac{1}{2}.
$$
Combining the two bounds gives 
$$
\Pr{\AA(y) \in B(x,\rho_1,t/2)} \ge \frac{1}{2e^\a}.
$$
Substitute this into \eqref{eq: sum less than 1} to get 
$$
\sum_{x \in \NN} \frac{1}{2e^\a} \le 1.
$$
In other words, we conclude that $\abs{\NN} \le 2e^\alpha$, 
which contradicts \eqref{eq: net large}. 
The proof is complete. 
\end{proof}

Can the compactness assumption of the underlying metric space $(T,\rho)$ be dropped?
 The covering number of a  general non-compact metric space of $(T,\rho)$ is infinite.  Hence, our lower bound in Proposition~\ref{prop: synthetic lower}
shows that it is impossible to establish differential privacy with any accuracy gurantees in that case without imposing additional assumptions.

\subsection{Metric entropy of the space of probability measures}

For a given compact metric space $(T,\rho)$, 
we denote by $\MM(T)$ the collection of all Borel probability measures on $T$. 
We are going to apply Proposition~\ref{prop: synthetic lower} 
for $\MM_0 = \MM_1 = \MM(T)$, 
for $\rho_1=$ Wasserstein metric and $\rho_0=$ TV metric.
That proposition requires a lower bound on the packing number 
$\Npack \left( \MM(T), W_1, t/3 \right)$. In the next lemma, we relate 
this packing number to that of $(T,\rho)$. Essentially, it says that if $T$ is large, 
then there are a lot of probability measures on $T$.

\begin{proposition}[Metric entropy of the space of probability measures]	\label{prop: entropy of space of measures}
  For any compact metric space $(T,\rho)$ and every $t>0$, we have
  $$
  \Npack \left( \MM(T), W_1, t/3 \right) 
  \ge \exp \left( c \Npack(T,\rho,t) \right),
  $$
  where $c>0$ is a universal constant\,\footnote{Throughout Sections~\ref{s:lower} and~\ref{s:examples} , $c$ will always denote the same constant, while  the constant $C$ may take on different values in different computations.}.
\end{proposition}

The proof will use the following lemma.

\begin{lemma}[A lower bound on the Wasserstein distance]	\label{lem: Wasserstein lower}
  Let $(T,\rho)$ be a $t$-separated\,\footnote{This means that the distance between any two distinct points in $T$ is larger than $t$.} compact metric space. 
  Then, for any pair of probability measures $\mu, \nu$ on $T$, we have
  $$
  W_1(\mu,\nu) \ge \mu(B^c) \, t
  \quad \text{where} \quad B = \supp(\nu).
  $$
\end{lemma}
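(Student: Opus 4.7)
The plan is to use the Kantorovich formulation of $W_1$ directly: bound from below the transport cost of any coupling between $\mu$ and $\nu$, using that $\nu$ lives on $B$ while a mass of $\mu(B^c)$ sits outside $B$.

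First, I will fix an arbitrary coupling $\gamma$ of $\mu$ and $\nu$, i.e.\ a probability measure on $T \times T$ with first marginal $\mu$ and second marginal $\nu$. Since $\nu(B^c) = \nu(T \setminus \supp(\nu)) = 0$, the $\nu$-marginal of $\gamma$ is concentrated on $B$, and consequently $\gamma(T \times B^c) = 0$. Combining this with $\gamma(B^c \times T) = \mu(B^c)$ yields
\[
\gamma(B^c \times B) \;=\; \gamma(B^c \times T) - \gamma(B^c \times B^c) \;=\; \mu(B^c).
\]

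Next I will exploit the $t$-separation. If $(x,y) \in B^c \times B$, then $x \notin B$ and $y \in B$, so $x \neq y$; the $t$-separation of $(T,\rho)$ therefore gives $\rho(x,y) > t$. Hence
\[
\int_{T \times T} \rho(x,y)\, d\gamma(x,y)
\;\ge\; \int_{B^c \times B} \rho(x,y)\, d\gamma(x,y)
\;\ge\; t \cdot \gamma(B^c \times B)
\;=\; t \cdot \mu(B^c).
\]
Taking the infimum over all couplings $\gamma$ and invoking the definition \eqref{def:W1} gives the desired bound $W_1(\mu,\nu) \ge \mu(B^c)\, t$.

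There is no real obstacle here; the only subtle point is justifying $\gamma(T \times B^c) = 0$ when $B = \supp(\nu)$. For a Borel probability measure on a compact metric space, the support is closed (hence Borel) and has full measure, so this is immediate. The $t$-separation does the rest.
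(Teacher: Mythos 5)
Your proof is correct and follows essentially the same route as the paper's: fix a coupling $\gamma$, show $\gamma(B^c\times B)=\mu(B^c)$ because $\nu$ puts no mass on $B^c$, lower bound the cost integral on $B^c\times B$ by $t\cdot\gamma(B^c\times B)$ using the $t$-separation, and take the infimum over couplings. The only cosmetic difference is that the paper explicitly notes $\gamma(B^c\times B^c)\le\gamma(T\times B^c)=0$ as an intermediate step, which your argument contains implicitly.
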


\begin{proof}
Suppose that $\gamma$ is a coupling of $\mu$ and $\nu$. Since 
$\nu$ is supported on $B$, we have
$\gamma(B^c \times B^c) \le \gamma(T \times B^c) = \nu(B^c) = 0$, 
which means that $\gamma(B^c \times B^c)=0$. Therefore 
$$
\gamma(B^c \times B) 
= \gamma(B^c \times T) - \gamma(B^c \times B^c)
= \mu(B^c).
$$
Since the sets $B^c$ and $B$ are disjoint, the separation assumption 
implies that $\rho(x,y) > t$ for all pairs $x\in B^{c}$ and $y\in B$. Thus,
$$
\int_{T\times T}\rho(x,y)\;d\gamma(x,y)\geq\int_{B^c \times B} \rho(x,y) \; d\gamma(x,y)\geq t\gamma(B^c \times B)=t\mu(B^{c}).
$$
Since this holds for all coupling $\gamma$ of $\mu$ and $\nu$, the result follows.
\end{proof}

\begin{lemma}[Many different measures]		\label{lem: many different measures}
  Let $(\NN,\rho)$ be a $t$-separated compact metric space, and 
  assume that $\abs{\NN} \ge 2n$ for some $n \in \N$.
  Then there exists a family of at least $\exp(cn)$ empirical measures 
  on $n$ points of $T$ that are pairwise $t/3$-separated in the Wasserstein distance, where $c>0$ is a universal constant.
\end{lemma}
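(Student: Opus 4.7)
The plan is to construct the exponentially many measures as empirical measures supported on a carefully chosen family of $n$-element subsets of $\NN$, using Lemma~\ref{lem: Wasserstein lower} to translate the separation condition into a combinatorial one on the supports.

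Concretely, for each $n$-subset $S \subset \NN$, set $\mu_S = \frac{1}{n}\sum_{x\in S}\delta_x$. Applying Lemma~\ref{lem: Wasserstein lower} with $\nu = \mu_{S'}$ (so $B = \supp(\mu_{S'}) = S'$) and $\mu = \mu_S$ yields
$$
W_1(\mu_S, \mu_{S'}) \;\geq\; t \cdot \mu_S\bigl((S')^c\bigr) \;=\; \frac{t}{n}\,|S \setminus S'|.
$$
Since $|S| = |S'| = n$, one has $|S \setminus S'| = n - |S \cap S'|$, so $W_1(\mu_S, \mu_{S'}) \geq t/3$ as soon as $|S \cap S'| \leq 2n/3$. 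Thus it suffices to exhibit a family $\mathcal{F}$ of $n$-subsets of $\NN$ with $|\mathcal{F}| \geq e^{cn}$ and pairwise intersection at most $2n/3$.

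To obtain such $\mathcal{F}$, I would restrict attention to an arbitrary $2n$-element sub-collection of $\NN$ (this only restricts the pool of candidates, so it suffices) and perform a greedy/Gilbert--Varshamov packing on $\binom{[2n]}{n}$. For a fixed $n$-subset $S_0$, the intersection size $|S_0 \cap S'|$ for a uniformly random $n$-subset $S'$ of $[2n]$ follows a hypergeometric law with mean $n/2$; Hoeffding's inequality for sampling without replacement gives
$$
\Pr\!\bigl[\,|S_0 \cap S'| > 2n/3\,\bigr] \;\leq\; e^{-c_0 n}
$$
for an absolute constant $c_0 > 0$. Consequently the ``bad ball'' $\{S' : |S_0 \cap S'| > 2n/3\}$ has cardinality at most $e^{-c_0 n}\binom{2n}{n}$, and greedy removal of one such ball at each step produces a good family of size at least $e^{c_0 n}$.

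The only nontrivial step is the tail estimate for the hypergeometric distribution, and this is a textbook concentration result, so the proof reduces to routine bookkeeping once the translation $W_1$-separation $\longleftrightarrow$ intersection is in place.
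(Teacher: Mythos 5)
Your proof is correct and reaches the same conclusion, but it takes a genuinely different route from the paper. The paper's argument is purely probabilistic: it draws $K$ independent \emph{random} empirical measures with i.i.d.\ uniformly chosen support points from $\NN$, conditions on one of them, applies the binomial Chernoff bound to the i.i.d.\ indicators $\one_{\{X_i \in B^c\}}$ to show $\mu(B^c) > 1/3$ with overwhelming probability, and then finishes with a union bound over $\binom{K}{2}$ pairs to conclude a good family of size $e^{cn}$ exists with positive probability. You instead restrict to \emph{fixed} $n$-element subsets of a $2n$-point subcollection, translate Wasserstein separation into a pairwise intersection bound via Lemma~\ref{lem: Wasserstein lower} (the same translation the paper uses), and build an exponentially large family with small pairwise intersections by a Gilbert--Varshamov greedy packing, with the hypergeometric tail estimate (Hoeffding/Serfling for sampling without replacement) controlling the bad-ball size. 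Both work; the paper's version uses simpler concentration (i.i.d.\ binomial) at the cost of a randomized existence argument, while yours is a deterministic greedy construction and produces empirical measures with genuinely distinct $n$-point supports. One small technical point worth flagging: the paper's convention (see the footnote to Lemma~\ref{lem: Wasserstein lower}) is that ``$t/3$-separated'' means $W_1 > t/3$ strictly, whereas your threshold $|S \cap S'| \le 2n/3$ only gives $W_1 \ge t/3$; tighten the intersection threshold to strictly below $2n/3$ (or, say, to $n/2$), which leaves the hypergeometric tail bound essentially unchanged and yields the strict separation.
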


\begin{proof}
Let $\mu = n^{-1} \sum_{i=1}^n \d_{X_i}$ and $\nu = n^{-1} \sum_{i=1}^n \d_{Y_i}$ 
be two independent random empirical measures on $T$.
Let us condition on $\nu$ and denote $B = \supp(\nu)$. Then 
$$
\mu(B^c) = \frac{1}{n} \sum_{i=1}^n \one_{\{X_i \in B^c\}}.
$$
Now, $\one_{\{X_i \in B^c\}}$ are i.i.d.~Bernoulli random variables that take value $1$ with probability 
$$
\Pr{X_i \in B^c} 
= \frac{\abs{B^c}}{\abs{\NN}} 
\ge \frac{1}{2},
$$ 
since by construction we have $\abs{B} \le n$ and by assumption $\abs{\NN} \ge 2n$. 
Then, applying Chernoff inequality (see \cite[Exercise~2.3.2]{vershyninbook}), 
we conclude that $\mu(B^c) > 1/3$ with probability bigger than $1-e^{-5cn}$, where $c>0$ is a universal constant. 
Lemma~\ref{lem: Wasserstein lower} yields that $W_1(\mu,\nu) > t/3$.

Now consider a sequence $\mu_1,\ldots,\mu_K$ of independent random empirical measures
on $T$. Using the result above and taking a union bound we conclude that,
with probability at least $1-\binom{K}{2}e^{-5cn}$,
the inequality $W_1(\mu_i,\mu_j) > t/3$ holds for all pairs of distinct indices 
$i,j \in \{1,\ldots,K\}$. Choosing $K = \lceil e^{cn} \rceil$ makes $K$ 
between $e^{cn}$ (as claimed) and $e^{2cn}$. Thus, the success probability 
is more than $1- (e^{2cn})^2 e^{-5cn}$, which is positive. 
The existence of the required family of measures follows.
\end{proof}

\begin{proof}[Proof of Proposition~\ref{prop: entropy of space of measures}]
Let $\NN \subset T$ be a $t$-separated subset of cardinality
$\abs{\NN} = \Npack(T,\rho,t)$.
Lemma~\ref{lem: many different measures} implies the existence of 
a set of at least $\exp(c \abs{\NN})$ probability measures on $T$ 
that is $(t/3)$-separated in the Wasserstein distance. In other words, 
we have   
$\Npack \left( \MM(T), W_1, t/3 \right) \ge \exp(c \abs{\NN})$.
Proposition~\ref{prop: entropy of space of measures} is proved.
\end{proof}

\subsection{Lower bounds for private measures and synthetic data}

Now we are ready to prove the two main lower bounds on the accuracy
for (a) metrically private measures and (b) differential private data.

\begin{theorem}[Private measure: a lower bound]		\label{thm: private measure lower}
  Let $(T,\rho)$ be a compact metric space. 
  Assume that for some $t>0$ and $\a \ge 1$ we have
  $$
  \Npack(T, \rho, t) > C\alpha.
  $$
  Then, for any randomized algorithm $\AA$ that takes a
  probability measure $\mu$ on $T$ as an input 
  and returns a probability measure $\nu$ on $T$ as an output
  and that is $\alpha$-metrically private with respect to the TV metric,
  there exists $\mu$ such that 
  $$
  \E W_1(\nu,\mu) > t/12.
  $$
\end{theorem}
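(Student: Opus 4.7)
The plan is to derive Theorem~\ref{thm: private measure lower} as a direct corollary of the master lower bound (Proposition~\ref{prop: synthetic lower}) applied to the space of probability measures on $T$, using Proposition~\ref{prop: entropy of space of measures} to supply the necessary packing number estimate. The setup will be $\MM_0=\MM_1=\MM(T)$, with $\rho_0=\norm{\cdot}_\TV$ and $\rho_1=W_1$, and with scale parameter $t/3$ (rather than $t$) inside the master lower bound, so that the resulting conclusion $\rho_1>t'/4$ becomes $W_1>t/12$.

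First I would verify the two hypotheses of Proposition~\ref{prop: synthetic lower}. The TV-diameter bound is immediate: for any two probability measures $\mu,\mu'$ on $T$, the normalization recalled in Section~\ref{s: TV} yields $\norm{\mu-\mu'}_\TV\le 1$, so $\diam(\MM(T),\norm{\cdot}_\TV)\le 1$. Next, by Proposition~\ref{prop: entropy of space of measures} applied at scale $t$,
\[
\Npack\!\left(\MM(T),W_1,t/3\right)\;\ge\;\exp\!\left(c\,\Npack(T,\rho,t)\right).
\]
Under the hypothesis $\Npack(T,\rho,t)>C\alpha$ with $C$ a sufficiently large absolute constant (chosen so that $cC>1+\log 2$, which together with $\alpha\ge 1$ gives $c\Npack(T,\rho,t)>\alpha+\log 2$), this implies
\[
\Npack\!\left(\MM(T),W_1,t/3\right)\;>\;2e^{\alpha},
\]
which is exactly the packing hypothesis of Proposition~\ref{prop: synthetic lower} with parameter $t/3$.

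Then I would invoke Proposition~\ref{prop: synthetic lower} directly: since the algorithm $\AA:\MM(T)\to\MM(T)$ is $\alpha$-metrically private in the TV metric, the master bound guarantees a probability measure $\mu\in\MM(T)$ such that
\[
\E\,W_1\!\left(\AA(\mu),\mu\right)\;>\;(t/3)/4\;=\;t/12,
\]
which is the conclusion of the theorem.

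The only genuine step is choosing the absolute constant $C$ so that the entropy bound from Proposition~\ref{prop: entropy of space of measures} overcomes the factor $2e^{\alpha}$; everything else is bookkeeping. No real obstacle arises because all the heavy lifting—namely, converting a packing of $T$ into an exponentially larger packing of $\MM(T)$ in the Wasserstein metric, and extracting a hard input from a large packing via a counting argument against the metric-privacy ratio—has already been done in Propositions~\ref{prop: entropy of space of measures} and~\ref{prop: synthetic lower}. The only mild subtlety to keep in mind is the $t\mapsto t/3$ shift in scale caused by passing to the Wasserstein packing bound, which is precisely what produces the factor $1/12$ in place of $1/4$.
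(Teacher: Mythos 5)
Your proposal is correct and matches the paper's proof: both invoke Proposition~\ref{prop: entropy of space of measures} to lower-bound $\Npack(\MM(T),W_1,t/3)$ by $\exp(c\,\Npack(T,\rho,t))>2e^\alpha$, then apply Proposition~\ref{prop: synthetic lower} with $\MM_0=\MM_1=\MM(T)$, $\rho_0$ the TV metric, $\rho_1=W_1$, and parameter $t/3$ to obtain $\E W_1(\AA(\mu),\mu)>t/12$. Your explicit verification of the TV-diameter bound and the arithmetic on the constant $C$ are mild elaborations of what the paper leaves implicit.
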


\begin{proof}
The assumption on the packing number for a sufficiently large constant $C$ and 
Proposition~\ref{prop: entropy of space of measures} yield
$$
\Npack \left( \MM(T), W_1, t/3 \right) \ge e^{2\a} > 2e^\a.
$$
Next, apply Proposition~\ref{prop: synthetic lower} with $t/3$ instead of $t$, 
and for $\MM_0 = \MM_1 = \MM(T)$, 
setting $\rho_1$ and $\rho_0$ to be the Wasserstein and the TV metrics, respectively. 
The required conclusion follows. 
\end{proof}

\begin{theorem}[Synthetic data: a lower bound]		\label{thm: synthetic data lower}
  There exists an absolute constant $n_0$ such that the following holds. 
  Let $(T,\rho)$ be a compact metric space. 
  Assume that for some $t>0$ and and some integer $n>n_0$ we have
  $$
  \Npack(T, \rho, t) > 2n.
  $$
  Then, for any $c$-differentially private randomized algorithm $\AA$ 
  that takes true data $X=(X_1,\ldots,X_n) \in T^n$ as an input  
  and returns synthetic data $Y=(Y_1,\ldots,Y_m) \in T^m$ for some $m$ as an output,
  there exists input data $X$ such that 
  $$
  \E W_1(\mu_Y,\mu_X) > t/12,
  $$
  where $\mu_X$ and $\mu_Y$ denote the empirical measures on $X$ and $Y$.
\end{theorem}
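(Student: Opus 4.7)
My plan is to follow the blueprint of Theorem~\ref{thm: private measure lower}, with the input space being $T^n$ rather than $\MM(T)$. The bridge is Lemma~\ref{lem: MP vs DP}: a $c$-differentially private algorithm on $T^n$ is exactly a $c$-metrically private map with respect to Hamming distance $\rho_H$. Composing with the empirical-measure map $Y \mapsto \mu_Y$ turns $\AA$ into a randomized map from $T^n$ to $\MM(T)$ with the same privacy guarantee, and the accuracy becomes the $W_1$-error $W_1(\mu_{\AA(X)}, \mu_X)$. To produce enough well-separated target measures, I would feed the hypothesis $\Npack(T,\rho,t) > 2n$ into Lemma~\ref{lem: many different measures} and obtain tuples $X_1, \ldots, X_K \in T^n$ with $K \ge e^{c_0 n}$, where $c_0$ denotes the universal constant from that lemma, such that the empirical measures $\mu_{X_1}, \ldots, \mu_{X_K}$ are pairwise $t/3$-separated in $W_1$.

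The next step is the familiar packing-vs-privacy contradiction. Assume $\E W_1(\mu_{\AA(X)}, \mu_X) \le t/12$ for every $X \in T^n$; Markov's inequality then places $\mu_{\AA(X)}$ within $W_1$-distance $t/6$ of $\mu_X$ with probability at least $1/2$, and the corresponding balls $B_i$ around the $\mu_{X_i}$ are pairwise disjoint by $t/3$-separation.

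Fixing a reference tuple $X_1$ and transporting the lower bound through $c$-differential privacy, at a cost of at most $e^{c\rho_H(X_1, X_i)} \le e^{cn}$,
$$\Pr{\mu_{\AA(X_1)} \in B_i} \ge e^{-c\rho_H(X_1, X_i)}\Pr{\mu_{\AA(X_i)} \in B_i} \ge \tfrac{1}{2} e^{-cn}$$
for each $i$. Summing these disjoint events forces $K \le 2 e^{cn}$, which contradicts $K \ge e^{c_0 n}$ whenever $c < c_0$ and $n > n_0$ for $n_0$ chosen large enough in terms of $c_0 - c$.

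The main obstacle is the normalization trade-off between Hamming and TV: rescaling $\rho_H$ so that the input space has diameter $1$ multiplies the metric-privacy parameter by $n$, so one needs at least $e^{cn}$ empirical measures that are mutually $W_1$-separated, and Lemma~\ref{lem: many different measures} supplies exactly $e^{c_0 n}$. This is what forces the differential-privacy constant $c$ in the statement to be a sufficiently small absolute constant and determines the threshold $n_0$. Conceptually the whole argument is a direct application of Proposition~\ref{prop: synthetic lower} with $\MM_0 = T^n$ embedded in $\MM_1 = \MM(T)$ via $X \mapsto \mu_X$, $\rho_0 = \rho_H/n$, $\rho_1 = W_1$, and $\alpha = cn$; no new estimates beyond those already used for Theorem~\ref{thm: private measure lower} should be required.
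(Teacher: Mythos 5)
Your proposal matches the paper's own proof almost line for line: both invoke Lemma~\ref{lem: many different measures} to produce $e^{c_0 n}$ tuples in $T^n$ with $W_1$-separated empirical measures, convert $c$-differential privacy to $(cn)$-metric privacy in the normalized Hamming metric via Lemma~\ref{lem: MP vs DP}, and then run the packing-vs-privacy contradiction of Proposition~\ref{prop: synthetic lower} with $\MM_0 = T^n$, $\rho_1 = W_1(\mu_\cdot,\mu_\cdot)$, and $\alpha$ of order $n$. Your explicit remark that the constant $c$ in the statement must be taken smaller than the universal constant from Lemma~\ref{lem: many different measures}, with $n_0$ determined by the gap $c_0 - c$, is exactly the bookkeeping the paper carries out (somewhat more tersely) when it chooses $\alpha = c_1 n$.
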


\begin{proof}
First note that a version of Proposition~\ref{prop: entropy of space of measures} holds for empirical measures. Namely, denote the set of all empirical measures on $n$ points of $T$ by 
$\MM_n(T)$. If $\Npack(T,\rho,t) > 2n$ then we claim that
\begin{equation}	\label{eq: N large}
\Npack \left( \MM_n(T), W_1, t/3 \right) > 2e^{c_1n}.
\end{equation}
To see this, let $\NN \subset T$ be a $t$-separated subset of cardinality
$\abs{\NN} > 2n$.
Lemma~\ref{lem: many different measures} implies the existence of 
a set of at least $e^{cn} \ge 2e^{c_1n}$ members of $\MM_n(T)$
that is $(t/3)$-separated in the Wasserstein distance. The claim \eqref{eq: N large} follows.

In preparation to apply Proposition~\ref{prop: synthetic lower}, consider the sets $\MM_0 \coloneqq T^n$ and $\MM_1 \coloneqq \cup_{k=1}^\infty T^k$. Consider
the normalized Hamming metric 
$$
\rho_0(X,X') = \frac{1}{n} \abs{\{i \in [n]:\; X_i \ne X'_i\}}
$$
on $\MM_0$, and the Wasserstein metric 
$$
\rho_1(X,X') = W_1(\mu_X,\mu_{X'})
$$
on $\MM_1$. 
Then we clearly have $\diam(\MM_0,\rho_0) \le 1$, and \eqref{eq: N large} is equivalent to
$\Npack(\MM_0,\rho_1,t/3) > 2e^{c_1 n}$.

If $\AA: \MM_0 \to \MM_1$ is a $c$-differentially private algorithm, then $\AA$ is $(cn)$-metrically private in the metric $\rho_0$ due to Lemma~\ref{lem: MP vs DP}.
Applying Proposition~\ref{prop: synthetic lower} with $t/3$ instead of $t$ and $\alpha = c_1n$, we obtain the required conclusion. 
\end{proof}

\section{Examples and asymptotics}\label{s:examples}

\subsection{A private measure on the unit cube}		\label{s: cube}

Let us work out the bound of Theorem~\ref{thm: private metric} for a concrete example: 
the $d$-dimensional unit cube equipped with the $\ell^\infty$ metric, 
i.e. $(T,\rho) = ([0,1]^d, \, \norm{\cdot}_\infty)$. The covering numbers satisfy 
$$
N(T,\norm{\cdot}_\infty,x) \le (1/x)^d, \quad x>0,
$$
since the set $x \Z^d \cap [0,1)^d$ forms an $x$-net of $T$.
Thus the accuracy is 
$$
\E W_1(\nu,\mu)
\lesssim \d + \frac{\log^{\frac{3}{2}} (1/\d)}{\a} \int_\d^1 (1/x)^d \, dx 
\lesssim \d + \frac{\log^{\frac{3}{2}} (1/\d)}{\a} \cdot (1/\d)^{d-1}
$$
if $d \ge 2$. Optimizing in $\d$ yields
$$
\E W_1(\nu,\mu) 
\lesssim \Big( \frac{\log^{\frac{3}{2}} \a}{\a} \Big)^{1/d},
$$
which wonderfully extends Theorem~\ref{thm: pm on interval} for $d=1$.
Combining the two results, for $d=1$ and $d \ge 2$, we obtain 
the following general result:

\begin{corollary}[Private measure on the cube]		\label{cor: PM cube}
  Let $d \in \mathbb{N}$ and $\a \ge 2$.
  There exists a randomized algorithm $\AA$ that takes a
  probability measure $\mu$ on $[0,1]^d$ as an input 
  and returns a finitely-supported probability measure $\nu$ on $[0,1]^d$
  as an output, and with the following two properties. 
  \begin{enumerate}[(i)]
    \item (Privacy): the algorithm $\AA$ is $\a$-metrically private in the TV metric. 
    \item (Accuracy): for any input measure $\mu$, 
    the expected accuracy of the output measure $\nu$ in the 
    Wasserstein distance is
      $$
      \E W_1(\nu,\mu) \le C \Big( \frac{\log^{\frac{3}{2}} \a}{\a} \Big)^{1/d}.
      $$
  \end{enumerate}
\end{corollary}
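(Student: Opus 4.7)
The plan is to apply Theorem~\ref{thm: private metric} directly to the metric space $T=[0,1]^{d}$ equipped with the $\ell^{\infty}$ metric, and then optimize the parameter $\delta$ in the resulting accuracy bound. The first step is a straightforward covering estimate: since the grid $x\mathbb{Z}^{d}\cap[0,1)^{d}$ forms an $x$-net of $T$ in the $\ell^{\infty}$ metric, we get $N(T,\|\cdot\|_{\infty},x)\le (1/x)^{d}$ for all $x\in(0,1]$. The diameter of $T$ in the $\ell^{\infty}$ metric is $1$.

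For $d\ge 2$, substituting into Theorem~\ref{thm: private metric} gives
\[
\mathbb{E} W_{1}(\nu,\mu)\le 2\delta+\frac{C}{\alpha}\log^{\frac{3}{2}}\!\bigl((1/\delta)^{d}\bigr)\int_{\delta}^{1}x^{-d}\,dx\lesssim \delta+\frac{\log^{\frac{3}{2}}(1/\delta)}{\alpha}\,\delta^{-(d-1)},
\]
where I used $\log^{3/2}((1/\delta)^d)=d^{3/2}\log^{3/2}(1/\delta)\lesssim \log^{3/2}(1/\delta)$ for fixed $d$, and the integral evaluates to $\tfrac{1}{d-1}(\delta^{-(d-1)}-1)\lesssim \delta^{-(d-1)}$. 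The two terms balance when $\delta^{d}\sim \log^{3/2}(1/\delta)/\alpha$, which (up to constants absorbed into $C$) suggests choosing $\delta=(\log^{3/2}\alpha/\alpha)^{1/d}$. With this choice one verifies $\log(1/\delta)\lesssim \log \alpha$, so the bound collapses to
\[
\mathbb{E} W_{1}(\nu,\mu)\le C\Bigl(\frac{\log^{\frac{3}{2}}\alpha}{\alpha}\Bigr)^{1/d},
\]
as claimed. Privacy is inherited verbatim from Theorem~\ref{thm: private metric}.

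For $d=1$, the integral $\int_{\delta}^{1}x^{-1}\,dx=\log(1/\delta)$ produces only a logarithmic factor, so the same optimization would give a weaker rate than needed. Instead, one invokes Theorem~\ref{thm: pm on interval} directly on $[0,1]$, which yields $\mathbb{E} W_{1}(\nu,\mu)\le C\log^{3/2}(\alpha)/\alpha$, matching the claimed bound with $d=1$ (in fact better by a $\log^{1/2}$ factor, but still within the stated estimate).

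I do not anticipate any real obstacle here: covering numbers of the cube are standard, the integral is elementary, and the $\delta$-optimization is a one-line calculus exercise. The only mild subtlety is the mismatch in the structure of the integral between $d=1$ and $d\ge 2$, which forces one to treat the $d=1$ case by a separate appeal to Theorem~\ref{thm: pm on interval} rather than by the general formula. One should also check that the condition $\alpha\ge 2$ ensures $\delta\le 1$ (so that the $\delta$-quantization is meaningful and $\log(1/\delta)$ is positive); this is immediate for the chosen $\delta$.
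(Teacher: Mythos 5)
Your argument is essentially identical to the paper's: same covering estimate $N([0,1]^d,\|\cdot\|_\infty,x)\le(1/x)^d$, same substitution into Theorem~\ref{thm: private metric} with $\delta$-optimization for $d\ge2$, and same separate appeal to Theorem~\ref{thm: pm on interval} for $d=1$. One tiny slip: for $d=1$ the bound from Theorem~\ref{thm: pm on interval} is $C\log^{3/2}(\alpha)/\alpha$, which matches the corollary's claim exactly rather than beating it by a $\log^{1/2}$ factor; this does not affect the validity of the argument.
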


Similarly, by invoking Corollary~\ref{cor: private synthetic data}, we obtain 
$\e$-differential privacy for synthetic data:

\begin{corollary}[Private synthetic data in the cube]			\label{cor: private synthetic cube}
  Let $d,n \in \mathbb{N}$ and $\e>0$.
  There exists a randomized algorithm $\AA$
  that takes true data $X=(X_1,\ldots,X_n) \in ([0,1]^d)^n$ as an input 
  and returns synthetic data $Y = (Y_1,\ldots,Y_m) \in ([0,1]^d)^m$ for some $m$ as an output, 
  and with the following two properties. 
  \begin{enumerate}[(i)]
    \item (Privacy): the algorithm $\AA$ is $\e$-differentially private.
    \item (Accuracy): for any true data $X$, 
    the expected accuracy of the synthetic data $Y$ is
	$$
	\E W_1 \left( \mu_Y, \mu_X \right) 
	\le C \Big( \frac{\log^{\frac{3}{2}} (\e n)}{\e n} \Big)^{1/d},
	$$
	where $\mu_X$ and $\mu_Y$ denote the corresponding empirical measures.  \end{enumerate}
\end{corollary}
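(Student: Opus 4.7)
The plan is to deduce Corollary~\ref{cor: private synthetic cube} from Corollary~\ref{cor: private synthetic data} applied to the cube $(T,\rho) = ([0,1]^d, \|\cdot\|_\infty)$, then optimize over the quantization scale $\delta$. Note that the $\e$-differential privacy of the resulting algorithm is automatic from Corollary~\ref{cor: private synthetic data}; only the accuracy bound requires work.

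First I would instantiate Corollary~\ref{cor: private synthetic data} for $T = [0,1]^d$ with the $\ell^\infty$ metric. Using the elementary covering bound $N([0,1]^d,\|\cdot\|_\infty,x) \le (2/x)^d$ for $x \le 1$ together with $\diam(T)=1$, the accuracy estimate becomes
$$\E W_1(\mu_Y,\mu_X) \;\le\; 3\delta + \frac{C_d}{\e n}\,\log^{3/2}\!\bigl(1/\delta\bigr)\int_\delta^{1} x^{-d}\,dx,$$
valid for every $\delta \in (0,1)$, where I absorbed the factor $d^{3/2}$ into $C_d$.

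For $d \ge 2$, the integral evaluates to $(\delta^{1-d}-1)/(d-1) \lesssim_d \delta^{1-d}$, so the bound reads
$$\E W_1(\mu_Y,\mu_X) \;\lesssim_d\; \delta + \frac{\log^{3/2}(1/\delta)}{\e n\,\delta^{d-1}}.$$
Balancing the two terms, i.e.\ picking $\delta \asymp (\log^{3/2}(\e n)/(\e n))^{1/d}$ (this choice is self-consistent because $\log(1/\delta) \asymp \log(\e n)$ at that scale), yields the claimed rate.

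For the edge case $d=1$, the integral produces an additional factor $\log(1/\delta)$ that would inflate the logarithmic exponent from $3/2$ to $5/2$ in the final bound. To recover the sharper $\log^{3/2}(\e n)/(\e n)$ rate, I would bypass Corollary~\ref{cor: private synthetic data} and instead apply Theorem~\ref{thm: pm on interval} directly with $\alpha = \e n$, postprocess its finitely-supported output measure into an empirical measure via the weight-quantization construction from the proof of Corollary~\ref{cor: empirical measure}, and finally invoke Lemma~\ref{lem: PM to SD} to convert $\alpha$-metric privacy with $\alpha = \e n$ into $\e$-differential privacy.

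\textbf{Main obstacle.} There is no genuine obstacle here: once Corollary~\ref{cor: private synthetic data} is in hand, this is a plug-and-optimize calculation. The only subtlety is that the integral bound leaks one logarithm in the case $d=1$, forcing a direct appeal to the tighter one-dimensional result Theorem~\ref{thm: pm on interval}.
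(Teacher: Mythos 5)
Your proposal is correct and follows essentially the same route as the paper: instantiate Corollary~\ref{cor: private synthetic data} on $([0,1]^d,\|\cdot\|_\infty)$, bound the covering integral, and balance the quantization scale $\delta$ against the noise term, with a separate appeal to the one-dimensional result Theorem~\ref{thm: pm on interval} for $d=1$ (the paper handles $d=1$ the same way, via ``combining the two results''). Your explicit observation that the $d=1$ case needs the direct one-dimensional argument to avoid picking up an extra logarithm is exactly the point the paper glosses over in its one-line derivation, and your route through the weight-quantization of Corollary~\ref{cor: empirical measure} plus Lemma~\ref{lem: PM to SD} is precisely how the paper's machinery converts the private measure on $[0,1]$ into private synthetic data.
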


The two results above are nearly sharp in the setting when $\e$ is a constant
function of $n$ and $n \to \infty$. 
 Indeed, let us work out the lower bound 
for the cube, using Theorem~\ref{thm: private measure lower}.
The covering numbers satisfy 
$$
\Npack(T,\norm{\cdot}_\infty,x) \ge (c/x)^d, \quad x>0,
$$
which again can be seen by considering a rescaled integer grid. 
Setting $t = c/(2C\a)^{1/d}$ we get $N(T,\norm{\cdot}_\infty,t) > C\a$. Hence 
$$
\E W_1(\nu,\mu) > t/12 \gtrsim (1/\alpha)^{1/d},
$$
which matches the upper bound in Corollary~\ref{cor: PM cube} up to a logarithmic factor. 
Let us record this result. 

\begin{corollary}[Private measure on the cube: a lower bound]		\label{cor: PM cube lower}
  Let $d \in \mathbb{N}$ and $\a \ge 2$.
  Then, for any randomized algorithm $\AA$ that takes a
  probability measure $\mu$ on $[0,1]^d$ as an input 
  and returns a probability measure $\nu$ on $[0,1]^d$
  as an output, and that is $\alpha$-metrically private with respect to the TV metric,
  there exists $\mu$ such that 
  $$
  \E W_1(\nu,\mu) > c \Big(\frac{1}{\alpha} \Big)^{1/d}.
  $$
\end{corollary}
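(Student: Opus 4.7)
The plan is to deduce this corollary directly from the master lower bound Theorem~\ref{thm: private measure lower} applied to $(T,\rho)=([0,1]^{d}, \norm{\cdot}_{\infty})$. The only ingredient we need to supply is a sufficiently large packing number for the cube; the rest is just choosing the scale $t$ correctly.

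First I would establish the packing lower bound
$$
\Npack([0,1]^{d}, \norm{\cdot}_{\infty}, x) \ge (c/x)^{d} \quad \text{for all } 0 < x \le 1,
$$
by exhibiting a rescaled integer grid. Concretely, consider the grid $2x\,\Z^{d} \cap [0,1]^{d}$. Any two distinct points of this grid differ by at least $2x$ in some coordinate, so they are $2x$-separated and in particular $x$-separated in $\norm{\cdot}_{\infty}$. Counting along each coordinate direction gives at least $\lfloor 1/(2x)\rfloor + 1 \ge c/x$ points per dimension, hence at least $(c/x)^{d}$ points in total. This is the standard packing estimate, and it is the only geometric fact about the cube that enters.

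Next I would choose the scale. Let $C$ be the absolute constant appearing in the hypothesis of Theorem~\ref{thm: private measure lower}. Pick
$$
t = \frac{c}{(2C\a)^{1/d}},
$$
so that the packing bound above yields $\Npack([0,1]^{d}, \norm{\cdot}_{\infty}, t) \ge (c/t)^{d} = 2C\a > C\a$. Since $\a \ge 1$ (indeed $\a \ge 2$ by hypothesis), Theorem~\ref{thm: private measure lower} applies to the $\alpha$-metrically private algorithm $\AA$ in the TV metric, and produces an input probability measure $\mu$ on $[0,1]^{d}$ for which $\E W_{1}(\nu,\mu) > t/12$.

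Finally I would substitute $t$ to obtain
$$
\E W_{1}(\nu,\mu) > \frac{1}{12} \cdot \frac{c}{(2C\a)^{1/d}} \ge c'\, \a^{-1/d},
$$
with $c' = c/(12 (2C)^{1/d})$, which is a positive constant depending only on $d$. Renaming $c'$ to $c$ gives the statement of the corollary. I do not foresee any real obstacle here: the work has all been done in Theorem~\ref{thm: private measure lower}, and the only thing to verify is the elementary packing estimate and the tuning of $t$ so that the hypothesis of that theorem is met.
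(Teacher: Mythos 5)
Your proposal is correct and follows essentially the same route as the paper: both establish the packing lower bound $\Npack([0,1]^d,\norm{\cdot}_\infty,x)\ge (c/x)^d$ via a rescaled integer grid, make the same choice $t = c/(2C\a)^{1/d}$, and invoke Theorem~\ref{thm: private measure lower}. (One small remark: since $(2C)^{1/d}$ is bounded between $1$ and $2C$ for all $d\ge 1$, the final constant $c'$ can in fact be taken absolute rather than $d$-dependent, but this does not affect the validity of the argument.)
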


In a similar way, by invoking the lower bound in Theorem~\ref{thm: synthetic data lower},
we obtain the following nearly matching lower bound for Corollary~\ref{cor: private synthetic cube}:

\begin{corollary}[Private synthetic data in the cube: a lower bound]		\label{cor: synth data cube lower}
  Let $d,n \in \mathbb{N}$.
  Then, for any $c$-differentially private randomized algorithm $\AA$
  that takes true data $X=(X_1,\ldots,X_n) \in ([0,1]^d)^n$ as an input 
  and returns synthetic data $Y = (Y_1,\ldots,Y_m) \in ([0,1]^d)^m$ for some $m$ as an output, 
  there exists input data $X$ such that 
  $$
  \E W_1(\nu_Y,\mu_X) > c \Big(\frac{1}{n} \Big)^{1/d}.
  $$
  where $\mu_X$ and $\mu_Y$ denotes the empirical measures on $X$ and $Y$.
\end{corollary}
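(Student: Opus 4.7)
The plan is to deduce Corollary~\ref{cor: synth data cube lower} directly from the general lower bound of Theorem~\ref{thm: synthetic data lower} by estimating the packing numbers of the cube $T=[0,1]^{d}$ equipped with the $\ell^{\infty}$ metric. Concretely, I would follow the same scheme that produced Corollary~\ref{cor: PM cube lower} from Theorem~\ref{thm: private measure lower}, but now with the packing threshold tuned to $n$ rather than to $\alpha$.

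First I would record the standard packing lower bound $\Npack([0,1]^{d},\norm{\cdot}_{\infty},x)\ge (c_{0}/x)^{d}$ for all $x\in(0,1)$, obtained by considering the rescaled grid $(2x)\Z^{d}\cap[0,1)^{d}$, whose points are pairwise $\ell^{\infty}$-separated by $2x$ and whose cardinality is $\lfloor 1/(2x)\rfloor^{d}$. Then I would pick
\[
t = \frac{c_{0}}{(2n+1)^{1/d}}\;\asymp\; n^{-1/d},
\]
so that $\Npack(T,\norm{\cdot}_{\infty},t)\ge(c_{0}/t)^{d}=2n+1>2n$, meeting the hypothesis of Theorem~\ref{thm: synthetic data lower} provided $n>n_{0}$. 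Applying that theorem to the given $c$-differentially private algorithm $\AA$, I obtain input data $X\in([0,1]^{d})^{n}$ such that
\[
\E W_{1}(\mu_{Y},\mu_{X}) > t/12 \ge c\,n^{-1/d},
\]
for a suitable absolute constant $c>0$, which is exactly the desired bound.

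For the small range $n\le n_{0}$ one can either absorb those cases into the constant $c$ (the bound $c\,n^{-1/d}$ becomes trivial for a sufficiently small $c$ since $W_{1}$ in the unit cube is at most $\sqrt{d}$, but we can simply take $c$ small enough that $c\,n_{0}^{-1/d}$ is less than any achievable accuracy) or directly exhibit two point masses in $[0,1]^{d}$ at distance $1$ to force a constant gap, which again is covered by shrinking $c$. The main, and essentially only, obstacle is correctly tracking the constants in the packing bound and in Theorem~\ref{thm: synthetic data lower}, so that the chosen $t$ simultaneously satisfies $\Npack(T,\norm{\cdot}_{\infty},t)>2n$ and yields $t/12\ge c\,n^{-1/d}$; this is routine given the explicit form of the $\ell^{\infty}$ packing numbers of the cube.
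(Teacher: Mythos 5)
Your proposal is correct and follows the same approach the paper intends: invoke the packing-number lower bound $\Npack([0,1]^d,\norm{\cdot}_\infty,x)\gtrsim (1/x)^d$, tune $t\asymp n^{-1/d}$ so that $\Npack(T,\rho,t)>2n$, and apply Theorem~\ref{thm: synthetic data lower} to get $\E W_1(\mu_Y,\mu_X)>t/12\gtrsim n^{-1/d}$; your treatment of the small-$n$ regime by shrinking the absolute constant is a reasonable way to handle the $n>n_0$ hypothesis, which the paper glosses over.
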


\begin{remark}[Low dimensions]
  As we can see, the accuracy bound $n^{-1/d}$ gets worse with increasing dimension $d$,
  and becomes constant for $d \gg \log n$. Thus, results like Corollary~\ref{cor: private    synthetic cube} are only useful for low dimensions. This should not come as a surprise. As we know from the previously mentioned no-go result by Ullman and Vadhan~\cite{ullman2011pcps}, it is computationally not feasible to construct private synthetic data in high dimensions that accurately preserves even two-way marginals, let alone all Lipschitz queries (which is what Wasserstein metric does).
\end{remark}

\subsection{Asymptotic result}

The only property of the cube $T = [0,1]^d$ we used in the previous section is the behavior on its covering numbers,\footnote{The lower bound used packing numbers, but they are equivalent to covering numbers due to \eqref{eq: covering vs packing}.} namely that 
\begin{equation}	\label{eq: covering asymptotic}
N(T,\rho,x) \asymp (1/x)^{-d}, \quad x>0.
\end{equation}
Therefore, the same results on private measures and synthetic data 
hold for any compact metric space $(T,\rho)$ whose covering numbers behave this way. 
In particular, it follows that any probability measure $\mu$ on $T$ can be transformed into a $\alpha$-metrically private measure $\nu$ on $T$, with accuracy
\begin{equation}	\label{eq: private measure asymptotics}
\E W(\nu, \mu) \asymp (1/\a)^{1/d}.
\end{equation}
(ignoring logarithmic factors), and this result is nearly sharp. Similarly, any true data $X \in T^n$ can be transformed into $\e$-differentially private synthetic data $Y \in T^m$ for some $m$, with accuracy 
\begin{equation}	\label{eq: synthetic data asymptotics}
\E W(\mu_Y, \mu_X) \asymp (1/n)^{1/d}.
\end{equation}
(ignoring logarithmic factors and dependence on $\e$), and this result is nearly sharp. 

These intuitive observations can be formalized using the notion of {\em Minkowski dimension}. By definition, the metric space $(T,\rho)$ has Minkowski dimension $d$ if 
$$
\lim_{x \to 0} \frac{\log N(T,\rho,x)}{\log(1/x)} = d.
$$
The following two asymptotic results combine upper and lower bounds, and 
essentially show that \eqref{eq: private measure asymptotics} and \eqref{eq: synthetic data asymptotics} hold in any space of dimension $d$.

\begin{theorem}[Private measure, asymptotically]		\label{thm: asymp PM}
  Let $(T,\rho)$ be a compact metric space of Minkowski dimension $d\ge1$. 
  Then
  $$
  \lim_{\a \to \infty} \inf_\AA \sup_\mu \frac{\log ( \E W_1(\AA(\mu), \mu) )}{\log \a} = -\frac{1}{d}.
  $$
  Here the infimum is over randomized algorithms $\AA$ that input and output a probability measure on $T$ and are $\alpha$-metrically private with respect to the TV metric;
  the supremum is over all probability measures $\mu$ on $T$.
\end{theorem}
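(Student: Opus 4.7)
The plan is to combine the upper bound from Theorem~\ref{thm: private metric} with the lower bound from Theorem~\ref{thm: private measure lower}, extracting the matching asymptotic exponent $-1/d$ from the Minkowski dimension hypothesis. Since the quantity $\inf_{\AA}\sup_\mu \log(\E W_1)/\log\alpha$ is sandwiched between its $\limsup$ and $\liminf$, it suffices to prove $\limsup \leq -1/d$ and $\liminf \geq -1/d$ separately, each with a single $\varepsilon$-slack parameter that we then send to zero.

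For the upper half, I would fix $\varepsilon>0$ and invoke the Minkowski dimension assumption in the form $N(T,\rho,x) \leq x^{-(d+\varepsilon)}$ for $x$ below some threshold $x_0(\varepsilon)$, with the complementary range $[x_0,\diam(T)]$ contributing only an $\varepsilon$-dependent constant. Since $d\geq 1$ forces $d+\varepsilon>1$, this gives $\int_\delta^{\diam(T)} N(T,\rho,x)\,dx \lesssim_\varepsilon \delta^{-(d+\varepsilon-1)}$ for small $\delta$, and $\log^{3/2}N(T,\rho,\delta)\lesssim_\varepsilon \log^{3/2}(1/\delta)$. Plugging into Theorem~\ref{thm: private metric} produces the schematic bound
\[
\E W_1(\AA(\mu),\mu) \;\lesssim_\varepsilon\; \delta \;+\; \alpha^{-1}\,\delta^{-(d+\varepsilon-1)}\,\log^{3/2}(1/\delta).
\]
Balancing the two terms yields $\delta \asymp \alpha^{-1/(d+\varepsilon)}$ up to polylog factors, hence $\E W_1 \lesssim_\varepsilon \alpha^{-1/(d+\varepsilon)}\log^{O(1)}\alpha$. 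Dividing by $\log\alpha$ and sending $\alpha\to\infty$ gives $\limsup \leq -1/(d+\varepsilon)$, and letting $\varepsilon\to 0$ completes this side.

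For the lower half, I would fix $\varepsilon \in (0,d)$, which is feasible since $d\geq 1$. The Minkowski assumption together with $\Npack(T,\rho,t)\geq N(T,\rho,t)$ from \eqref{eq: covering vs packing} yields $\Npack(T,\rho,t)\geq t^{-(d-\varepsilon)}$ for sufficiently small $t$. Setting $t(\alpha) = (C'\alpha)^{-1/(d-\varepsilon)}$ with $C'$ slightly larger than the constant $C$ in Theorem~\ref{thm: private measure lower} makes the packing hypothesis $\Npack(T,\rho,t)>C\alpha$ hold for all $\alpha$ large enough. The theorem then produces, for every $\alpha$-metrically private $\AA$, an input $\mu$ with $\E W_1(\AA(\mu),\mu) > t(\alpha)/12 \asymp \alpha^{-1/(d-\varepsilon)}$. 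Therefore $\liminf \geq -1/(d-\varepsilon)$, and $\varepsilon\to 0$ finishes this side.

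The proof is essentially bookkeeping once those two black boxes are in place, so no step is genuinely difficult; the only subtleties are tracking $\varepsilon$-dependent constants (harmless under $(\cdot)/\log\alpha$ with $\alpha\to\infty$), ensuring the covering-number integral converges near zero (guaranteed by $d+\varepsilon>1$, which follows from $d\geq 1$), and keeping $\varepsilon<d$ so that the packing exponent $d-\varepsilon$ remains positive in the lower bound. The main conceptual work lies upstream, inside the construction behind Theorem~\ref{thm: private metric} (superregular random walk, folding via TSP); here one merely harvests the matching exponents.
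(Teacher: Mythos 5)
Your proposal is correct and follows essentially the same route as the paper's proof: fix $\varepsilon>0$, upper-bound the covering integral by $\delta^{-(d+\varepsilon-1)}$ up to a $\delta$-independent constant and balance against $\delta$ in Theorem~\ref{thm: private metric}, then lower-bound via $\Npack(T,\rho,t)\gtrsim t^{-(d-\varepsilon)}$ together with Theorem~\ref{thm: private measure lower}, and send $\varepsilon\to 0$ on both sides. The only cosmetic differences are that the paper normalizes $\diam(T)=1$ by rescaling and absorbs the $I(\delta_0)$ tail into the main term via a threshold $\delta_1$, which is exactly what you mean by the tail contributing an $\varepsilon$-dependent constant.
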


\begin{proof}
We deduce the upper bound from Theorem~\ref{thm: private metric} and the lower bound from 
Theorem~\ref{thm: private measure lower}.

\medskip

{\em Upper bound.} 
By rescaling, we can assume without loss of generality that $\diam(T,\rho) = 1$.
Fix any $\e >0$. By definition of Minkowski dimension, there exists $\d_0>0$
such that 
\begin{equation}	\label{eq: Minkowski unlimit}
N(T,\rho,x) \le (1/x)^{d+\e}
\quad \text{for all } x \in (0,\d_0).
\end{equation}
Then
$$
\int_\d^1 N(T,\rho,x) \, dx 
\le \int_\d^{\d_0} (1/x)^{d+\e} \, dx + \int_{\d_0}^1 N(T,\rho,x) \, dx
\le K(1/\d)^{d+\e-1} + I(\d_0)
$$
where $K = 1/(d+\e-1)$ and $I(\d_0) = \int_{\d_0}^1 N(T,\rho,x) \, dx$.
The last step follows if we replace $\d_0$ by infinity and compute the integral.

If we let $\d \downarrow 0$, we see that $K(1/\d)^{d+\e-1} \to \infty$ while $I(\d_0)$ stays the same since it does not depend on $\d$. Therefore, there exists $\d_1>0$ such that $I(\d_0) \le K(1/\d)^{d+\e-1}$ for all $\d \in (0,\d_1)$. 
Therefore,
$$
\int_\d^1 N(T,\rho,x) \, dx \le 2K(1/\d)^{d+\e-1}
\quad \text{for all } \d \in (0,\min(\d_0,\d_1)).
$$
Applying Theorem~\ref{thm: private metric} for such $\d$ and using \eqref{eq: Minkowski unlimit}, we get 
\begin{equation}	\label{eq: balanced terms}
\inf_\AA \sup_\mu \E W_1(\nu,\mu) 
\le 2\d + \frac{C}{\a} \log^{\frac{3}{2}} \left( (1/\d)^{d+\e} \right) \cdot 2K(1/\d)^{d+\e-1}.
\end{equation}
Optimizing in $\d$, we find that a good choice is 
$$
\d = \d(\a) = \left( \frac{\log^{\frac{3}{2}}(K\a)}{K\a} \right)^{\frac{1}{d+\e}}.
$$
For any sufficiently large $\a$, we have $\d < \min(\d_0,\d_1)$ as required, 
and substituting $\d = \d(\a)$ into the bound in \eqref{eq: balanced terms} we get after simplification:
$$
\inf_\AA \sup_\mu \E W_1(\nu,\mu) 
\le (1+2CK)\d(\a).
$$

Furthermore, recalling that $K$ does not depend on $\alpha$, it is clear that 
$$
\lim_{\a \to \infty} \frac{\log \left( (1+2CK)\d(\a) \right)}{\log \a} 
= -\frac{1}{d+\e}.
$$
Thus
$$
\limsup_{\a \to \infty} \frac{\log( \inf_\AA \sup_\mu \E W_1(\nu,\mu) )}{\log \a} 
\le -\frac{1}{d+\e}.
$$
Since $\e>0$ is arbitrary, it follows that 
\begin{equation}	\label{eq: upper limit}
\limsup_{\a \to \infty} \inf_\AA \sup_\mu \frac{\log ( \E W_1(\AA(\mu), \mu) )}{\log \a} 
\le -\frac{1}{d}.
\end{equation}

\medskip

{\em Lower bound.} 
Fix any $\e>0$. By definition of Minkowski dimension and the equivalence \eqref{eq: covering vs packing}, there exists $\d_0>0$
such that 
$$
\Npack(T,\rho,x) \ge N(T,\rho,x) > (1/x)^{d-\e}
\quad \text{for all } x \in (0,\d_0).
$$
Set
$$
x(\a) = \left( \frac{1}{C\a} \right)^{\frac{1}{d-\e}}.
$$
Then, for any sufficiently large $\a$, we have $x \in (0,\d_0)$ and 
$$
\Npack(T,\rho,x(\a)) > C\a.
$$
Applying Theorem~\ref{thm: private measure lower}, we get 
$$
\inf_\AA \sup_\mu \E W_1(\nu,\mu) 
\ge x(\a)/20.
$$
It is easy to check that 
$$
\lim_{\a \to \infty} \frac{\log \left( x(\a)/20 \right)}{\log \a}
= -\frac{1}{d-\e}.
$$
Thus
$$
\liminf_{\a \to \infty} \frac{\log( \inf_\AA \sup_\mu \E W_1(\nu,\mu) )}{\log \a} 
\ge -\frac{1}{d-\e}.
$$
Since $\e>0$ is arbitrary, it follows that 
$$
\liminf_{\a \to \infty} \inf_\AA \sup_\mu \frac{\log ( \E W_1(\AA(\mu), \mu) )}{\log \a} 
\ge -\frac{1}{d}.
$$
Combining with the upper bound \eqref{eq: upper limit}, we complete the proof.
\end{proof}

In a similar way, we can deduce the following asymptotic result for private synthetic data. 
The argument is analogous; the upper bound follows from Corollary~\ref{cor: private synthetic data} and the lower bound from Theorem~\ref{thm: synthetic data lower}.

\begin{theorem}		\label{thm: asymp synth data}
  Let $(T,\rho)$ be a compact metric space of Minkowski dimension $d\ge1$. 
  Then, for every $\e \in (0,c)$, we have
  \begin{equation}\label{minkovskibound}
  \lim_{n \to \infty} \inf_\AA \sup_X \frac{\log ( \E W_1(\mu_Y, \mu_X) )}{\log n} = -\frac{1}{d}.
  \end{equation}
  Here the infimum is over $\e$-differentially private randomized algorithms $\AA$ 
  that take true data $X=(X_1,\ldots,X_n) \in T^n$ as an input 
  and return synthetic data $Y = \AA(X) = (Y_1,\ldots,Y_m) \in T^m$ for some $m$ as an output.
\end{theorem}

\begin{remark}[Low-dimensional data in high dimensions?]
  This and other results proved here show that
  the accuracy of private synthetic data must deteriorate quickly 
  as the data dimension $d$ increases. 
  But does this mean that the proposed method is useless for any high-dimensional data? 
  In practice, this is not necessarily the case. Real-world high-dimensional data often live in (or near) a low-dimensional smooth manifold. Since a smooth manifold is metrizable and a smooth $d$-dimensional manifold in $\R^n$ has Minkowski dimension $d$, our framework may still apply to the standard setting of high-dimensional statistics, where the data lives in high dimension but its intrinsic geometry is low-dimensional.
Thus, the generalization via Minkowski dimension presented in this section may not only be appealing from a theoretical viewpoint, but carries a practical potential, which we hope to pursue in our future work.
\end{remark}

\section*{Acknowledgement}

M.B.\ acknowledges support from NSF DMS-2140592. T.S.\ acknowledges support from  NSF DMS-2027248, NSF CCF-1934568, NIH R01HL16351, and a CeDAR Seed grant.
 R.V.\ acknowledges support from NSF DMS-1954233, NSF DMS-2027299, U.S.~Army 76649-CS, and NSF+Simons Research Collaborations on the Mathematical and Scientific Foundations of Deep Learning.


\end{document}